\journal{Journal of \LaTeX\ Templates}
\newtheorem{theorem}{Theorem}[section]
\newtheorem{define}[theorem]{Definition}
\newtheorem{remark}[theorem]{Remark}
\newtheorem{prop}[theorem]{Proposition}
\newtheorem{corollary}[theorem]{Corollary}
\newtheorem{lemma}[theorem]{Lemma}
\newtheorem{notation}[theorem]{Notation}
\newcommand{\C} {\mathbb{C}}
\newcommand{\Q} {\mathbb{Q}}
\newcommand{\Z} {\mathbb{Z}}
\newcommand{\bP} {\mathbb{P}}
\newcommand{\tdeg}{{\rm tdeg}}
\newcommand{\bfm}{{\bf m}}
\newcommand{\bfn}{{\bf n}}
\newcommand{\bfc}{{\bf c}}
\newcommand{\bfd}{{\bf d}}
\newcommand{\bfa}{{\bf a}}
\newcommand{\bfb}{{\bf b}}
\newcommand{\bfj}{{\bf j}}
\newcommand{\bfq}{{\bf q}}
\newcommand{\bfy}{{\bf y}}
\newcommand{\bfx}{{\bf x}}
\newcommand{\bfw}{{\bf w}}
\newcommand{\frakp}{{\mathfrak p}}
\newcommand{\calL}{{\mathcal L}}
\newcommand{\frakP}{{\mathfrak P}}
\newcommand{\frakh}{{\mathfrak h}}
\newcommand{\frakm}{{\mathfrak m}}
\newcommand{\VX}{{\vec X}}
\newcommand{\ord}{{\rm ord}}
\newcommand{\divs}{{\rm div}}
\newcommand{\supp}{{\rm supp}}
\newcommand{\rank}{{\rm rank}}
\newcommand{\res}{{\rm res}}
\newcommand{\overk}{{\overline{k(t)}}}
\begin{document}

\begin{frontmatter}

\title{Quasi-equivalence of heights in algebraic function fields of one variable}
\tnotetext[mytitlenote]{This work was supported by NSFC under Grants No.11771433 and No.11688101, by Beijing Natural Science Foundation (Z190004), by National Key Research and Development Project 2020YFA0712300, and by the Fundamental Research Funds for the Central Universities.}

\author{Ruyong Feng$^{b,a}$, Shuang Feng$^{a}$ and Li-Yong Shen$^{a}$}
\address{a. School of Mathematical Sciences, University of Chinese Academy of Sciences, 100049, Beijing, China\\
b. KLMM, Academy of Mathematics and Systems Science, Chinese Academy of Sciences, 100190, Beijing, China\\
ryfeng@amss.ac.cn, fengshuang@ucas.ac.cn, lyshen@ucas.ac.cn}

\begin{abstract}
For points $(a,b)$ on an algebraic curve over a field $K$ with height $\frakh$, the asymptotic relation between $\frakh(a)$ and $\frakh(b)$ has been extensively studied in diophantine geometry. When $K=\overk$ is the field of algebraic functions in $t$ over a field $k$ of characteristic zero, Eremenko in 1998 proved the following quasi-equivalence for an absolute logarithmic height $\frakh$ in $K$: Given $P\in K[X,Y]$ irreducible over $K$ and $\epsilon>0$, there is a constant $C$ only depending on $P$ and $\epsilon$ such that for each $(a,b)\in K^2$ with $P(a,b)=0$,
$$
   (1-\epsilon)\deg(P,Y)\frakh(b)-C\leq \deg(P,X)\frakh(a)\leq (1+\epsilon)\deg(P,Y)\frakh(b)+C.
$$
In this article, we shall give an explicit bound for the constant $C$ in terms of the total degree of $P$, the height of $P$ and $\epsilon$. This result is expected to have applications in some other areas such as symbolic computation of differential and difference equations.
\end{abstract}

\begin{keyword}
\texttt{height,\,algebraic curve,\,Riemann-Roch space}
\MSC[2020] 14Q05\sep  68W30
\end{keyword}

\end{frontmatter}

\section{Introduction}
In diophantine geometry, heights are often used to express the discreteness of algebraic points on an algebraic variety. They play an important role in diophantine geometry as well as other areas such as the theory of transcendence numbers. The study of the functorial property of heights can be tracked back to the date of Siegel, who gave the first asymptotic estimate of $\frakh(f(\bfc))$ in item of $\deg(f)$ when $\frakh(\bfc)$ is large enough, where $\bfc$ is a point on a projective algebraic curve and $f$ is a nonconstant rational function on this curve. Later, Siegel's result was improved by many authors (see for example N\'eron\cite{Neron}, Bombieri\cite{Bombieri}, Habegger\cite{Habegger-thesis}, Abouzaid\cite{Abouzaid} and Bartolome\cite{Bartolome}) who gave error terms to the asymptotic estimates. For instance, in \cite{Neron}, N\'eron proved the following quasi-equivalence of heights: Let $P\in \overline{\Q}[X,Y]$ be irreducible with $m=\deg(P,X)\geq 1$ and $n=\deg(P,Y)\geq 1$, then there is a constant $c(P)$ such that if $(a,b)\in \overline{\Q}^2$ with $P(a,b)=0$, the bound
$$
   \left|\frac{\frakh(a)}{n}-\frac{\frakh(b)}{m}\right|\leq c(P)\sqrt{\max\left\{\frac{\frakh(a)}{n},\frac{\frakh(b)}{m}\right\}}.
$$
An explicit estimate of the constant $c(P)$ is of particular interest in an effective version of Runge's theorem on the integer solutions of certain diophantine equations.
In \cite{Habegger-paper}, Habegger gave an explicit bound for the constant $c(P)$ and applied this bound to Runge's theorem. Other related height estimates may also be found in \cite{Abouzaid,Bartolome}.

The heights appearing in the above results are all defined in algebraic number fields. As to an absolute logarithmic height defined in function fields (see Section 3 of Chapter 3 in \cite{lang} for definition), Eremenko in 1998 proved quasi-equivalence of the following type.
\begin{prop}[Lemma 2 of \cite{eremenko}]
\label{prop:eremenko}
Let $P\in \overline{k(t)}[X,Y]$ be an irreducible polynomial of degree $m$ with respect to $X$ and of degree $n$ with respect to $Y$. Given $\epsilon>0$ there exists a constant $C$ depending on $P$ and $\epsilon$ such that for every $a,b\in \overk$ satisfying $P(a,b)=0$ we have
$$
   (1-\epsilon)n\frakh(b)-C\leq m\frakh(a)\leq (1+\epsilon)n\frakh(b)+C.
$$
\end{prop}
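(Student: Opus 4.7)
The plan is to reformulate the statement on a smooth projective model $\overline{C}$ of $\{P=0\}$ over $K = \overk$ and then invoke Weil's height machine. Any pair $(a,b) \in K^2$ with $P(a,b) = 0$ is a $K$-rational point $p \in \overline{C}(K)$, and the coordinate functions $x, y \in K(\overline{C})$ give morphisms $\overline{C} \to \bP^1_K$ of degrees $n$ and $m$ respectively (the generic fibre of $x$ over $\xi \in \bP^1_K$ is the zero-set of $P(\xi,Y)$, which has $n$ points, and similarly for $y$).

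Choose any divisor $D_0$ on $\overline{C}$ of degree $1$. Since $K$ carries a product formula (sum over places of a finite extension $L/k(t)$, normalised by $[L:k(t)]$), Weil's height machine yields a ``canonical'' height $h_C$ on $\overline{C}(K)$ associated to $D_0$, well-defined up to a bounded function. Functoriality of heights, together with $x^*\calO(1) \sim (x)_\infty$ of degree $n$ and $y^*\calO(1) \sim (y)_\infty$ of degree $m$, gives
$$
   \frakh(a) \;=\; n\,h_C(p) + O_P(1), \qquad \frakh(b) \;=\; m\,h_C(p) + O_P(1),
$$
and eliminating $h_C(p)$ yields $|m\,\frakh(a) - n\,\frakh(b)| \leq C(P)$, which implies the $\epsilon$-form of the proposition for any $\epsilon>0$ (in fact with $\epsilon = 0$, at the cost of a worse constant). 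So at the qualitative level the statement is a direct consequence of the height machinery on $\overline{C}$.

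The hard part, and the reason the $\epsilon$-form is preferred, is to make $O_P(1)$ genuinely explicit as a function of $P$ and $\epsilon$. For this I would fix $D_0 := (x)_\infty$ (of degree $n$), build a concrete Riemann--Roch basis on $\overline{C}$ out of monomials in $x$ and $y$, and carry out a place-by-place bookkeeping of local heights using the identity $P(x,y) = 0$. At ``good'' places of $K$ (where the coefficients of $P$ have no pole and the Newton polytope of $P$ is non-degenerate), the relation between $v(a)$ and $v(b)$ is dictated by the slopes of the Newton polygon of $P$; in particular the generic edge forces $m\,v(a) = n\,v(b)$, while the other edges and vertices contribute only to places that are in bijection with a controlled finite set of possibilities. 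The finitely many ``bad'' places contribute a term bounded in terms of $\deg(P)$ and the height of the coefficients of $P$. The slack $\epsilon>0$ then absorbs both the Riemann--Roch defect $g - 1$ (controlled via the genus--degree inequality $g \leq \binom{\deg P - 1}{2}$) and the dimension-counting error from the basis construction, avoiding the need for sharp pointwise bounds at every place and every edge of the Newton polygon.
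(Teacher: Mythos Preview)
Your first paragraph contains a genuine gap. Functoriality of Weil heights gives $\frakh(a)=h_{(x)_\infty}(p)+O(1)$ and $\frakh(b)=h_{(y)_\infty}(p)+O(1)$, but passing from $h_{(x)_\infty}$ to $n\,h_{D_0}$ within $O(1)$ would require $(x)_\infty$ to be \emph{linearly equivalent} to $nD_0$, which fails for an arbitrary degree-$1$ divisor $D_0$. What the height machine actually gives for two divisors of the same degree is only equality up to $O\bigl(\sqrt{h}\bigr)$ (N\'eron), or equivalently the $\epsilon$-form: the discrepancy $m\,\frakh(a)-n\,\frakh(b)$ is, up to $O(1)$, the height attached to the degree-zero class $m(x)_\infty-n(y)_\infty$ in $\mathrm{Pic}^0(\overline C)(K)$, and this is unbounded whenever that class has positive N\'eron--Tate height---which certainly happens for suitable $P$ with non-constant coefficients in $K=\overk$. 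Hence your assertion that one may take $\epsilon=0$ is false in general, and the displayed identities $\frakh(a)=n\,h_C(p)+O_P(1)$ are unjustified as written.

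The paper (following Eremenko; see the proof of Theorem~\ref{thm:boundpoints}) takes a different and more constructive route that makes the role of $\epsilon$ transparent. After a M\"obius change $\bar x=x/(c_1x+c_2)$ separating the pole supports of $x$ and $y$, one picks integers $\lambda_1>\lambda_2$ with $\lambda_1/\lambda_2\le 1+\epsilon$ and sets $D=\lambda_1 n\,\divs(y)^{-}-\lambda_2 m\,\divs(\bar x)^{-}$. Riemann--Roch supplies a nonzero $z\in\calL_K(D)$, and the resulting pole comparisons $\divs(z)^{-}\le\lambda_1 n\,\divs(y)^{-}$ and $\lambda_2 m\,\divs(\bar x)^{-}\le\divs(1/z)^{-}$ reduce everything, via Lemma~\ref{lm:pointinequality}, to the elementary case $\deg_Y Q=\tdeg(Q)$ of Proposition~\ref{prop:specialcase}. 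The strict inequality $\lambda_1>\lambda_2$ is forced by the requirement $\deg(D)\ge g$, and this is precisely where the factor $1+\epsilon$ enters. Your second paragraph gestures at Riemann--Roch bases and Newton-polygon slopes, but that is not the operative mechanism here: the argument hinges on a single auxiliary function $z$ and the pole-comparison lemma, not on a place-by-place edge analysis.
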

  One can see from Remark~\ref{rem:heights} that if $a\in k(t)$ then $\frakh(a)$ defined in the above proposition is exactly the degree of $a$, i.e. the maximum of the degrees of the numerator and denominator of $a$. Eremenko applied the above result to show that rational solutions of a first order algebraic ordinary differential equation (AODE) $F=0$ are of degree not greater than a constant only depending on $F$. From the viewpoint of algorithms, an explicit estimate of the constant $C$ is usually necessary to guarantee the termination of algorithms for computing rational solutions of AODEs. Meanwhile, such explicit estimate has potential applications in the algorithmic aspect of computing rational points on an algebraic variety over $k(t)$. In this article, we shall give an explicit bound for the constant $C$ in terms of the total degree of $P$, the height of $P$ and $\epsilon$. We obtain this explicit bound by computing the explicit expressions for constants appearing at each step of the proof of the above proposition given by Eremenko. In particular, we give bounds for the heights of the coefficients of a certain nonzero element in the Riemann-Roch space of a divisor. Precisely, suppose that $L=K(x,y)$ is an algebraic function field of one variable over a field $K$, where $x$ is transcendental over $K$ and $y$ is algebraic over $K(x)$. Then each element $A \in L$ can be presented as a polynomial in $y$ with coefficients in $K(x)$, i.e. $A=\frac{1}{q(x)}\sum_{i=0}^{n-1}\sum_{j=0}^m a_{i,j}x^j y^i$ with $a_{i,j}\in K, q\in K[X]$ and $n=[L:K(x)]$. For a certain nonzero element $A$ in the Riemann-Roch space of a divisor, we give a bound for the height of the projective point $\bfa=(\dots:a_{i,j}:\cdots)$ (see Proposition~\ref{prop:heightsofRiemann-Roch}) as well as a bound for the height of $q(X)$.  Note that Schmidt in \cite{Schmidt:Constructionandestimationofbasesinfunctionfields} presented a bound for $m$, a degree bound for $q$ and a bound for the absolute values of the coefficients of the Puiseux series expansion of $A$ when $K$ is the field of algebraic numbers. Although it is possible to obtain a bound for the height of $\bfa$ by the results (mainly Theorem C2) presented in \cite{Schmidt:Constructionandestimationofbasesinfunctionfields}, we do not take this approach because the absolute logarithmic height under consideration in this paper satisfies the triangle inequality, i.e. $\frakh(a+b)\leq \frakh(a)+\frakh(b)$, which is not usually satisfied for absolute logarithmic heights defined in algebraic number fields. The triangle inequality enables us to obtain a simpler expression for the constant $C$. Finally, let us remark that the construction of the Riemann-Roch space of a divisor is one of the fundamental problems in the theory of algebraic function fields. Many algorithms have already been developed for this problem, see for example \cite{Abelard-Couvreur-Lecerf,Coates,Davenport,Hess,Huang-Ierardi,LeGluher-Spaenlehauer,Volcheck}.

The article is organized as follows. In Section 2, we introduce some basic concepts and notations about algebraic function fields of one variable and heights used in the later sections. In Section 3, we estimate the heights of the coefficients for a certain nonzero element in the Riemann-Roch space of a given divisor. Finally, in Section 4, we present an explicit bound for the constant $C$.

As usual, for a polynomial $P(X_1,\dots,X_m)$, we use $\tdeg(P)$ and $\deg(P,X_i)$ to denote the total degree of $P$ and the degree of $P$ with respect to $X_i$ respectively. $\bP^m(\cdot)$ denotes the projective space of dimension $m$ over a field and $(a_0:\dots:a_m)$ denotes a point in $\bP^m(\cdot)$ with coordinates $a_i$.

\section{Algebraic function fields of one variable and heights}\label{heights}
In this section, we will introduce some basic concepts and notations of algebraic function fields of one variable and heights. Readers are referred to \cite{ chevalley, lang, Serre, walker} for details.

\subsection{Algebraic function fields of one variable}
\label{sec:algebraicfunctionfield}
Throughout this subsection, $K$ always stands for an algebraically closed field of characteristic zero.
Let $L$ be an algebraic function field of one variable over $K$. Assume that $L=K(x,y)$ where $x$ is transcendental over $K$ and $y$ satisfies
$$
   P(x,y)=A_0(x)y^n+A_1(x)y^{n-1}+\cdots+A_n(x)=0,\,\,A_i\in K[X],
$$
where $P\in K[X,Y]$ is irreducible. Denote by $K((z))$ the quotient field of the ring of formal power series in $z$.
Let $\bfx(z)=\sum_{i=r}^{\infty} c_i z^i\in K((z))$ with $r\in \Z, c_i\in K, c_r\neq 0$, then $\ord_z(\bfx(z))$ is defined to be $r$.
We call $(\bfx(z),\bfy(z))\in K((z))^2$ a parametrization of $P(X,Y)=0$ provided $P(\bfx(z),\bfy(z))=0$ and $\bfx(z)$ or $\bfy(z)$ does not belong to $K$. If there is an integer $s\geq 2$ such that $\bfx(z),\bfy(z)\in K((z^s))$ then the parametrization $(\bfx(z),\bfy(z))$ is said to be reducible, otherwise irreducible. Two parametrizations $(\bfx(z),\bfy(z))$ and $(\tilde{\bfx}(z),\tilde{\bfy}(z))$ are said to be equivalent if there is $\bfw(z)\in K((z))$ with $\ord_z(\bfw(z))=1$
such that
$$\bfx(z)=\tilde{\bfx}(\bfw(z))\,\,\mbox{and}\,\,\bfy(z)=\tilde{\bfy}(\bfw(z)).$$
\begin{define}
\label{def:places}
An equivalent class of irreducible parametrizations is called a place of $P(X,Y)=0$.
\end{define}
It was shown on page 95 of \cite{walker} that an irreducible parametrization of $P(X,Y)=0$ is equivalent to the one of the type
\begin{equation}
\label{EQ:place}
(a+z^\mu, z^{\nu}(b_0+b_1z^{\ell_1}+\cdots))
\end{equation}
where $a\in K, b_i\in K\setminus\{0\}, \mu\in \Z\setminus\{0\},\nu,\ell_i\in \Z$, $0<\ell_1<\ell_2<\cdots$ and $\mu,\nu,\nu+\ell_1,\nu+\ell_2,\cdots$ have no common factor greater than 1, moreover if $\mu<0$ then $a=0$. In the rest of this article, all irreducible parametrizations of $P(X,Y)=0$ will be of the type (\ref{EQ:place}). Let $\frakp$ be a place of the form (\ref{EQ:place}). We say that $\frakp$ lies above $x-a$ if $\mu>0$, and lies above $1/x$ if $\mu<0$. The integer $|\mu|$ is called the ramification index of $\frakp$ with respect to $K(x)$, denoted by $e_{\frakp,K(x)}$. Suppose that $f\in L\setminus \{0\}$. The order of $f$ at $\frakp$, denoted by $\ord_\frakp(f)$, is defined to be $\ord_z(f(z^\mu+a,z^{\nu}(b_0+\cdots)))$. If $\ord_\frakp(f)>0$, $\frakp$ is called a zero of $f$ and if $\ord_\frakp(f)<0$, $\frakp$ is called a pole of $f$. It is well-known that a nonzero $f$ admits only finitely many zeros and poles. We make the convention to write $\ord_\frakp(0)=\infty$.
For $f,g\in L$, one can verify that
$$
\ord_{\frakp}(fg)=\ord_{\frakp}(f)+\ord_{\frakp}(g),\,\,\ord_{\frakp}(f+g)\geq \min\{\ord_{\frakp}(f),\ord_{\frakp}(g)\}
$$
where the equality in the last formula holds if $\ord_{\frakp}(f)\neq \ord_{\frakp}(g)$.

Denote $V(\frakp)=\{f\in L \mid \ord_\frakp(f)\geq 0\}$. One can check that $V(\frakp)$ is a discrete valuation ring of $L$. One can also check that for $f\in V(\frakp)$ there is a unique $c_f\in K$ such that $\ord_\frakp(f-c_f)>0$. We define a map $\pi_\frakp: V(\frakp)\rightarrow K$ given by $f\mapsto c_f$. Then $\pi_\frakp$ is a $K$-homomorphism. We make a convention with $\pi_\frakp(f)=\infty$ if $\ord_\frakp(f)<0$. The point $(\pi_\frakp(x),\pi_\frakp(y))$ is called the center of $\frakp$.
\begin{remark}
\label{rem:places}
In \cite{chevalley}, a place is presented by the unique maximal ideal of a discrete valuation ring of $L$ over $K$. Precisely, let $\frakp$ be a place of $P(X,Y)=0$ and let $V(\frakp)$ be as above.  Set $\frakm_\frakp=\{f\in L \mid \ord_\frakp(f)>0\}$. One sees that $\frakm_\frakp$ is the unique maximal ideal of $V(\frakp)$, which is the ``place'' defined in \cite{chevalley} corresponding to $\frakp$.  Conversely, given a discrete valuation ring $V$ of $L$ over $K$ with $\frakm$ as its unique maximal ideal, we can construct a unique place of $P(X,Y)=0$ corresponding to $V$.  Let $z$ be a uniformizing variable at $V$. Expanding $x,y$ as Puiseux series in $z$ yields an irreducible parametrization and thus a place of $P(X,Y)=0$. Furthermore, different choices of uniformizing variable at $V$ induce equivalent irreducible parametrizations and so the same place. Additionally, the order of $f$ at $\frakm_\frakp$ defined in \cite{chevalley} is nothing else but $\ord_\frakp(f)$.
\end{remark}
\begin{define}
\label{def:divisors}
A divisor $D$ of $L$ is a finite sum of places of $P(X,Y)=0$ with integer coefficients, i.e. $D=\sum_{\frakp} d_\frakp \frakp$ where $d_\frakp\in \Z$ and $d_\frakp=0$ for all but finitely many places $\frakp$. Specially, $D$ is called a zero divisor if all $d_\frakp=0$.
\end{define}
Suppose that $D=\sum_{\frakp} d_\frakp \frakp$ is a divisor of $L$.  We call $\sum_\frakp d_\frakp$, denoted by $\deg(D)$, the degree of $D$. The set of places $\frakp$ with $d_\frakp\neq 0$ is called the support of $D$, denoted by $\supp(D)$. We call $D$ an integral divisor if $d_\frakp\geq 0$ for all $\frakp$, denoted by $D\geq 0$.
For a nonzero $f\in L$, denote
$$
   \divs(f)=\sum_\frakp \ord_\frakp(f)\frakp,
$$
which is called the divisor of $f$. Each divisor $D$ can be uniquely written as $D^{+}-D^{-}$ where $D^{+}, D^{-}$ are integral divisors and $\supp(D^{+})\cap \supp(D^{-})=\emptyset$. Given a divisor $D$, denote
$$
   \calL_K(D)=\{f\in L \mid \divs(f)+D\geq 0\} \cup \{0\}.
$$
We call $\calL_K(D)$ the Riemann-Roch space of $D$ which is a $K$-vector space of finite dimension, and we denote its dimension by $\ell(D)$. By the Riemann-Roch theorem, $\ell(D)>0$ if $\deg(D)$ is not less than the genus of $L$ over $K$.

\subsection{Heights in an algebraic function field of one variable}
Throughout this subsection, $k(t)$ stands for the field of rational functions in $t$ with coefficients in an algebraically closed field $k$ of characteristic zero, and $\overk$ for the algebraic closure of $k(t)$. Let $L\subset \overk$ be a finite extension of $k(t)$. Then $L$ is an algebraic function field of one variable over $k$. Places in this subsection will be presented by maximal ideals of discrete valuation rings of $L$ over $k$ or equivalent classes of irreducible parameterizations of $P(X,Y)=0$, where $P(X,Y)=0$ is an algebraic curve whose function field coincides with $L$. Let us first define an absolute logarithmic height of a point in $\bP^m(\overk)$. Note that $\ord_\frakp(0)=\infty$.
\begin{define}
\label{def:height1}
Given $\bfa=(a_0: \dots :a_m)\in \bP^m(\overk)$, let $L$ be a finite extension of $k(t)$ containing all $a_i$. The {\em absolute logarithmic height} (or simply {\em height}) of $\bfa$, denoted by $\frakh(\bfa)$,  is defined to be
$$
 \frac{\sum_{\frakp} \max_{i=0}^m\{-\ord_{\frakp}(a_i)\}}{[L:k(t)]}
$$
where $\frakp$ ranges over all places of $L$ over $k$.
\end{define}
From Section 3 of Chapter 3 in \cite{lang}, $\frakh(\cdot)$ is a logarithmic height function. Actually it is an absolute logarithmic height function i.e. a logarithmic height function independent of the choices of the field $L$. To see this, let $\tilde{L}$ be a finite extension of $L$ and suppose that $\frakp$ is a place of $L$ over $k$. Then there are finitely many places $\frakP$ of $\tilde{L}$ over $k$ lying above $\frakp$, i.e. $\frakP\cap L=\frakp$. For brevity, denote by $\frakP | \frakp$ a place $\frakP$ lying above $\frakp$. Note that the relative degree of $\frakP_i$ is 1 because $k$ is algebraically closed, and due to Theorem 1 on page 52 of \cite{chevalley}, for a fixed $\frakp$, $\sum_{\frakP|\frakp} e_{\frakP,L}=[\tilde{L}:L]$. Moreover $\ord_{\frakP}(a)=e_{\frakP,L}\ord_\frakp(a)$ for any $a\in L$ and any $\frakP$ lying above $\frakp$. These imply that for a fixed $\frakp$,
\begin{align*}
\sum_{\frakP|\frakp}\max_i\{-\ord_\frakP(a_i)\}&=\sum_{\frakP|\frakp}\max_i\{-e_{\frakP,L}\ord_\frakp(a_i)\}=\sum_{\frakP|\frakp}e_{\frakP,L}\max_i\{-\ord_\frakp(a_i)\}\\
&=[\tilde{L}:L]\max_i\{-\ord_\frakp(a_i)\}.
\end{align*}
From this, one easily sees that $\frakh(\cdot)$ is independent of the choices of $L$. Using Definition~\ref{def:height1}, it is natural to define the height of an element in $\overk$ and a polynomial in $\overk[X_1,\dots,X_m]$ as follows.

\begin{define}
\label{def:height2}
\begin{enumerate}
\item For $a\in \overk$, we define the height of $a$ to be $\frakh((1:a))$, denoted by $\frakh(a)$.
\item Let $Q$ be a nonzero polynomial in $\overk[X_1,\dots,X_m]$. We define the height of $Q$ to be
\[
    \frakh(Q)=\begin{cases} 0 & \mbox{$Q$ contains exactly one term}\\ \frakh(\bfa) & \mbox{otherwise} \end{cases},
\]
where $\bfa$ is a point in some projective space whose coordinates are the coefficients of $Q$.
\end{enumerate}
\end{define}

In the following, we make a convention that $a/0=\infty$ for any $a\in \overk \setminus \{0\}$ and $\frakh(\infty)=0$.
\begin{remark}
\label{rem:heights}
Assume that $\bfa=(a_0:\dots:a_m) \in \bP^m(\overk)$.
\begin{enumerate}
\item
Suppose that $a_0=1$, then
 $$
 \frakh(\bfa)=\frac{\sum_{\frakp} \max\{0,-\ord_\frakp(a_1),\dots,-\ord_\frakp(a_m)\}}{[L:k(t)]}\geq 0.
 $$

\item
Let $a\in \overk$ and $L=k(t,a)$. Let $Q(X,Y)$ be a nonzero irreducible polynomial over $k$ such that $Q(t,a)=0$. It is clear that $\frakh(a)=0$ if $a\in k$. Now assume that $a\notin k$ and $\frakp_1,\dots,\frakp_s$ are all distinct poles of $a$ in $L$, then
\[
\frakh(a)=\frac{-\sum_{i=1}^{s}\ord_{\frakp_i}(a)}{[L:k(t)]}=\frac{[L:k(a)]}{[L:k(t)]}=\frac{\deg(Q,X)}{\deg(Q,Y)}.
\]
In particular, if $a\in k(t)$ then $\frakh(a)=\deg(a)$ which is defined to be the maximum of the degrees of the denominator and numerator of $a$.
\end{enumerate}
\end{remark}

The height given in Definition~\ref{def:height1} has the following properties.
\begin{prop}\label{prop:heightproperty1}
$\frakh(a^n)=\frakh(a^{-n})=n \frakh(a),\,a\in \overk \setminus \{0\},\,n\geq 0$.
\end{prop}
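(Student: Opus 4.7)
The plan is to unpack Definition~\ref{def:height1} directly and reduce everything to the multiplicativity of $\ord_\frakp$ and the fundamental fact that $\deg(\divs(a))=0$ for any nonzero $a\in L$ (the degree of a principal divisor vanishes in an algebraic function field of one variable; see e.g.\ Chapter~I of \cite{chevalley}).

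First I would fix a finite extension $L$ of $k(t)$ containing $a$, and write out $\frakh(a^n) = \frakh((1:a^n))$ using the definition:
\[
\frakh(a^n) \;=\; \frac{\sum_{\frakp}\max\{0,-\ord_{\frakp}(a^n)\}}{[L:k(t)]}.
\]
Since $\ord_{\frakp}(a^n)=n\,\ord_{\frakp}(a)$ and $n\geq 0$, one has $\max\{0,-n\,\ord_{\frakp}(a)\}=n\max\{0,-\ord_{\frakp}(a)\}$, and pulling $n$ out of the sum immediately gives $\frakh(a^n)=n\frakh(a)$.

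Next, for $\frakh(a^{-n})$ I would use the identity $\max\{0,-x\}-\max\{0,x\}=-x$ applied to $x=n\,\ord_{\frakp}(a)$, which yields
\[
\max\{0,-n\,\ord_{\frakp}(a)\}-\max\{0,n\,\ord_{\frakp}(a)\}=-n\,\ord_{\frakp}(a).
\]
Summing over all places $\frakp$ of $L$ and invoking $\sum_{\frakp}\ord_{\frakp}(a)=\deg(\divs(a))=0$, the right-hand side collapses to zero. Hence the numerators in the definitions of $\frakh(a^n)$ and $\frakh(a^{-n})$ coincide, so $\frakh(a^{-n})=\frakh(a^n)=n\frakh(a)$.

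There is no real obstacle here: the only nontrivial ingredient is the vanishing of the degree of a principal divisor, which is a standard theorem for algebraic function fields of one variable and is already tacitly in force through Section~\ref{sec:algebraicfunctionfield}. The case $n=0$ is trivial, since $a^0=1$ has $\ord_\frakp(1)=0$ at every place.
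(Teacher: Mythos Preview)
Your proof is correct. The first equality $\frakh(a^n)=n\frakh(a)$ is handled exactly as in the paper. For $\frakh(a^{-1})=\frakh(a)$ (equivalently $\frakh(a^{-n})=\frakh(a^n)$), the paper takes a slightly different route: it simply observes that $(1:a)$ and $(1/a:1)$ are the same point in $\bP^1(\overk)$, so
\[
\frakh(a)=\frakh((1:a))=\frakh((1/a:1))=\frakh(1/a).
\]
This appeals only to the well-definedness of $\frakh$ on projective points, whereas you invoke the product formula $\sum_\frakp\ord_\frakp(a)=0$ explicitly via the identity $\max\{0,-x\}-\max\{0,x\}=-x$. The two arguments are essentially equivalent, since the independence of $\frakh(\bfa)$ from the choice of homogeneous coordinates is itself a consequence of the product formula; the paper's version is just a bit cleaner because that fact has already been absorbed into the definition.
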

\begin{proof}
Let $L=k(t,a)$. For each place $\frakp$ of $L$ over $k$,
$$\max\{0,-\ord_\frakp(a^n)\}=\max\{0,-n\ord_\frakp(a)\}=n\max\{0,-\ord_\frakp(a)\}.$$
By deifnition, $\frakh(a^n)=n\frakh(a)$. For the first equality, it suffices to show that $\frakh(a)=\frakh(1/a)$. As $(1:a)=(1/a:1)$, one sees that
$$\frakh(a)=\frakh((1:a))=\frakh((1/a:1))=\frakh(1/a).$$
\end{proof}
Suppose that $ \Phi=(\phi_0:\dots:\phi_m)$ is a morphism, namely
\begin{align*}
  \Phi: \bP^{s_1}(\overk)\times \dots \times \bP^{s_r}(\overk)&\longrightarrow\bP^m(\overk)\\
                       \bfb &\longmapsto (\phi_0(\bfb):\dots:\phi_m(\bfb)),
 \end{align*}
 where $\phi_i\in \overk[X_{1,0},\dots,X_{1,s_1},\dots,X_{r,0},\dots,X_{r,s_r}]$ is a nonzero polynomial homogeneous in $X_{j,0},\dots, X_{j,s_j}$ of degree $d_j$ for all $j=1,\dots,r$. Write $\phi_i=\sum_{j=1}^s c_{i,j}\bfm_j$, where $c_{i,j}\in \overk$ and $\bfm_1,\dots,\bfm_s$ are all monomials in $X_{1,0},\dots$, $X_{1,s_1},\dots,X_{r,0},\dots,X_{r,s_r}$ of total degree $\sum_{j=1}^r d_j$.
 \begin{define}
 We define the height of $\Phi$, denoted by $\frakh(\Phi)$, to be
 $$
    \frakh((c_{0,1}:\dots:c_{i,j}:\dots:c_{m,s})).
 $$
 \end{define}
 The following proposition will play a key role in the rest of this paper. Although it is a trivial generalization of an existing result (see Proposition on page 15 of \cite{Serre} or Lemma 1.6 on page 80 of \cite{lang} for the case with $r=1$), we reprove this result for completeness and give an explicit estimate of the error term $c$ in the case of heights in algebraic function fields of one variable.
\begin{prop}\label{prop:heightproperty2}
Let $\Phi=(\phi_0:\dots:\phi_m)$ be as above.
Suppose $(\bfa_1,\dots,\bfa_r) \in \bP^{s_1}(\overk)\times\dots\times \bP^{s_r}(\overk)$ is a point on which $\Phi$ is defined. Then
$$
   \frakh(\Phi(\bfa_1,\dots,\bfa_r))\leq \sum_{i=1}^r d_i\frakh(\bfa_i)+\frakh(\Phi).
$$
\end{prop}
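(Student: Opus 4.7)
The plan is to prove the inequality place-by-place and then sum. Pick a finite extension $L$ of $k(t)$ containing all coordinates $a_{i,l}$ of the points $\bfa_i = (a_{i,0}:\dots:a_{i,s_i})$ as well as every coefficient $c_{i,j}$ of the $\phi_i$; such an $L$ exists because each of the finitely many quantities involved is algebraic over $k(t)$, and by the independence of $\frakh$ from the choice of field (established just after Definition~\ref{def:height1}) this is harmless. Fix specific representatives of the $\bfa_i$ in $L$; the inequality is invariant under rescaling $\bfa_i \mapsto \lambda_i \bfa_i$ because $\Phi(\bfa_1,\dots,\bfa_r)$ rescales by $\prod_i \lambda_i^{d_i}$ and $\sum_\frakp \ord_\frakp(\lambda_i)=0$ for any nonzero $\lambda_i \in L$.

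The core estimate is at a single place $\frakp$ of $L$. First, for any multi-homogeneous monomial $\bfm = \prod_{l,k} X_{l,k}^{e_{l,k}}$ with $\sum_k e_{l,k} = d_l$, the additive property of $\ord_\frakp$ on products yields
$$
\ord_\frakp(\bfm(\bfa_1,\dots,\bfa_r)) = \sum_{l,k} e_{l,k}\ord_\frakp(a_{l,k}) \geq \sum_{l=1}^r d_l \min_k \ord_\frakp(a_{l,k}),
$$
so $-\ord_\frakp(\bfm(\bfa_1,\dots,\bfa_r)) \leq \sum_{l=1}^r d_l \max_k(-\ord_\frakp(a_{l,k}))$. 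Next, writing $\phi_i = \sum_j c_{i,j}\bfm_j$ and applying the ultrametric inequality $\ord_\frakp(f+g)\geq \min\{\ord_\frakp(f),\ord_\frakp(g)\}$ established in Section~\ref{heights}, we obtain for each $i$
$$
-\ord_\frakp(\phi_i(\bfa_1,\dots,\bfa_r)) \leq \max_j\bigl(-\ord_\frakp(c_{i,j}) - \ord_\frakp(\bfm_j(\bfa_1,\dots,\bfa_r))\bigr).
$$
Combining these two bounds and then taking the max over $i$,
$$
\max_i\bigl(-\ord_\frakp(\phi_i(\bfa_1,\dots,\bfa_r))\bigr) \leq \max_{i,j}\bigl(-\ord_\frakp(c_{i,j})\bigr) + \sum_{l=1}^r d_l\max_k\bigl(-\ord_\frakp(a_{l,k})\bigr).
$$

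Finally, summing this inequality over all places $\frakp$ of $L$ over $k$ and dividing by $[L:k(t)]$ gives precisely
$$
\frakh(\Phi(\bfa_1,\dots,\bfa_r)) \leq \frakh(\Phi) + \sum_{l=1}^r d_l\frakh(\bfa_l),
$$
as desired. There is no real obstacle; the only point requiring care is ensuring that the assumption ``$\Phi$ is defined at $(\bfa_1,\dots,\bfa_r)$'' (i.e.\ some $\phi_i(\bfa_1,\dots,\bfa_r)\neq 0$) is what makes the left-hand side $\frakh(\Phi(\bfa_1,\dots,\bfa_r))$ a legitimate height of a point in $\bP^m(\overk)$, and keeping track that the two $\max$'s on the right decouple cleanly thanks to the multi-homogeneity of each $\phi_i$ in each block of variables.
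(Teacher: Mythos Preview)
Your proof is correct and follows essentially the same route as the paper: work at a single place $\frakp$, use additivity of $\ord_\frakp$ on products to bound each monomial value, apply the ultrametric inequality to control $-\ord_\frakp(\phi_i(\bfa_1,\dots,\bfa_r))$, take the maximum over $i$, and sum over all places. Your remarks on rescaling invariance and on why ``$\Phi$ defined at the point'' is needed are not in the paper's proof but are welcome clarifications rather than a different method.
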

\begin{proof}
 Write $\bfa_i=(a_{i,0}:\dots:a_{i,s_i}), \bfj=(j_{1,0},\dots,j_{1,s_1},\dots,j_{r,0},\dots,j_{r,s_r})$ and
$$\phi_i=\sum_{\bfj} c_{i,\bfj}X_{1,0}^{j_{1,0}} \cdots X_{l,l'}^{j_{l,l'}}\cdots X_{r,s_r}^{j_{r,s_r}}$$
with $c_{i,\bfj}\in \overk$ and {$\sum_{l'=0}^{s_l} j_{l,\l'}=d_l$}.
Let $L$ be a finite extension of $k(t)$ containing all $a_{i,j}$ and all $c_{i,\bfj}$.
For each place $\frakp$ of $L$ over $k$, one has that
{\begin{align*}
    -\ord_\frakp(c_{i,\bfj} & a_{1,0}^{j_{1,0}}\cdots a_{l,l'}^{j_{l,l'}}\cdots a_{r,s_r}^{j_{r,s_r}})=-\ord_\frakp(c_{i,\bfj})-\sum_{l=1}^{r}\sum_{l'=0}^{s_l} j_{l,l'}\ord_\frakp(a_{l,l'})\\
    &\leq -\ord_\frakp(c_{i,\bfj})+\sum_{l=1}^{r}\sum_{l'=0}^{s_l} j_{l,l'}\max\{-\ord_\frakp(a_{l,0}),\dots,-\ord_\frakp(a_{l,s_l})\}\\
    &\leq \max_{i,\bfj}\{-\ord_\frakp(c_{i,\bfj})\}+\sum_{l=1}^r d_l \max\{-\ord_\frakp(a_{l,0}),\dots,-\ord_\frakp(a_{l,s_l})\}
    .
\end{align*}}
This implies that for each $i$,
\begin{align*}
  -\ord_\frakp&(\phi_i(\bfa_1,\dots,\bfa_r))\leq \max_{\bfj}\{-\ord_\frakp(c_{i,\bfj} a_{1,0}^{j_{1,0}}\dots a_{l,l'}^{j_{l,l'}}\dots a_{r,s_r}^{j_{r,s_r}})\}\\
  &\leq \max_{i,\bfj}\{-\ord_\frakp(c_{i,\bfj})\}+\sum_{l=1}^r d_l \max\{-\ord_\frakp(a_{l,0}),\dots,-\ord_\frakp(a_{l,s_l})\}.
\end{align*}
By definition, one sees that $\frakh(\Phi(\bfa_1,\dots,\bfa_r))\leq \sum_{l=1}^r d_l\frakh(\bfa_l)+\frakh(\Phi)$.
  \end{proof}
The above proposition has the following corollaries.
\begin{corollary}\label{cor:heightproperty11}
\begin{enumerate}
\item Suppose that $Q$ is a polynomial in $\Q[X_1,\dots,X_m]$ with degree $d_i$ in $X_i$ for all $i$. Let $b_1,\dots,b_m\in \overk$. Then $$\frakh(Q(b_1,\dots,b_m)\leq \sum_{i=1}^m d_i\frakh(b_i).$$
    Especially, $\frakh(a+b),\frakh(ab)\leq \frakh(a)+\frakh(b)$ for any $a,b\in \overk$.
\item Suppose that $c_1,c_2,c_3,c_4\in k$ satisfy that $c_1c_4-c_2c_3\neq 0$. Then $$\frakh\left(\frac{c_1 a+c_2}{c_3a+c_4}\right)=\frakh(a)$$ for any $a\in \overk$.
\end{enumerate}
\end{corollary}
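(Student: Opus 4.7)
The plan is to derive both parts as immediate consequences of Proposition~\ref{prop:heightproperty2} by packaging the given polynomial or rational expression as a suitable morphism between projective spaces over $\overk$.

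For part (1), I would view each $b_i\in\overk$ as the projective point $(1:b_i)\in\bP^1(\overk)$ and homogenize $Q$ in each pair of variables separately. Writing $Q=\sum_{\bfj} c_{\bfj}X_1^{j_1}\cdots X_m^{j_m}$ with $c_{\bfj}\in\Q$ and $j_i\leq d_i$, I set
$$\tilde{Q}(X_{1,0},X_{1,1},\ldots,X_{m,0},X_{m,1}) \;=\; \sum_{\bfj} c_{\bfj}\prod_{i=1}^m X_{i,0}^{d_i-j_i}X_{i,1}^{j_i},$$
and define $\Phi\colon\bP^1(\overk)^m\to\bP^1(\overk)$ by $\Phi=\bigl(\prod_{i=1}^m X_{i,0}^{d_i}:\tilde{Q}\bigr)$, which is bi-homogeneous of degree $d_i$ in $(X_{i,0},X_{i,1})$ for each $i$. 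Evaluating at $((1:b_1),\ldots,(1:b_m))$ gives $(1:Q(b_1,\ldots,b_m))$. The nonzero coefficients of $\Phi$ are $1$ and the $c_{\bfj}\in\Q\subset k$; since every nonzero element of $k$ has order $0$ at every place of any finite extension $L/k(t)$, the resulting projective coefficient point has height $0$, so $\frakh(\Phi)=0$. Proposition~\ref{prop:heightproperty2} then gives $\frakh(Q(b_1,\ldots,b_m))\leq\sum_i d_i\frakh(b_i)$, and specializing to $Q=X_1+X_2$ and $Q=X_1X_2$ yields the triangle inequality and the product bound.

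For part (2), I would apply the same idea to the M\"obius transformation $\Psi\colon\bP^1(\overk)\to\bP^1(\overk)$ defined by $(X_0:X_1)\mapsto(c_3X_1+c_4X_0:c_1X_1+c_2X_0)$. The hypothesis $c_1c_4-c_2c_3\neq 0$ guarantees that the two homogeneous components have no common zero on $\bP^1$, so $\Psi$ is a genuine morphism, and $\Psi((1:a))=(c_3a+c_4:c_1a+c_2)=(1:(c_1a+c_2)/(c_3a+c_4))$. Since all $c_i\in k$, $\frakh(\Psi)=0$, and Proposition~\ref{prop:heightproperty2} gives $\frakh((c_1a+c_2)/(c_3a+c_4))\leq\frakh(a)$. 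For the reverse inequality I would invert the transformation: solving $a'=(c_1a+c_2)/(c_3a+c_4)$ for $a$ yields $a=(-c_4a'+c_2)/(c_3a'-c_1)$, another M\"obius transformation whose determinant is still $c_1c_4-c_2c_3\neq 0$, and applying the forward bound to it gives $\frakh(a)\leq\frakh((c_1a+c_2)/(c_3a+c_4))$.

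The proof is essentially routine once the morphisms are set up; the only substantive point is verifying that $\frakh(\Phi)=0$ when all the coefficients lie in $k$, which follows because elements of $k^\times$ have trivial order at every place of $L$ over $k$ (so only the convention $-\ord_\frakp(0)=-\infty$ plays a role in the $\max$, and the remaining contributions are all $0$).
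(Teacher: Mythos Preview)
Your proposal is correct and follows essentially the same route as the paper: homogenize in each variable, package the data as a morphism $\Phi$ (respectively a M\"obius map $\Psi$) with coefficients in $k$ so that $\frakh(\Phi)=0$, and apply Proposition~\ref{prop:heightproperty2}, then invert the M\"obius map for the reverse inequality in part~(2). The only cosmetic difference is that the paper treats the degenerate case $c_3a+c_4=0$ separately, while you absorb it into the morphism picture; note that your equality $(c_3a+c_4:c_1a+c_2)=(1:(c_1a+c_2)/(c_3a+c_4))$ tacitly assumes $c_3a+c_4\neq 0$, but since $\frakh((0:c_1a+c_2))=0$ agrees with the convention $\frakh(\infty)=0$, the argument goes through unchanged.
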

\begin{proof}
1. Homogenizing $Q$, we obtain $\bar{Q}\in \Q[X_{1,0},X_{1,1},\dots,X_{m,0},X_{m,1}]\setminus \{0\}$ homogenous in $X_{i,0},X_{i,1}$ of degree $d_i$ for all $i$ such that
$$\bar{Q}(\bfb_1,\dots,\bfb_m)=Q(b_1,\dots,b_m)$$
 where $\bfb_i=(1:b_i)$. In Proposition~\ref{prop:heightproperty2}, if we take $\phi_0=\prod_{i=1}^m X_{i,0}^{d_i}, \phi_1=\bar{Q}, \bfa_i=\bfb_i$ then we have that
$$\frakh(Q(b_1,\dots,b_m))=\frakh(\bar{Q}(\bfb_1,\dots,\bfb_m))\leq \sum_{i=1}^m d_i\frakh(\bfb_i)=\sum_{i=1}^md_i\frakh(b_i).$$

2. If $c_3a+c_4=0$, then $c_1a+c_2 \neq 0$ since $c_1c_4-c_2c_3\neq 0$. One sees that $\frakh((c_1a+c_2)/(c_3a+c_4))=\frakh(\infty)=0$ and $\frakh(a)=\frakh(-c_4/c_3)=0$, then the desired equality holds. Now assume $c_3a+c_4\neq 0$, we take $r=1,s_1=1$ and
$$\Phi=(c_3X_{1,1}+c_4X_{1,0}, c_1X_{1,1}+c_2X_{1,0}), \,\,\bfa_1=(1:a)$$
in Proposition~\ref{prop:heightproperty2}. Then one has that
$$
  \frakh\left(\frac{c_1a+c_2}{c_3a+c_4}\right)=\frakh \left(\left(1:\frac{c_1a+c_2}{c_3a+c_4}\right)\right)=\frakh(\Phi(\bfa_1))\leq \frakh(a).
$$
Conversely, let $b=(c_1a+c_2)/(c_3a+c_4)$. Then $a=(c_4b-c_2)/(c_1-c_3b)$. A similar argument implies that $\frakh(a)\leq \frakh(b)$. Thus $\frakh(a)=\frakh(b)$.
\end{proof}

\begin{corollary}\label{cor:heightproperty12}
\begin{enumerate}
\item Let $\bfb_i=(b_{i,0}:\dots:b_{i,n_i})\in \bP^{n_i}(\overk)$ for $i=1,2$. Suppose that $b_{1,0}=b_{2,0}=1$ and set $$\bfc=(b_{1,0}:\dots:b_{1,n_1}:b_{2,0}:\dots:b_{2,n_2})\in \bP^{n_1+n_2+1}(\overk).$$
    Then $\frakh(\bfc)\leq \frakh(\bfb_1)+\frakh(\bfb_2)$.
\item Suppose that $\bfb=(b_0:\dots:b_n)\in \bP^n(\overk)$. Then $\frakh(\bfb)\leq \sum_{i=0}^n \frakh(b_i)$.
\end{enumerate}
\end{corollary}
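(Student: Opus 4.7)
The plan is to reduce both parts to place-by-place manipulations of the maximum in Definition~\ref{def:height1} and then sum over places. For part (1), I would fix a finite extension $L$ of $k(t)$ containing all of the $b_{i,j}$ and begin from
$$
[L:k(t)]\,\frakh(\bfc)=\sum_\frakp \max\bigl\{-\ord_\frakp(b_{1,0}),\dots,-\ord_\frakp(b_{1,n_1}),-\ord_\frakp(b_{2,0}),\dots,-\ord_\frakp(b_{2,n_2})\bigr\}.
$$
Set $A_\frakp:=\max_{0\leq i\leq n_1}\{-\ord_\frakp(b_{1,i})\}$ and $B_\frakp:=\max_{0\leq j\leq n_2}\{-\ord_\frakp(b_{2,j})\}$. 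The normalizations $b_{1,0}=b_{2,0}=1$ force $A_\frakp\geq 0$ and $B_\frakp\geq 0$, so the full local maximum equals $\max\{A_\frakp,B_\frakp\}$, which is bounded by $A_\frakp+B_\frakp$ for nonnegative reals. Summing over $\frakp$ and dividing by $[L:k(t)]$ yields $\frakh(\bfc)\leq \frakh(\bfb_1)+\frakh(\bfb_2)$.

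For part (2), I would first establish the place-wise inequality
$$
\max_{0\leq i\leq n}\{-\ord_\frakp(b_i)\}\leq \sum_{i=0}^n \max\{0,-\ord_\frakp(b_i)\}.
$$
If the left-hand side is nonpositive, the bound is immediate since each term on the right is nonnegative; otherwise the left-hand side is attained at some index $j$ with $-\ord_\frakp(b_j)>0$, so $-\ord_\frakp(b_j)=\max\{0,-\ord_\frakp(b_j)\}$ is itself one of the summands on the right. By Remark~\ref{rem:heights}(1), $\sum_\frakp \max\{0,-\ord_\frakp(b_i)\}=[L:k(t)]\,\frakh(b_i)$, so summation in $\frakp$ followed by division by $[L:k(t)]$ gives the claim.

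The only piece of bookkeeping requiring any care is the convention $\ord_\frakp(0)=\infty$: whenever some $b_i$ vanishes, $-\ord_\frakp(b_i)=-\infty$ never contributes to a maximum and $\max\{0,-\infty\}=0$ contributes nothing to a sum, so every inequality above remains valid. I do not foresee a genuine obstacle here; the proposition is essentially a formal consequence of Definition~\ref{def:height1}, exploiting the nonnegativity built into the normalization in~(1) and into the expression $\frakh(b_i)=\frakh((1:b_i))$ in~(2).
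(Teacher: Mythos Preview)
Your proof is correct but takes a different route from the paper. You work directly from Definition~\ref{def:height1}: for part~(1) you bound $\max\{A_\frakp,B_\frakp\}\le A_\frakp+B_\frakp$ using the nonnegativity forced by $b_{1,0}=b_{2,0}=1$, and for part~(2) you use the elementary inequality $\max_i\{-\ord_\frakp(b_i)\}\le\sum_i\max\{0,-\ord_\frakp(b_i)\}$. The paper instead deduces both parts from Proposition~\ref{prop:heightproperty2} by writing down explicit bilinear and multilinear morphisms $\Phi$ whose coefficients lie in $\{0,1\}$ (so $\frakh(\Phi)=0$): for~(1) a Segre-type map $(\bfb_1,\bfb_2)\mapsto(b_{1,i}b_{2,0},\,b_{1,0}b_{2,j})$, and for~(2) the map from $\prod_{i}(1:b_{i-1})$ with components $\phi_i=X_{i+1,1}\prod_{j\ne i+1}X_{j,0}$. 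Your argument is more elementary and transparent; the paper's argument is more systematic and has the structural advantage that the corollary then automatically carries over to any abstract height for which Proposition~\ref{prop:heightproperty2} is assumed---precisely the portability exploited in condition~(A1) at the start of Section~\ref{sec:Riemann-Roch}.
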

\begin{proof}
1. We take $r=2,s_1=n_1,s_2=n_2$, $\Phi=(\phi_0:\dots:\phi_{n_1+n_2+1})$ with
\[
  \phi_i=
  \begin{cases}
  X_{1,i}X_{2,0} & i=0,\dots,n_1\\
  X_{1,0}X_{2,i-n_1-1} & i=n_1+1,\dots,n_1+n_2+1
  \end{cases}
\]
and $\bfa_1=\bfb_1,\bfa_2=\bfb_2$. Then $\Phi(\bfa_1,\bfa_2)=\bfc$ and $\frakh(\bfc)\leq \frakh(\bfb_1)+\frakh(\bfb_2)$ because of Proposition~\ref{prop:heightproperty2}.

2. In Proposition~\ref{prop:heightproperty2}, take $r=n+1, s_1=\dots=s_{n+1}=1$, $\Phi=(\phi_0,\dots,\phi_n)$ with
$\phi_i=X_{i+1,1}\prod_{j=1,j\neq i+1}^{n+1} X_{j,0}$, $\bfa_i=(1:b_{i-1})$ for all $i=1,\dots,n+1$.
\end{proof}

\begin{corollary}\label{cor:resultant}
Assume that $P_1,P_2\in \overk[X_1,\dots,X_m,Y]$. Then
$$
\frakh({\rm res}_{Y}(P_1,P_2))\leq \deg(P_2,Y)\frakh(P_1)+\deg(P_1,Y)\frakh(P_2)
$$
where ${\rm res}_{Y}(P_1,P_2)$ is the resultant of $P_1$ and $P_2$ with respect to $Y$.
\end{corollary}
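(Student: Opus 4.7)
The plan is to realize $\res_Y(P_1,P_2)$ as the value of a bihomogeneous morphism on the projective point of coefficients of $(P_1,P_2)$, and then invoke Proposition~\ref{prop:heightproperty2}. Write $P_i=\sum_\mu c_{i,\mu}\mu$ where $\mu$ runs over the relevant monomials in $X_1,\ldots,X_m,Y$ and $c_{i,\mu}\in\overk$, and let $\bfp_i$ be the projective point whose coordinates are the $c_{i,\mu}$, so that $\frakh(P_i)=\frakh(\bfp_i)$ by Definition~\ref{def:height2}. Setting $n_i=\deg(P_i,Y)$ and grouping by powers of $Y$ as $P_i=\sum_{j=0}^{n_i}A_{i,j}(X_1,\ldots,X_m)Y^j$, the resultant $\res_Y(P_1,P_2)$ is the determinant of the $(n_1+n_2)\times(n_1+n_2)$ Sylvester matrix whose first $n_2$ rows contain shifts of the $A_{1,j}$'s and whose last $n_1$ rows contain shifts of the $A_{2,j}$'s.

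Expanding this determinant, every nonzero term is a signed product of exactly $n_2$ entries from the $A_{1,\cdot}$ rows and $n_1$ entries from the $A_{2,\cdot}$ rows. Since each $A_{i,j}$ is a $\Z$-linear combination of the variables $c_{i,\mu}$, the coefficient $R_\lambda$ of each monomial $X_1^{\lambda_1}\cdots X_m^{\lambda_m}$ in $\res_Y(P_1,P_2)$ is an integer-coefficient polynomial in the $c_{i,\mu}$'s that is bihomogeneous of bi-degree $(n_2,n_1)$ in $(\{c_{1,\mu}\},\{c_{2,\mu}\})$. Discarding the $R_\lambda$'s that vanish identically, the remaining ones assemble into a morphism
$$
\Phi:\bP^{N_1-1}(\overk)\times\bP^{N_2-1}(\overk)\longrightarrow\bP^{M-1}(\overk)
$$
whose components are homogeneous of degree $n_2$ in the first group of variables and of degree $n_1$ in the second. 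Because every component has coefficients in $\Z\subset k$, we have $\frakh(\Phi)=0$.

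Assuming $\res_Y(P_1,P_2)\neq 0$ (the opposite case being trivial), $\Phi$ is defined at $(\bfp_1,\bfp_2)$ and $\Phi(\bfp_1,\bfp_2)$ is exactly the projective point of the coefficients of $\res_Y(P_1,P_2)$, so $\frakh(\Phi(\bfp_1,\bfp_2))=\frakh(\res_Y(P_1,P_2))$. Proposition~\ref{prop:heightproperty2} then gives
$$
\frakh(\res_Y(P_1,P_2))\leq n_2\frakh(\bfp_1)+n_1\frakh(\bfp_2)+\frakh(\Phi)=\deg(P_2,Y)\frakh(P_1)+\deg(P_1,Y)\frakh(P_2).
$$
The only substantive point is the bihomogeneity of each $R_\lambda$ in the two groups of coefficients, which is immediate from the row structure of the Sylvester matrix and the multilinearity of the determinant, so I do not expect any real obstacle.
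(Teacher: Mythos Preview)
Your proof is correct and follows essentially the same approach as the paper: both express the coefficients of $\res_Y(P_1,P_2)$ as integer-coefficient polynomials that are bihomogeneous of bidegree $(n_2,n_1)$ in the coefficients of $P_1$ and $P_2$ (via the Sylvester determinant), and then apply Proposition~\ref{prop:heightproperty2} with $\frakh(\Phi)=0$. The only cosmetic difference is that you package the argument explicitly as a morphism $\Phi$ and state $\frakh(\Phi)=0$, whereas the paper leaves this implicit.
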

\begin{proof}
The assertion is clear if ${\rm res}_{Y}(P_1,P_2)=0$. In the following, suppose that ${\rm res}_{Y}(P_1,P_2)\neq 0$.
Assume $\deg(P_i,Y)=n_i,\,i=1,2$. Denote $\VX=(X_1,\dots,X_m)$ and $\VX^\bfd=\prod_{i=1}^m X_i^{d_i}$ for $\bfd=(d_1,\dots,d_m)\in \Z^m$. Write
$$
P_1=\sum_{i=0}^{n_1} a_i(\VX)Y^i,\quad P_2=\sum_{i=0}^{n_2} b_i(\VX)Y^i
$$
where $a_i(\VX),b_j(\VX)\in \overk[\VX]$.
Then
$$
{\rm res}_Y(P_1,P_2)=
\begin{vmatrix}
a_{n_1}   & a_{n_1-1}  &\cdots  & a_0   \\
         & \ddots    &  \ddots       &   & \ddots   \\
         &           &  a_{n_1}    & a_{n_1-1}   &\cdots  & a_0 \\
b_{n_2}   & b_{n_2-1}    &\cdots & b_0   \\
         & \ddots    &    \ddots      &  & \ddots     \\
         &           &  b_{n_2}    & b_{n_2-1}   &\cdots  & b_0 \\
\end{vmatrix}.
$$
Denote by $C_1,C_2$ the points in $\bP^s(\overk)$ whose coordinates are the coefficients in $\VX,Y$ of $P_1$ and $P_2$ respectively, where $s$ is the maximum of the numbers of terms of $P_1$ and $P_2$.
By the definitions of determinant, we can write
$$
{\rm res}_Y(P_1,P_2)=\sum_{\bfd} \left(\sum_{j=1}^{\ell_\bfd}\beta_{\bfd,j} \bfm_{\bfd,j}\bfn_{\bfd,j} \right)\VX^{\bfd}
$$
where $\beta_{\bfd,j}, \ell_\bfd\in \Z, \ell_\bfd\geq 0$, $\bfm_{\bfd,j}$ is a monomial in the coordinates of $C_1$ of total degree $n_2$ and $\bfn_{\bfd,j}$ is a monomial in the coordinates of $C_2$ of total degree $n_1$. Viewing $\sum_{j=1}^{\ell_\bfd}\beta_{\bfd,j} \bfm_{\bfd,j}\bfn_{\bfd,j}$ as a polynomial homogeneous in the coordinates of $C_1$ of degree $n_2$ and homogeneous in the coordinates of $C_2$ of degree $n_1$ with coefficients in $\Z$, by Proposition~\ref{prop:heightproperty2}, one has that
\begin{align*}
 \frakh(\res_Y(P_1,P_2))&=\frakh\left(\left(\dots:\sum_{j=1}^{\ell_\bfd}\beta_{\bfd,j} \bfm_{\bfd,j}\bfn_{\bfd,j}:\cdots\right)\right)\\
 &\leq n_2\frakh(C_1)+n_1\frakh(C_2)=n_2\frakh(P_1)+n_1\frakh(P_2).
\end{align*}
\end{proof}
\begin{corollary}
\label{cor:subsitution}
Suppose that $P\in \overk[X,Y]$ and $a,b\in \overk$. Then $$\frakh(P(X+a,Y+b))\leq \frakh(P)+\deg(P,X)\frakh(a)+\deg(P,Y)\frakh(b).$$
\end{corollary}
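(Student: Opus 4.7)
The plan is to recognize the corollary as a direct instance of Proposition~\ref{prop:heightproperty2} with $r=3$. Writing $P=\sum_{i,j}c_{i,j}X^iY^j$ and expanding the binomial shifts,
$$P(X+a,Y+b)=\sum_{k,\ell}\left(\sum_{i\geq k,\,j\geq \ell}\binom{i}{k}\binom{j}{\ell}c_{i,j}\,a^{i-k}b^{j-\ell}\right)X^k Y^\ell,$$
so that the coefficient of $X^kY^\ell$ in the shifted polynomial is linear in the $c_{i,j}$ and is a polynomial of degree at most $\deg(P,X)-k$ in $a$ and at most $\deg(P,Y)-\ell$ in $b$.

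The next step is to homogenize. I would take $\bfa_1$ to be the projective point whose coordinates are the $c_{i,j}$, together with $\bfa_2=(1:a)$ and $\bfa_3=(1:b)$ in $\bP^1(\overk)$. Padding each monomial with the appropriate powers of the homogenizing coordinates raises every coefficient of $P(X+a,Y+b)$ to the \emph{common} tridegree $(1,\deg(P,X),\deg(P,Y))$. The resulting polynomials assemble into a morphism $\Phi$ whose value at $(\bfa_1,\bfa_2,\bfa_3)$ is exactly the projective point of coefficients of $P(X+a,Y+b)$. Proposition~\ref{prop:heightproperty2} then yields
$$\frakh(P(X+a,Y+b))\leq \frakh(P)+\deg(P,X)\frakh(a)+\deg(P,Y)\frakh(b)+\frakh(\Phi).$$

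To finish, I would note that every coefficient appearing in $\Phi$ is a signed binomial coefficient, hence lies in $\Q\subseteq k$; since any nonzero element of the constant field $k$ has order $0$ at every place of $L$ over $k$, the projective point formed by these coefficients has height zero, so $\frakh(\Phi)=0$. The only real obstacle in the argument is the homogenization bookkeeping: one must pad each component $\phi_{k,\ell}$ to precisely the prescribed tridegree in the three factors so that $\Phi$ is a genuine morphism in the sense of Proposition~\ref{prop:heightproperty2}; once this is arranged, the bound falls out immediately.
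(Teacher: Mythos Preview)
Your proposal is correct and follows essentially the same route as the paper: both expand $P(X+a,Y+b)$ via the binomial theorem, observe that each coefficient is homogeneous of degree $1$ in the $c_{i,j}$, of degree $\deg(P,X)$ in $(1,a)$, and of degree $\deg(P,Y)$ in $(1,b)$ with integer coefficients, and then invoke Proposition~\ref{prop:heightproperty2} with $r=3$. The only cosmetic slip is that the binomial coefficients appearing are not signed, but this is immaterial since they lie in $\Z\subset k$ either way.
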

\begin{proof}
 Denote $d_1=\deg(P,X)$ and $d_2=\deg(P,Y)$ and write
 $$P=\sum_{i=0}^{d_1}\sum_{j=0}^{d_2}c_{i,j}X^i Y^j, \,\,c_{i,j}\in\overk.
 $$
 An easy calculation yields that
\begin{align*}
   P(X+a,Y+b)=\sum_{l_1=0}^{d_1}\sum_{l_2=0}^{d_2}\left(\sum_{i=l_1}^{d_1}\sum_{j=l_2}^{d_2}\binom{i}{l_1}\binom{j}{l_2}c_{i,j}a^{i-l_1}b^{j-l_2}\right)X^{l_1}Y^{l_2}.
\end{align*}
Note that $\sum_{i=l_1}^{d_1}\sum_{j=l_2}^{d_2}\binom{i}{l_1}\binom{j}{l_2}c_{i,j}a^{i-l_1}b^{j-l_2}$ is homogeneous in $c_{i,j}$ of degree 1, homogeneous in $1,a$ of degree $d_1$ and homogeneous in $1,b$ of degree $d_2$ with coefficients in $\Z$. By Proposition~\ref{prop:heightproperty2},
\begin{align*}
  \frakh(P(X+a,Y+b))&=\frakh\left(\left(\dots:\sum_{i=l_1}^{d_1}\sum_{j=l_2}^{d_2}\binom{i}{l_1}\binom{j}{l_2}c_{i,j}a^{i-l_1}b^{j-l_2}:\cdots\right)\right)\\
  &\leq \frakh(P)+d_1\frakh(a)+d_2\frakh(b).
\end{align*}
\end{proof}

When $\frakh$ is an absolutely logarithmic height defined in an algebraic number field, the results in Corollaries~\ref{cor:resultant} and \ref{cor:subsitution} with error terms have already been proved in \cite{Abouzaid}.

\begin{corollary}
 \label{cor:solutionsoflinearsystem} Suppose that $M=(a_{i,j})$ is an $l\times m$ matrix with $a_{i,j}\in \overk$ and $\frakh(a_{i,j})\leq \kappa$. Assume that the linear system $M\bfx=0$ has a nonzero solution. Then $M\bfx=0$ has a nonzero solution $\bfa$ with $\frakh(\bfa)\leq r^2(r+1) \kappa$, where $\bfa$ is viewed as a point in $\bP^{m-1}(\overk)$ and $r=\rank(M)$.
 \end{corollary}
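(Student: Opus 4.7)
The plan is to construct a nonzero solution $\bfa$ explicitly via Cramer's rule so that every coordinate is either zero or an $r\times r$ minor of $M$, and then to bound the heights of these minors using the corollaries already proved.

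First, since $\rank(M)=r$, after reordering the rows and columns I may assume that the leading $r\times r$ submatrix $M_0$ is invertible and that the remaining $l-r$ rows of $M$ are $k$-linear combinations of the first $r$. The existence of a nontrivial solution forces $m>r$ (otherwise $M$ would have trivial nullspace), so I fix an index $j\in\{r+1,\dots,m\}$. For each $k=1,\dots,r$, let $M_k^{(j)}$ be the $r\times r$ submatrix of $M$ obtained from $M_0$ by replacing its $k$-th column with the first $r$ entries of the $j$-th column of $M$. I then define
$$
a_k=-\det(M_k^{(j)})\quad(1\leq k\leq r),\qquad a_j=\det(M_0),\qquad a_i=0\text{ otherwise},
$$
and set $\bfa=(a_1,\dots,a_m)$. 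Cramer's rule shows that $\bfa$ satisfies the first $r$ equations of $M\bfx=0$, and because the remaining rows are $k$-linear combinations of the first $r$, $\bfa$ solves the whole system. Since $a_j=\det(M_0)\neq 0$, $\bfa$ is nonzero.

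For the height estimate, each nonzero coordinate of $\bfa$ is, up to sign, the determinant of an $r\times r$ submatrix of $M$. Viewed as a polynomial in its $r^2$ entries this determinant lies in $\Z[X_{1,1},\dots,X_{r,r}]$ with degree $1$ in each individual variable, so Corollary~\ref{cor:heightproperty11}(1) gives $\frakh(a_k)\leq r^2\kappa$ for every nonzero $a_k$, while the zero coordinates contribute $\frakh(0)=0$. Since at most $r+1$ of the $a_i$ are nonzero, Corollary~\ref{cor:heightproperty12}(2) applied to $\bfa\in\bP^{m-1}(\overk)$ yields
$$
\frakh(\bfa)\leq \sum_{i=1}^m \frakh(a_i)\leq (r+1)\,r^2\kappa = r^2(r+1)\kappa,
$$
as desired.

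No serious obstacle is anticipated: the argument reduces to classical linear algebra (Cramer's rule plus reduction to an independent subsystem) combined with the height-multiplicativity properties already established. The only points needing care are ensuring that the constructed solution is genuinely nonzero (guaranteed by $\det(M_0)\neq 0$) and that the $l-r$ discarded equations are redundant (immediate from the choice of the $r$ independent rows).
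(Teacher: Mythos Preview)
Your proof is correct and follows essentially the same approach as the paper: reduce to $r$ independent rows, build a nonzero solution whose nonzero coordinates are $r\times r$ minors via Cramer's rule, and bound the projective height using the established corollaries. The only cosmetic difference is in assembling the height bound: the paper views the $(r{+}1)$-tuple of minors as a single morphism homogeneous of degree $r$ in the $r(r{+}1)$ relevant entries and applies Proposition~\ref{prop:heightproperty2} directly, whereas you bound each minor's height individually by $r^2\kappa$ via Corollary~\ref{cor:heightproperty11} and then sum using Corollary~\ref{cor:heightproperty12}; both routes land on $r^2(r{+}1)\kappa$.
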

 \begin{proof}
 The assertion is clear in the case $M=0$. Suppose that $M\neq 0$ and $r=\rank(M)$. Without loss of generality, assume that the first $r$-rows of $M$ are linearly independent over $\overk$ and denote by $\tilde{M}$ the $r\times m$ matrix formed by those rows. Then $M$ and $\tilde{M}$ have the same solution space and thus there is no harm to replace $M$ by $\tilde{M}$. Since $\tilde{M}\bfx=0$ has a nonzero solution, $r<m$. Without loss of generality, we further assume that the matrix $M_1$ formed by the first $r$-columns of $\tilde{M}$ is invertible. Denote by $\bfb$ the $(r+1)$-th column of $\tilde{M}$. Using Cramel's rule, $(D_1/\det(M_1),\dots,D_r/\det(M_1))^t$ is the solution of $M_1\bfx=-\bfb$, where $D_i$ is the determinant of the matrix obtained by replacing the $i$-th column of $M_1$ by $-\bfb$, and $(\cdot)^t$ denotes the transpose of a vector. Set
 $$\bfa=(D_1,\dots,D_r,\det(M_1),\underbrace{0,\dots,0}_{m-r-1})^t.$$
 Then $\bfa$ is a solution of $\tilde{M}\bfx=0$. Note that $D_i$ and $\det(M_1)$ are homogeneous in $a_{1,1},\dots,a_{r,r+1}$ of degree $r$. By Proposition~\ref{prop:heightproperty2}, $\frakh(\bfa)\leq r\frakh((a_{1,1}:\dots:a_{r,r+1}))$. By Corollary~\ref{cor:heightproperty12},
 $$\frakh((a_{1,1}:\dots:a_{r,r+1}))\leq \sum_{i,j}\frakh(a_{i,j})\leq r(r+1)\kappa.$$
 So $\frakh(\bfa)\leq r^2(r+1)\kappa$.
 \end{proof}
Note that all valuations constructed by places of $L$ over $k$ are non-archimedean (see page 62 of \cite{lang} for the construction). By Proposition 2.4 on page 57 in \cite{lang} with $s=0$, one sees that if $G$ and $H$ are polynomials in $\overk[X_1,\dots,X_m]$, then
$$\frakh(GH)=\frakh(G)+\frakh(H),$$
from which we have the following proposition.

\begin{prop}\label{prop:heightproperty3}
\begin{enumerate}
\item Suppose that $G,H \in \overk[X_1,\dots,X_m]$ and $G$ divides $H$. Then $\frakh(G)\leq \frakh(H)$.
\item Suppose that $H$ is a nonzero polynomial in $\overk[X]$ and $a$ is a zero of $H$ in $\overk$. Then $\frakh(a)\leq \frakh(H)$.
\end{enumerate}
\end{prop}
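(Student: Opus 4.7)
The plan is to deduce both items directly from the multiplicative identity $\frakh(GH)=\frakh(G)+\frakh(H)$ stated immediately before the proposition (a consequence of Proposition 2.4 on page 57 of \cite{lang} applied with $s=0$), combined with the elementary observation that $\frakh$ is non-negative on projective points. Non-negativity is seen from Definition~\ref{def:height1} by rescaling any representative $(a_0:\dots:a_m)$ so that some coordinate equals $1$: at every place $\frakp$ we then have $\max_i\{-\ord_\frakp(a_i)\}\geq 0$, and therefore $\frakh(\bfa)\geq 0$.

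For part~1, I would write $H=GQ$ with $Q\in\overk[X_1,\dots,X_m]$, apply the multiplicative identity to get $\frakh(H)=\frakh(G)+\frakh(Q)$, and conclude $\frakh(G)\leq\frakh(H)$ since $\frakh(Q)\geq 0$. The monomial clause of Definition~\ref{def:height2} (which sets $\frakh(M)=0$ when $M$ has a single term) is already compatible with the multiplicative identity in every degenerate case, so no additional bookkeeping is needed.

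For part~2, I would apply part~1 to the linear divisor $G=X-a$ of $H$ in $\overk[X]$, obtaining $\frakh(X-a)\leq\frakh(H)$, and then identify $\frakh(X-a)$ with $\frakh(a)$. When $a=0$, the polynomial $X-a=X$ is a monomial, so $\frakh(X-a)=0$; and $\frakh(a)=\frakh((1:0))=0$ as well under the convention $\ord_\frakp(0)=\infty$. When $a\neq 0$, the polynomial $X-a$ has two terms with coefficient vector $(1:-a)$, so $\frakh(X-a)=\frakh((1:-a))$; since $-1$ is a unit at every place, $\ord_\frakp(-a)=\ord_\frakp(a)$ for all $\frakp$, whence $\frakh((1:-a))=\frakh((1:a))=\frakh(a)$.

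I do not expect a real obstacle: once the multiplicative identity $\frakh(GH)=\frakh(G)+\frakh(H)$ is taken as input, each part is a one-line computation, and the only care required is the minor bookkeeping around the monomial convention in Definition~\ref{def:height2} and the edge case $a=0$ in part~2.
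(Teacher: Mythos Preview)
Your proposal is correct and follows essentially the same route as the paper: the paper also invokes the multiplicative identity $\frakh(GH)=\frakh(G)+\frakh(H)$ (stated just before the proposition) for part~1, and for part~2 observes that $X-a$ divides $H$ with $\frakh(X-a)=\frakh(a)$. You simply supply the details the paper leaves implicit (non-negativity of $\frakh$ and the edge case $a=0$).
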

\begin{proof}
The first assertion is clear. The second one follows from the facts that $X-a$ divides $H$ and $\frakh(X-a)=\frakh(a)$.
\end{proof}

The following result is claimed on page 13 of \cite{Serre}. We present a proof here for completeness.
\begin{prop}
\label{prop:heightproperty4}
Suppose that $\bfa=(a_0:\dots:a_n)\in \bP^n(\overk)$.
Let $\bfb$ be a point in $\bP^{{{d+n \choose n}-1}}(\overk)$ with all monomials in $a_0,\dots,a_n$ of total degree $d$ as coordinates. Then
$\frakh(\bfb)=d\frakh(\bfa)$.
\end{prop}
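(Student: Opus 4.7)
The plan is to reduce the desired equality to a pointwise identity at each place. Let $L$ be a finite extension of $k(t)$ containing $a_0,\dots,a_n$. For each place $\frakp$ of $L$ over $k$, I will establish
\[
\max_{j_0+\cdots+j_n=d}\bigl(-\ord_\frakp(a_0^{j_0}\cdots a_n^{j_n})\bigr) \;=\; d\max_{0\leq i\leq n}\bigl(-\ord_\frakp(a_i)\bigr),
\]
and then summing over all such $\frakp$ and dividing by $[L:k(t)]$ will deliver the proposition directly from Definition~\ref{def:height1}.

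To verify this pointwise identity, fix $\frakp$ and set $\mu_i := -\ord_\frakp(a_i)$, with the convention $\mu_i = -\infty$ when $a_i = 0$. Since $\bfa$ is a projective point, at least one $a_i$ is nonzero, so $M := \max_i \mu_i$ is a well-defined integer, attained at some index $i^*$ with $a_{i^*}\neq 0$. For any exponent vector $(j_0,\dots,j_n)$ with $\sum_i j_i = d$, either $a_i = 0$ for some $i$ with $j_i > 0$ (so the monomial vanishes and its $-\ord_\frakp$ equals $-\infty$), or every active coordinate is nonzero and the additivity of $\ord_\frakp$ on products gives
\[
-\ord_\frakp(a_0^{j_0}\cdots a_n^{j_n}) \;=\; \sum_i j_i \mu_i \;\leq\; dM.
\]
Equality is attained by the pure-power monomial $a_{i^*}^d$, which appears among the coordinates of $\bfb$, so the maximum over degree-$d$ monomials equals exactly $dM$.

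I do not anticipate any serious obstacle; the core observation is that the maximum of a linear form over the simplex $\{(j_0,\dots,j_n)\in\Z_{\geq 0}^{n+1} : \sum j_i = d\}$ is attained at a vertex. The only minor care needed concerns the $-\infty$ values coming from vanishing $a_i$'s, but these never interfere because the extremal monomial $a_{i^*}^d$ uses only the nonzero coordinate $a_{i^*}$. As a sanity check, the inequality $\frakh(\bfb)\leq d\frakh(\bfa)$ also follows at once from Proposition~\ref{prop:heightproperty2} applied to the Veronese morphism $\Phi$ whose components are all unit-coefficient monomials of total degree $d$ in $X_0,\dots,X_n$, since then $\frakh(\Phi)=0$.
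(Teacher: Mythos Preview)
Your proof is correct and follows essentially the same approach as the paper's. The paper invokes Proposition~\ref{prop:heightproperty2} for the inequality $\frakh(\bfb)\leq d\frakh(\bfa)$ and then observes, exactly as you do, that among the degree-$d$ monomials the pure powers $a_i^d$ already realize $d\max_i\{-\ord_\frakp(a_i)\}$ at each place, giving the reverse inequality; your version simply carries out both directions directly at the place level and is a bit more careful about vanishing coordinates.
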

\begin{proof}
Due to Proposition~\ref{prop:heightproperty2}, one sees that $\frakh(\bfb)\leq d\frakh(\bfa)$. It remains to prove the converse. Let $L=k(t,a_0,\dots,a_n)$. For each place $\frakp$ of $L$ over $k$, one has that
\begin{align*}
   \max \{-\ord_\frakp(a_0^{s_0}\dots a_n^{s_n}) \mid s_i\geq 0, s_0+\dots+s_n=d\}&\geq \max_{i=0}^n\{-\ord_\frakp(a_i^d)\}\\
   &=d\max_{i=0}^n \{-\ord_\frakp(a_i)\}.
\end{align*}
By definition, $\frakh(\bfb)\geq d\frakh(\bfa)$. So $\frakh(\bfb)=d\frakh(\bfa)$.
\end{proof}
\section{The Riemann-Roch spaces}
\label{sec:Riemann-Roch}
Throughout this section, let $K$ be an algebraically closed field of characteristic zero with an absolute logarithmic height $\frakh$ which satisfies the following conditions:
\begin{enumerate}
\item [(A1)] Propositions~\ref{prop:heightproperty1} and~\ref{prop:heightproperty2} hold for $\frakh$. Consequently, Corollaries~\ref{cor:heightproperty11},\ref{cor:heightproperty12},\ref{cor:resultant}, \ref{cor:subsitution} and \ref{cor:solutionsoflinearsystem} hold for $\frakh$.
\item [(A2)]Propositions~\ref{prop:heightproperty3} and~\ref{prop:heightproperty4} hold for $\frakh$.
\end{enumerate}
\begin{remark}
\label{rem:constant} Under the assumption that Corollary~\ref{cor:heightproperty11} holds for the height $\frakh$, one has that $\frakh(a)=0$ for all $a\in \Q$. To see this, we first have that $\frakh(m)=0$ for all $m\in \Z$. Then for $m_1,m_2\in \Z \setminus \{0\},$
$$
   \frakh\left(m_1/m_2\right)\leq \frakh(m_1)+\frakh(1/m_2)=\frakh(m_1)+\frakh(m_2)=0.
$$
So $\frakh(a)=0$ for all $a\in \Q$.
\end{remark}

Let $L$ be an algebraic function field of one variable over $K$ and $D$ a divisor of $L$. Suppose that $\calL_K(D)\neq \{0\}$. In this section, we are going to give bounds for the degrees and height of a certain nonzero element in $\calL_K(D)$. Let us start with two lemmas.
\begin{lemma}
\label{lm:multiplicationofpowerseries}
Let $f_i=\sum_{s\geq 0} a_{i,s}z^s\in K[[z]]$ for $i=1,\dots,r$. Suppose that $\frakh(a_{i,0})\leq \frakh(a_{i,1})\leq\cdots$ for all $i$. Write $\prod_{i=1}^r f_i=\sum_{s\geq 0} c_s z^{s}$ with $c_s\in K$. Then $$\frakh(c_s)\leq \sum_{i=1}^r (s+1)\frakh(a_{i,s}).$$
\end{lemma}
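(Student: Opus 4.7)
The plan is to realize $c_s$ as the image of $(\bfa_1,\dots,\bfa_r)$ under a carefully chosen multilinear morphism and then apply Proposition~\ref{prop:heightproperty2}. The naive bound coming from the triangle inequality alone is too weak (it gives a factor $\binom{s+r-1}{r-1}$ instead of the advertised $s+1$), so the key point is to exploit multi-homogeneity rather than treat $c_s$ as a plain sum of monomials.

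Concretely, I would set $\bfa_i=(1:a_{i,0}:a_{i,1}:\dots:a_{i,s})\in\bP^{s+1}(\overk)$ for each $i$, and define the morphism
\[
\Phi=(\phi_0:\phi_1):\bP^{s+1}(\overk)^r\longrightarrow \bP^1(\overk),
\]
with
\[
\phi_0=\prod_{i=1}^r X_{i,0},\qquad \phi_1=\sum_{\substack{s_1+\dots+s_r=s\\ s_i\ge 0}}\prod_{i=1}^r X_{i,s_i+1}.
\]
Both $\phi_0$ and $\phi_1$ are homogeneous of degree $d_i=1$ in the block $(X_{i,0},\dots,X_{i,s+1})$ for every $i$, so $\Phi$ fits the hypotheses of Proposition~\ref{prop:heightproperty2}. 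Evaluating at $(\bfa_1,\dots,\bfa_r)$ gives $\phi_0=1$ and $\phi_1=c_s$, so $\Phi(\bfa_1,\dots,\bfa_r)=(1:c_s)$ and $\frakh(\Phi(\bfa_1,\dots,\bfa_r))=\frakh(c_s)$. The coefficients of $\phi_0,\phi_1$ are in $\{0,1\}\subset\Q$, so by Remark~\ref{rem:constant} we have $\frakh(\Phi)=0$.

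It remains to bound each $\frakh(\bfa_i)$. By Corollary~\ref{cor:heightproperty12}(2),
\[
\frakh(\bfa_i)\le \frakh(1)+\sum_{j=0}^{s}\frakh(a_{i,j})=\sum_{j=0}^s \frakh(a_{i,j}),
\]
and the monotonicity hypothesis $\frakh(a_{i,0})\le\frakh(a_{i,1})\le\dots\le\frakh(a_{i,s})$ bounds this by $(s+1)\frakh(a_{i,s})$. Plugging everything into Proposition~\ref{prop:heightproperty2} with $d_i=1$ yields
\[
\frakh(c_s)\le \sum_{i=1}^r \frakh(\bfa_i)+\frakh(\Phi)\le \sum_{i=1}^r (s+1)\frakh(a_{i,s}),
\]
which is the desired inequality. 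The only subtlety is packaging $c_s$ into a multi-homogeneous-of-degree-one form so that Proposition~\ref{prop:heightproperty2} contributes the factor $1$ per block rather than the naive number-of-monomials bound; once the morphism $\Phi$ is set up, the rest is bookkeeping.
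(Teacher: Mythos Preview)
Your proof is correct and lands on exactly the same intermediate inequality as the paper, namely $\frakh(c_s)\le \sum_{i=1}^r\sum_{j=0}^s \frakh(a_{i,j})$, followed by the monotonicity hypothesis. The paper reaches this in one line by invoking Corollary~\ref{cor:heightproperty11}(1) directly: $c_s$ is a polynomial in the $r(s+1)$ variables $a_{i,j}$ with coefficients in $\Q$ and degree at most~$1$ in each variable, so that corollary immediately gives $\frakh(c_s)\le \sum_{i,j}\frakh(a_{i,j})$. Your route unwinds this through Proposition~\ref{prop:heightproperty2} (with block-degree~$1$) together with Corollary~\ref{cor:heightproperty12}(2), which is precisely how Corollary~\ref{cor:heightproperty11} itself is proved; so the two arguments are the same in substance, yours being a bit more elaborate. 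Your remark that the ``naive triangle inequality'' is too weak is accurate if one iterates $\frakh(a+b)\le\frakh(a)+\frakh(b)$ term by term, but Corollary~\ref{cor:heightproperty11}(1) is already the correct multilinear refinement, so no additional packaging is needed.
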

\begin{proof}
One can easily check that
$$
  c_s=\sum_{\substack{0\leq l_1,\dots,l_r\leq s,\\ l_1+\dots+l_r=s}} a_{1,l_1} \cdots a_{r,l_r}.
$$
By Corollary~\ref{cor:heightproperty11}, one has that $$\frakh(c_s)\leq \sum_{i=1}^r \sum_{j=0}^s\frakh(a_{i,j})\leq \sum_{i=1}^r (s+1)\frakh(a_{i,s}).$$
\end{proof}
\begin{lemma}
\label{lm:formalpowerseriessolutions}
Suppose that $Q\in K[z,Y]$ and $f=\sum_{i\geq 0} a_iz^i \in K[[z]]$ with $Q(z,f)=0$. Then for $i\geq 0$,
$$\frakh(a_i)\leq (\deg(Q,Y)+1)^{i}\frakh(Q).$$
\end{lemma}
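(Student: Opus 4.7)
The plan is to induct on $i$. The base case $i = 0$ is immediate: $a_0$ is a root of $Q(0, Y) \in K[Y]$, a polynomial of degree at most $n := \deg(Q,Y)$ whose coefficients form a sub-tuple of those of $Q$, so $\frakh(Q(0,Y)) \leq \frakh(Q)$ (this sub-tuple bound itself follows from Proposition~\ref{prop:heightproperty2} applied to the coordinate-projection morphism, whose own height is $0$), and Proposition~\ref{prop:heightproperty3}(2) gives $\frakh(a_0) \leq \frakh(Q) = (n+1)^0 \frakh(Q)$.

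For the inductive step, write $Q(z,Y) = \sum_{s,j} c_{s,j} z^s Y^j$ and $T_{i-1}(z) = a_0 + a_1 z + \cdots + a_{i-1} z^{i-1}$. First I would extract the coefficient of $z^i$ from $Q(z,f(z))=0$. Since $a_i$ can appear in $[z^t]f^j$ only when $t = i$, and then only linearly as $j a_0^{j-1} a_i$, the resulting identity takes the form
$$Q_Y(0, a_0)\, a_i + R_i(a_0, \ldots, a_{i-1}; \{c_{s,j}\}) = 0,$$
where $R_i$ has total degree at most $n$ in the $a_k$'s, degree $1$ in the $c_{s,j}$'s, and nonnegative integer coefficients. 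Thus $a_i$ is a root of the (at most linear) polynomial $P_i(Y) := Q_Y(0,a_0)\,Y + R_i \in K[Y]$.

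The key estimate is then the bound $\frakh(P_i) = \frakh((Q_Y(0,a_0) : R_i)) \leq (n+1)^i\,\frakh(Q)$. I would obtain this by viewing $(Q_Y(0,a_0) : R_i)$ as the image of $((1: a_0 : \cdots : a_{i-1}),\,\bfc)$ under a morphism $\Phi : \bP^i(K) \times \bP^N(K) \to \bP^1(K)$, where $\bfc$ is the point whose coordinates are the $c_{s,j}$'s. After homogenization, both components of $\Phi$ can be chosen as polynomials of total degree $n$ in the first group and degree $1$ in the second, all with integer coefficients (so $\frakh(\Phi) = 0$ by Remark~\ref{rem:constant}). Proposition~\ref{prop:heightproperty2} then gives
$$\frakh(P_i) \leq n \cdot \frakh((1: a_0: \cdots: a_{i-1})) + \frakh(\bfc) \leq n \sum_{k=0}^{i-1}\frakh(a_k) + \frakh(Q),$$
the second inequality by Corollary~\ref{cor:heightproperty12}(2). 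Plugging in the inductive hypothesis and summing the geometric series yields $n \cdot \frac{(n+1)^i-1}{n}\frakh(Q) + \frakh(Q) = (n+1)^i\frakh(Q)$. Provided $P_i \not\equiv 0$, Proposition~\ref{prop:heightproperty3}(2) closes the induction.

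The main obstacle is the degenerate case $P_i \equiv 0$, which arises precisely when $a_0$ is a multiple root of $Q(0,Y)$ and $R_i$ also vanishes, so the $[z^i]$-equation puts no constraint on $a_i$. I would handle this by passing to the first higher-order coefficient equation $[z^{i+r}]Q(z,f(z)) = 0$ in which $a_i$ reappears with a nonzero coefficient. The identity $\partial_{a_i}Q(z,f(z)) = Q_Y(z,f(z))\, z^i$ shows that this coefficient equals $[z^r]Q_Y(z,f(z))$, so the correct choice is $r = \ord_z Q_Y(z,f(z))$, which is finite whenever the irreducible factor of $Q$ having $f$ as a root is separable in $Y$. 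An analogous but notationally heavier application of Proposition~\ref{prop:heightproperty2} then recovers the same exponential bound $(n+1)^i\frakh(Q)$.
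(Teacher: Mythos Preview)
Your coefficient-extraction argument differs from the paper's and is fine when $Q_Y(0,a_0)\neq 0$: the linear relation $Q_Y(0,a_0)\,a_i+R_i=0$ together with Propositions~\ref{prop:heightproperty2} and~\ref{prop:heightproperty3}(2) and the geometric-series computation give exactly $(n+1)^i\frakh(Q)$. The degenerate case, however, is not closed. With $r=\ord_z Q_Y(z,f)$, the relation $[z^{i+r}]Q(z,f)=0$ is, for $1\le i\le r$, in general \emph{nonlinear} in $a_i$ (already for $Q=Y^2-z^2$ one has $a_0=0$, $r=1$, and the equation for $a_1$ is $a_1^2-1=0$), and as a universal polynomial identity it contains $a_{i+1},\dots,a_{i+r}$; even for $i>r$ those later coefficients are present and only drop out because their coefficients $[z^{r-s}]Q_Y(z,f)$ vanish \emph{after} substituting the actual $a_0,\dots,a_r$, so a direct height bound via Proposition~\ref{prop:heightproperty2} is not straightforward and your ``analogous but heavier'' step is doing real work you have not shown. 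You also implicitly need $z\nmid Q$ at the base step, and $Q_Y(z,f)\not\equiv 0$ for $r$ to be finite.

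The paper avoids all of this by an iterated substitution rather than coefficient extraction. After dividing out any power of $z$ (which leaves $\frakh(Q)$ unchanged) one has $Q(0,Y)\neq 0$ and $\frakh(a_0)\le\frakh(Q)$. Now set $Q_1(z,Y)=Q(z,a_0+zY)$; then $\frakh(Q_1)=\frakh(Q(z,a_0+Y))\le\frakh(Q)+n\,\frakh(a_0)\le(n+1)\frakh(Q)$ by Corollary~\ref{cor:subsitution}, and $a_1+a_2z+\cdots$ solves $Q_1(z,Y)=0$. Dividing $Q_1$ by a power of $z$ again makes $Q_1(0,Y)\neq 0$, so $\frakh(a_1)\le\frakh(Q_1)\le(n+1)\frakh(Q)$. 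Iterating with $Q_{k+1}=Q_k(z,a_k+zY)$ gives $\frakh(a_i)\le(n+1)^i\frakh(Q)$ uniformly; the multiple-root situation is absorbed by the division step at each stage, so no separate degenerate-case analysis is needed.
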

\begin{proof}
Denote $n=\deg(Q,Y)$.
Dividing $Q$ by some power of $z$ if necessary, we may assume that $z \nmid Q$. Note that this operation does not change the height of $Q$. It is easy to verify that $\frakh(Q(0,Y))\leq\frakh(Q)$. Since $Q(0,Y)\neq 0$ and $Q(0,a_0)=0$, by Proposition~\ref{prop:heightproperty3}, $\frakh(a_0)\leq \frakh(Q(0,Y))\leq \frakh(Q)$. Now set $Q_1=Q(z,a_0+zY)$.  Then by Corollary~\ref{cor:subsitution},
$$
  \frakh(Q_1)=\frakh(Q(z,a_0+Y))\leq \frakh(Q)+n\frakh(a_0)\leq \frakh(Q)+n\frakh(Q)\leq (n+1)\frakh(Q).
$$
Again, we may assume that $z\nmid Q_1$. One sees that $a_1+a_2z+\cdots$ is a solution of $Q_1(z,Y)=0$. Using a similar argument, one has that $\frakh(a_1)\leq \frakh(Q_1)\leq (n+1)\frakh(Q).$ Set $Q_{i+1}=Q_i(z,a_i+zY)$ for $i=1,2,\dots$. Repeating the previous process yields that $\frakh(a_i)\leq (n+1)^i\frakh(Q)$.
\end{proof}

 Now suppose that $L=K(x,y)$ where $x$ is transcendental over $K$ and $[L:K(x)]<\infty$. Furthermore, assume that $P\in K[X,Y]$ is a nonzero irreducible polynomial such that $P(x,y)=0$. Let us first adapt a result given in \cite{Schmidt:Constructionandestimationofbasesinfunctionfields} on the degree bound for a basis of a Riemann-Roch space. For this, we need to recall some notations introduced in \cite{Schmidt:Constructionandestimationofbasesinfunctionfields}. Write
\begin{equation}
\label{eq:algebraiccurves}
   P(X,Y)=A_0(X)Y^n+A_1(X)Y^{n-1}+\dots+A_n(X),
\end{equation}
where $A_i\in K[X]$, $A_0\neq 0$ and $\deg (P,Y)=n$. Set
$$y_1=1,\,\, y_2=A_0(x)y,\,\,\dots,\,\,y_n=A_0(x)y^{n-1}+\dots+A_{n-2}(x)y.$$
Then the $y_i$'s are integral over $K[x]$. To see this, note that $y_i$ is integral over $K[x]$ if and only if $y_i$ has no pole lying above $x-c$ for any $c\in K$. Suppose that $\frakp$ is a pole of some $y_i$ lying above $x-c$ for some $c\in K$. Then $\frakp$ is a pole of $y$ and thus a zero of $y^{-1}$. On the other hand,
$$y_i=A_0(x)y^{i-1}+\dots+A_{i-2}(x)y=-A_{i-1}(x)-A_{i}(x)y^{-1}-\dots-A_n(x)y^{-n+i-1}$$
which implies that $\frakp$ is not a pole of $y_i$, a contradiction. Let $\bfd(X)$ be the discriminant of $P$ with respect to $Y$. Let $D=\sum_{\frakp} d_{\frakp} \frakp$ be a divisor of $L$. It is clear that $\calL_K(D)=K$ if $D$ is a zero divisor. Assume that $D$ is not a zero divisor in the rest of this article. In what follows, we agree with the following notations.
{\begin{notation}
\label{notation:divisor}
\[
\begin{aligned}
&\delta_D=\sum_{\frakp}|d_\frakp|,\\
&\rho=\tdeg(P),\\[2mm]
& \bfq_D(X)=\bfd(X)^{\rho(\rho+\delta_D)}\prod_{\substack{\frakp \in \supp(D),\\\ord_\frakp(x)\geq 0}}(X-\pi_\frakp(x))^{\rho(\rho+\delta_D)},\\[2mm]
 &U=\supp(\divs(x)^{-})\cup \supp(\divs(y)^{-})\cup \supp(\divs(\bfq_D(x))),\\
 &h(D)=\max\{\frakh(P),\max\{\frakh(\pi_\frakp(x)) \mid \forall\,\frakp\in U\}\}.
\end{aligned}
\]
\end{notation}}
\begin{remark}
\label{rem:convention}
Note that $\supp(D)\subset U$. To see this, for $\frakp\in \supp(D)$ with $\ord_\frakp(x)\geq 0$, $\frakp$ is a zero of $x-\pi_\frakp(x)$ and thus a zero of $\bfq_D(x)$. So $\frakp\in \supp(\divs(\bfq_D(x)))\subset U$. For $\frakp\in \supp(D)$ with $\ord_\frakp(x)<0$, $\frakp\in \supp(\divs(x)^{-})$ which is obvious in $U$.
\end{remark}
In \cite{Schmidt:Constructionandestimationofbasesinfunctionfields}, when $K=\C$, Schmidt gave a degree bound for a basis of the Riemann-Roch space $\calL_{\C}(D)$ of a divisor $D$. Moreover, he proved that if $P$ has coefficients in a subfield $k$ of $\C$ and $D$ is defined over $k$ then there is a basis of $\calL_\C(D)$ whose elements are in $k(x,y)$.
After small modifications of Schmidt's results, we are able to prove that Schmidts's result on degree bound is also valid for the algebraically closed field $K$. Suppose that $k\subset K$ is an algebraically closed subfield such that $P(X,Y)\in k[X,Y]$. By Definition~\ref{def:divisors}, one sees that each divisor of $k(x,y)$ is also a divisor of $K(x,y)$. Suppose that $f\in k(x,y)\setminus\{0\}$. To avoid confusion, we denote by $\divs_k(f)$ the divisor of $f$ viewed as an element in $k(x,y)$ and by $\divs_K(f)$ the divisor of $f$ viewed as an element in $K(x,y)$.
\begin{lemma}
\label{lm:divisors}
 Suppose that $k\subset K$ is an algebraically closed subfield such that $P(X,Y)\in k[X,Y]$. Then for every $f\in k(x,y)\setminus\{0\}$, $\divs_k(f)=\divs_K(f)$.
\end{lemma}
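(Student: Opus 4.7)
The plan is to construct a canonical injection $\iota$ from the places of $L_k := k(x,y)$ into the places of $L_K := K(x,y)$, verify that $\iota$ preserves the orders of $f \in L_k$, and show that every $K$-place outside the image of $\iota$ contributes nothing to $\divs_K(f)$. Since $P \in k[X,Y]$, any canonical parametrization $(a + z^\mu, z^\nu(b_0 + b_1 z^{\ell_1} + \cdots))$ with $a, b_i \in k$ is equally a parametrization of $P = 0$ over $K$, and the irreducibility condition (coprimality of the exponents) is intrinsic. To see that two such $k$-parametrizations that become equivalent over $K$ are already equivalent over $k$, suppose $w(z) \in K((z))$ with $\ord_z(w) = 1$ realizes the equivalence; matching the $\bfx$-components forces $w(z)^{\mu} = z^{\mu}$, so $w(z) = \zeta z$ for some $\mu$-th root of unity $\zeta$, and since $k$ is algebraically closed, $\zeta \in k$, whence $w \in k((z))$. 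This yields the injection $\iota$, and order preservation is then immediate, since $\ord_z f(\bfx(z), \bfy(z))$ does not depend on whether $f(\bfx(z), \bfy(z))$ is viewed in $k((z))$ or in $K((z))$.

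The main step is to show that any $K$-place $\frakP$ not lying in the image of $\iota$ satisfies $\ord_\frakP(f) = 0$ for every $f \in L_k \setminus \{0\}$. Using the DVR viewpoint of Remark~\ref{rem:places}, let $V_\frakP \subset L_K$ denote the DVR corresponding to $\frakP$, and consider the subring $V_\frakP \cap L_k \subset L_k$, which is a valuation ring of $L_k$ containing $k$. Since $L_k$ is an algebraic function field of one variable over $k$, its nontrivial valuation rings over $k$ are exactly the DVRs $V_\frakp$ associated with places $\frakp$; hence either $V_\frakP \cap L_k = L_k$, or $V_\frakP \cap L_k$ corresponds to some $k$-place $\frakp$. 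In the latter case, the extension of $V_\frakp$ to a DVR of $L_K$ is unique: indeed, $(V_\frakp / \frakm_\frakp) \otimes_k K = K$ is a field, so there is only one prime $\frakm'$ of $V_\frakp \otimes_k K$ above $\frakm_\frakp$, whence $(V_\frakp \otimes_k K)_{\frakm'}$ is the only DVR of $L_K$ dominating $V_\frakp$. This forces $V_\frakP = (V_\frakp \otimes_k K)_{\frakm'}$, i.e.\ $\frakP = \iota(\frakp)$, contradicting the choice of $\frakP$. Therefore $L_k \subseteq V_\frakP$, and for any $f \in L_k \setminus \{0\}$ both $f$ and $f^{-1}$ lie in $V_\frakP$, giving $\ord_\frakP(f) = 0$. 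Combining with order preservation, $\divs_K(f) = \sum_\frakp \ord_\frakp(f)\,\iota(\frakp)$, which under the identification of $\frakp$ with $\iota(\frakp)$ is exactly $\divs_k(f)$.

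The main obstacle is the uniqueness of extension of $V_\frakp$ to a DVR of $L_K$, which underpins the assertion $\frakP = \iota(\frakp)$ above. Once this uniqueness is granted, all remaining ingredients---the definition of the injection, the compatibility with orders, and the vanishing of orders at $K$-places trivial on $L_k$---follow directly from the parametrization framework of Section~\ref{sec:algebraicfunctionfield}.
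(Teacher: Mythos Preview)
Your argument is correct and takes a genuinely different route from the paper's. The paper proceeds by a degree count: it invokes the identity $\deg(\divs_k(f)^{\pm})=[k(x,y):k(f)]$ and its $K$-analogue (Chevalley), uses algebraic closedness of $k$ to get $[k(x,y):k(f)]=[K(x,y):K(f)]$, and then combines the trivial inequality $\divs_k(f)^{\pm}\le \divs_K(f)^{\pm}$ with equality of degrees to force $\divs_k(f)^{\pm}=\divs_K(f)^{\pm}$. By contrast, you classify each $K$-place by its restriction to $L_k$ and show directly that places restricting trivially contribute nothing, while the others are exactly the $\iota(\frakp)$ with unchanged orders. Your approach yields the finer structural statement that $\iota$ is a bijection onto the $K$-places nontrivial on $L_k$; the paper's counting argument is shorter but gives no such place-by-place information.

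One step in your sketch deserves a word more than the ``whence'' you give it: that $(V_\frakp\otimes_k K)_{\frakm'}$ is itself a DVR of $L_K$ is not an immediate consequence of $\frakm'$ being the unique prime over $\frakm_\frakp$. A clean way to see it is to note that this local ring has principal maximal ideal $(\pi)$ (with $\pi$ a uniformizer of $V_\frakp$) and embeds in the DVR $V_{\iota(\frakp)}$, which also has $\pi$ as uniformizer; hence $\bigcap_n(\pi^n)=0$ already in the smaller ring, and a local domain with nonzero principal maximal ideal and trivial Krull intersection is a DVR. With this in hand, any $V_\frakP$ dominating $V_\frakp$ contains $(V_\frakp\otimes_k K)_{\frakm'}$ and therefore equals it, since a DVR has no proper valuation overring in its fraction field other than the field itself.
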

\begin{proof}
If $f\in k$ then there is nothing to prove. Suppose that $f\notin k$. Since a zero (resp. pole) of $f$ in $k(x,y)$ is also a zero (resp. pole) of $f$ in $K(x,y)$, $f\notin K$. Due to Theorem 4 on page 18 of \cite{chevalley},
$$\deg(\divs_k(f)^{+})=\deg(\divs_k(f)^{-})=[k(x,y):k(f)].$$
Similarly, one has that
$$\deg(\divs_K(f)^{+})=\deg(\divs_K(f)^{-})=[K(x,y):K(f)].$$
As $k$ is algebraically closed, $[k(x,y):k(f)]=[K(x,y):K(f)]$. This implies that
$\deg(\divs_k(f)^{+})=\deg(\divs_K(f)^{+})$ and $\deg(\divs_k(f)^{-})=\deg(\divs_K(f)^{-})$. On the other hand, one has that $\divs_k(f)^{+}, \divs_k(f)^{-}$ are divisors of $K(x,y)$ and $\divs_k(f)^{+}\leq \divs_K(f)^{+}, \divs_k(f)^{-}\leq \divs_K(f)^{-}$. Therefore $\divs_k(f)^{+}= \divs_K(f)^{+}$ and $\divs_k(f)^{-}=\divs_K(f)^{-}$. Consequently, $\divs_k(f)=\divs_K(f)$.
\end{proof}
\begin{prop}
\label{prop:degreebounds}
Let $\rho,\delta_D,\bfq_D$ be as in Notation~\ref{notation:divisor}. Then there are integers $\pi_1,\dots,\pi_n$, and a monic factor $q$ of $\bfq_D$ with $\deg(q)<\rho(\rho+\delta_D)$, $B_{i,j}\in K[X]$ with $\deg(B_{i,j})<2\rho(\rho+2\delta_D)$ such that $\calL_K(D)$ has a basis of the type
\begin{equation}
\label{eq:types}
   x^l\left(\sum_{j=1}^{n} \frac{B_{i,j}(x)}{q(x)}y_j\right)
\end{equation}
where $i$ runs over all integers $s\in \{1,2,\dots,n\}$ satisfying $\pi_s\geq 0$ and $l$ runs over all integers in $\{0,1,\dots,\pi_i\}$.
\end{prop}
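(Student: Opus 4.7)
The plan is to descend the statement to a small algebraically closed subfield of $K$ where Schmidt's theorem is directly applicable, and then to lift the resulting basis back up to $K$. First I would let $k_0\subset K$ denote the field generated over $\Q$ by the coefficients of $P$ together with the coefficients of the irreducible Puiseux parametrizations defining each place in $U$, and take $k$ to be the algebraic closure of $k_0$ inside $K$. This $k$ is an at-most-countable algebraically closed subfield of characteristic zero that contains all data needed to describe $P$ and $D$, and it admits an embedding $k\hookrightarrow \C$; under this embedding $\rho$, $\delta_D$, $\bfq_D(X)$ and the centers $\pi_\frakp(x)$ for $\frakp\in U$ from Notation~\ref{notation:divisor} remain unchanged.

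The next step is to invoke Schmidt's result over $\C$: since $P\in k[X,Y]\subset\C[X,Y]$ and $D$ is defined over $k$, his theorem yields integers $\pi_1,\dots,\pi_n$, a monic divisor $q$ of $\bfq_D(X)$ with $\deg(q)<\rho(\rho+\delta_D)$, and polynomials $B_{i,j}\in k[X]$ with $\deg(B_{i,j})<2\rho(\rho+2\delta_D)$ such that the elements of the form~(\ref{eq:types}) form a $\C$-basis of $\calL_\C(D)$. Since each such element $f$ lies in $k(x,y)$, Lemma~\ref{lm:divisors} applied to $k\subset\C$ gives $\divs_k(f)=\divs_\C(f)$ and hence $f\in \calL_k(D)$; a dimension count via the Riemann-Roch theorem (the genus of $P=0$ is invariant under algebraically closed base extensions of characteristic zero) then shows that these same elements already form a $k$-basis of $\calL_k(D)$.

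Finally, to transfer the basis to $K$, I would apply Lemma~\ref{lm:divisors} once more with $k\subset K$, so that each basis element also lies in $\calL_K(D)$; linear independence over $K$ is preserved because $k$ is algebraically closed in $k(x,y)$ (so $k(x,y)/k$ is a regular extension), which forces $k$-linearly independent elements of $k(x,y)$ to remain linearly independent over $K$ inside $K(x,y)$. A last appeal to Riemann-Roch yields $\dim_K \calL_K(D)=\dim_k \calL_k(D)$ and completes the construction of a $K$-basis of $\calL_K(D)$ of the required shape. I expect the main obstacle to be the descent step: one must verify that Schmidt's explicit construction in \cite{Schmidt:Constructionandestimationofbasesinfunctionfields}, when carried out over $\C$ starting from data in $k$, really does produce polynomials $B_{i,j}$ with coefficients in the subfield $k$. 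This is exactly his ``defined over $k$'' refinement, and everything afterward reduces to routine invariance-of-dimension arguments built on Lemma~\ref{lm:divisors}.
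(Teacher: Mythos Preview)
Your approach is essentially the one the paper takes: descend to a countable algebraically closed subfield, embed it in $\C$, apply Schmidt's Theorems A2 and B2 (including his ``defined over $k$'' refinement) to get a basis with coefficients in that subfield, and then use Lemma~\ref{lm:divisors} together with invariance of $\dim\calL(D)$ under algebraically closed base change to lift back to $K$.

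Two minor differences are worth noting. First, the paper folds a chosen $K$-basis $a_1,\dots,a_m$ of $\calL_K(D)$ into the data defining the small field $k$; this makes the inequality $\dim_{\bar k}\calL_{\bar k}(D)\ge\dim_K\calL_K(D)$ immediate (those $a_i$ already lie in $\calL_{\bar k}(D)$ by Lemma~\ref{lm:divisors}), whereas you rely on abstract invariance. Second, for that invariance the paper appeals directly to Chevalley's theorem (Theorem~1, p.~90 of \cite{chevalley}) rather than to Riemann--Roch plus genus invariance; your phrasing is a little loose, since Riemann--Roch only gives $\ell(D)-\ell(W-D)=\deg(D)+1-g$, and one still needs an extra inequality (e.g.\ $\calL_k(D)\otimes_k K\hookrightarrow\calL_K(D)$) to conclude $\ell_k(D)=\ell_K(D)$. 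These are cosmetic points; the substance of your argument matches the paper's.
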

\begin{proof}
If $\calL_K(D)=\{0\}$, we take all $\pi_i< 0$ for $i=1,\dots,n$ and the assertion is obvious. Now assume $\calL_K(D)\neq \{0\}$ and $a_1,\dots,a_m\in L$ is a basis of $\calL_K(D)$ over $K$.
Let $k\subset K$ be a field finitely generated over $\Q$ such that $P\in k[X,Y],\,a_1,\dots,a_m\in k(x,y)$ and the center of $\frakp$ has coordinates in  $k\cup\{\infty\}$ for every place $\frakp$ in $\supp(D)$. Then the irreducible parametrization corresponding to $\frakp$ has coordinates in $\bar{k}((z))$ {for any $\frakp \in \supp(D)$, where $\bar{k}$ is the algebraic closure of $k$}. We embed $\bar{k}$ into $\C$ and view $P$ as a polynomial in $\C[X,Y]$. Then $P$ is irreducible over $\C$ because $P$ is irreducible over $\bar{k}$. Denote by $\tilde{L}$ the field of fractions of $\C[X,Y]/(P)$, where $(P)$ stands for the ideal in $\C[X,Y]$ generated by $P$. Then $\bar{k}(x,y)$ can be viewed as a subfield of $\tilde{L}$. Note that $D$ is still a divisor of both $\bar{k}(x,y)$ and $\tilde{L}$. By Lemma~\ref{lm:divisors}, $\calL_{\bar{k}}(D)\subset \calL_{\C}(D)$. Since $D$ is defined over $\bar{k}$, by Theorems A2 and B2 of \cite{Schmidt:Constructionandestimationofbasesinfunctionfields}, $\calL_{\C}(D)$ has a basis of the type (\ref{eq:types}) with $B_{i,j}\in \bar{k}[X]$, $\deg(q)\leq \deg(P,Y)\delta_D+\deg(\bfd)/2$ and $\deg(B_{i,j})\leq \deg(P,Y)(\deg(P,X)+3\delta_D)+\deg(q)$. Note that $\deg(\bfd)\leq (2\deg(P,Y)-1)\deg(P,X)<2\rho^2$. One sees that $\deg(q)<\rho(\rho+\delta_D)$ and $\deg(B_{i,j})<2\rho(\rho+2\delta_D)$. Due to Theorem 1 on page 90 of \cite{chevalley}, the vector spaces $\calL_{\bar{k}}(D)$ and $\calL_{\C}(D)$ have the same dimension and then $\calL_{\bar{k}}(D)$ has a basis of the type (\ref{eq:types}) with $B_{i,j}\in \bar{k}[X]$. Since $\calL_{\bar{k}}(D)\subset \calL_K(D)$ by Lemma~\ref{lm:divisors} and $a_1,\dots,a_m\in k(x,y)$, $\calL_{\bar{k}}(D)$ and $\calL_K(D)$ have the same dimension by Theorem 1 on page 90 of \cite{chevalley}. These imply that $\calL_K(D)$ has a basis of the type (\ref{eq:types}) with $B_{i,j}\in \bar{k}[X]\subset K[X]$.
 \end{proof}
 \begin{corollary}
 \label{cor:Riemann-Roch spaces}
 Let $\rho,\delta_D, q$ be as in Proposition~\ref{prop:degreebounds} and $\tilde{D}=D-\divs(q(x))$. Suppose that $\calL_K(\tilde{D})\neq \{0\}$. Then $\calL_K(\tilde{D})$ contains {a nonzero} element of the type
 $$
    \sum_{j=0}^{n-1} \tilde{B}_j(x)y^j
 $$
 where $\tilde{B}_j\in K[X]$ with $\deg(\tilde{B}_j)< 4\rho(\rho+\delta_D)$.
 \end{corollary}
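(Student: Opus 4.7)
The plan is to apply Proposition~\ref{prop:degreebounds} to $D$ itself and then transport the resulting element across the $K$-linear isomorphism $\calL_K(D)\to\calL_K(\tilde D)$ given by multiplication by $q(x)$. This isomorphism is immediate from $\tilde D=D-\divs(q(x))$: for any $h\in L$, $\divs(q(x)h)+\tilde D=\divs(h)+D$, so $h\in\calL_K(D)$ if and only if $q(x)h\in\calL_K(\tilde D)$. In particular, $\calL_K(\tilde D)\neq\{0\}$ forces $\calL_K(D)\neq\{0\}$, so the basis produced by Proposition~\ref{prop:degreebounds} is non-empty, and at least one of the integers $\pi_1,\ldots,\pi_n$ must be non-negative.

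For such an index $i$, the value $l=0$ is admissible, giving the nonzero basis element
\[
  h \;=\; \sum_{j=1}^{n}\frac{B_{i,j}(x)}{q(x)}\,y_j \;\in\; \calL_K(D),\qquad \deg(B_{i,j})<2\rho(\rho+2\delta_D).
\]
Multiplying through by $q(x)$, I obtain the nonzero element $f=\sum_{j=1}^{n}B_{i,j}(x)\,y_j\in\calL_K(\tilde D)$. To put $f$ in the claimed shape, I use $y_1=1$ and $y_j=\sum_{k=0}^{j-2}A_k(x)\,y^{j-1-k}$ for $j\ge 2$; collecting powers of $y$ then rewrites $f$ as $\sum_{r=0}^{n-1}\tilde B_r(x)\,y^r$ with $\tilde B_0=B_{i,1}$ and $\tilde B_r=\sum_{j=r+1}^{n}B_{i,j}(x)\,A_{j-1-r}(x)$ for $1\le r\le n-1$.

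The degree bound is then routine: since $\deg(A_k)\leq\deg(P,X)\leq\rho$, a one-line estimate gives $\deg(\tilde B_r)<2\rho(\rho+2\delta_D)+\rho$, and the elementary inequality $2\rho^2+4\rho\delta_D+\rho\leq 4\rho^2+4\rho\delta_D=4\rho(\rho+\delta_D)$ (which is valid for $\rho\ge 1$, automatic here since $P$ involves $Y$ nontrivially) yields $\deg(\tilde B_r)<4\rho(\rho+\delta_D)$. The only conceptual step is the transport across the isomorphism together with the observation that $\calL_K(\tilde D)\neq\{0\}$ guarantees some $\pi_i\geq 0$, so that $l=0$ is an admissible choice; everything else is bookkeeping of the expansion of the $y_j$ in powers of $y$.
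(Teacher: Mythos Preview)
Your proof is correct and follows essentially the same route as the paper: use the identification $\calL_K(D)=\frac{1}{q(x)}\calL_K(\tilde D)$ to deduce $\calL_K(D)\neq\{0\}$, pick a basis element with $l=0$ from Proposition~\ref{prop:degreebounds}, multiply by $q(x)$, expand the $y_j$ in powers of $y$, and bound degrees via $\deg(A_k)\le\rho$. The only cosmetic difference is that you phrase the transport as multiplication by $q(x)$ whereas the paper writes the inverse relation; the computations of $\tilde B_r$ and the final inequality $2\rho(\rho+2\delta_D)+\rho<4\rho(\rho+\delta_D)$ match exactly.
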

 \begin{proof}
By Proposition~\ref{prop:degreebounds}, there are integers $\pi_1,\dots,\pi_n$, and $B_{i,j}\in K[X]$ with $\deg(B_{i,j})<2\rho(\rho+2\delta_D)$ such that $\calL_K(D)$ has a basis of the type
\begin{equation}
\label{eq:basisofriemann-roch}
   x^l\left(\sum_{j=1}^{n} \frac{B_{i,j}(x)}{q(x)}y_j\right)
\end{equation}
where $i$ runs over all integers $s\in \{1,2,\dots,n\}$ satisfying $\pi_s\geq 0$ and $l$ runs over all integers in $\{0,1,\dots,\pi_i\}$. Note that $\calL_K(D)=\frac{1}{q(x)}\calL_K(\tilde{D})$. Thus $\calL_K(D)\neq \{0\}$ which implies that not all $\pi_i$ are negative. Suppose that $\pi_{i_0}\geq 0$. Setting $l=0$ in (\ref{eq:basisofriemann-roch}) yields that $\sum_{j=1}^{n} \frac{B_{i_0,j}(x)}{q(x)}y_j$ is an element in $\calL_K(D)$. Write $\sum_{j=1}^n B_{i_0,j}(x)y_j=\sum_{j=0}^{n-1}\tilde{B}_j(x) y^j$ where $\tilde{B}_j\in K[X]$. Note that $y_1=1$ and $y_j=\sum_{s=1}^{j-1}A_{j-1-s}(x)y^s$ for $j>1$. One has that
{\begin{align*}
   \sum_{s=0}^{n-1}\tilde{B}_s(x)y^s&=B_{i_0,1}(x)+\sum_{j=2}^n\sum_{s=1}^{j-1} A_{j-1-s}(x)B_{i_0,j}(x)y^s\\
   &=B_{i_0,1}(x)+\sum_{s=1}^{n-1}\left(\sum_{j=s+1}^n A_{j-1-s}(x)B_{i_0,j}(x)\right)y^s.
\end{align*}
Therefore $\tilde{B}_0(x)=B_{i_0,1}(x)$ and $\tilde{B}_s(x)=\sum_{j=s+1}^n A_{j-1-s}(x)B_{i_0,j}(x)$ for $s\geq 1$ and so}
$$
   \deg(\tilde{B}_s)\leq \max_{j}\deg(B_{i_0,j})+\max_{j}\deg(A_j)\leq 2\rho(\rho+2\delta_D)+\rho<4\rho(\rho+\delta_D).
$$
The corollary then follows from the fact that $\calL_K(D)=\frac{1}{q(x)}\calL_K(\tilde{D})$.
 \end{proof}
 In the rest of this section, let us estimate the heights of the coefficients of $\tilde{B}_j$. We first estimate the heights of the coefficients of a place represented by an irreducible parametrization of $P(X,Y)=0$.
 \begin{prop}
\label{prop:heightofplaces}
Let $\rho=\tdeg(P)$.
Suppose that $(z^\mu+a,z^{\nu}(c_0+c_{\ell_1}z^{\ell_1}+\cdots))$ is a place of $P(X,Y)=0$. Then
$$\frakh(c_i)\leq (\rho+1)^{i+1}\max\{\frakh(P),\frakh(a)\}$$
where $c_i=0$ if $i\neq \ell_j$ for all $j\geq 1$ and $i\neq 0$.
\end{prop}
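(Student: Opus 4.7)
The plan is to apply Lemma~\ref{lm:formalpowerseriessolutions} to the ``shifted'' series $\bar{\bfy}(z) = z^{-\nu}\bfy(z) = c_0 + c_{\ell_1}z^{\ell_1} + c_{\ell_2}z^{\ell_2} + \cdots$, which lies in $K[[z]]$. Since $(z^\mu+a,\, z^\nu\bar{\bfy}(z))$ is a parametrization of $P(X,Y)=0$, one has $P(z^\mu+a,\, z^\nu\bar{\bfy}(z)) = 0$, and the first task is to repackage this identity as $R(z, \bar{\bfy}(z)) = 0$ for some $R \in K[z, W]$ of $W$-degree at most $n = \deg(P, Y) \leq \rho$ and of height controlled by $\max\{\frakh(P), \frakh(a)\}$.

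To construct $R$, I would expand $P(X+a, Y) = \sum_{k,j}\tilde{P}_{kj} X^k Y^j$ and substitute $X=z^\mu$, $Y=z^\nu W$ to obtain the Laurent identity
\[
0 \;=\; P(z^\mu+a,\, z^\nu\bar{\bfy}(z)) \;=\; \sum_{k,j}\tilde{P}_{kj}\, z^{k\mu+j\nu}\bar{\bfy}(z)^j.
\]
Multiplying by $z^M$ with $M = \max\{0,\, -\min\{k\mu+j\nu : \tilde{P}_{kj}\neq 0\}\}$ then yields
\[
R(z, W) \;:=\; \sum_{k,j}\tilde{P}_{kj}\, z^{k\mu+j\nu+M}\, W^j \;\in\; K[z, W],
\]
which satisfies $R(z,\bar{\bfy}(z))=0$ and $\deg(R, W) \leq \rho$. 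The key observation is that since $\mu \neq 0$, for each fixed $j$ the map $k \mapsto k\mu + j\nu + M$ is injective, so distinct pairs $(k, j)$ contribute distinct monomials $z^{k\mu+j\nu+M}W^j$ in $R$. Consequently, the multiset of nonzero coefficients of $R$ (viewed as a polynomial in $z$ and $W$) coincides with that of $P(X+a, Y)$, and by Definition~\ref{def:height2} this gives $\frakh(R) = \frakh(P(X+a, Y))$.

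Now Corollary~\ref{cor:subsitution} (valid under assumption (A1), applied with $b = 0$, using $\frakh(0) = 0$ and $\deg(P, X) \leq \rho$) provides
\[
\frakh(R) \;=\; \frakh(P(X+a, Y)) \;\leq\; \frakh(P) + \rho\,\frakh(a) \;\leq\; (\rho+1)\max\{\frakh(P), \frakh(a)\}.
\]
Lemma~\ref{lm:formalpowerseriessolutions} applied to $R$ and $\bar{\bfy}(z) = \sum_{i\geq 0} d_i z^i$ (with $d_0 = c_0$, $d_{\ell_s} = c_{\ell_s}$ for each $s$, and $d_i = 0$ otherwise) then yields
\[
\frakh(c_i) \;=\; \frakh(d_i) \;\leq\; (\deg(R,W)+1)^i\,\frakh(R) \;\leq\; (\rho+1)^{i+1}\max\{\frakh(P), \frakh(a)\},
\]
which is exactly the claimed bound.

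The main obstacle is the exponent bookkeeping in the construction of $R$: one must verify that $M$ is finite (immediate from $P \neq 0$), that multiplication by $z^M$ lands inside $K[z, W]$, and that no coefficients collapse so that the height equality $\frakh(R) = \frakh(P(X+a, Y))$ holds. Once this is checked (using $\mu \neq 0$), both extremal cases $\mu > 0$ (with arbitrary $a \in K$) and $\mu < 0$ (where $a = 0$ forces $\frakh(a) = 0$) are handled uniformly, and the argument reduces to combining Corollary~\ref{cor:subsitution} with Lemma~\ref{lm:formalpowerseriessolutions}.
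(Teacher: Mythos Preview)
Your proof is correct and follows essentially the same approach as the paper: build a polynomial $R\in K[z,W]$ (the paper's $\bar P$) from $P(z^\mu+a,z^\nu W)$ by clearing negative powers of $z$, bound its height via Corollary~\ref{cor:subsitution}, and then apply Lemma~\ref{lm:formalpowerseriessolutions}. Your injectivity observation that distinct $(k,j)$ give distinct monomials in $R$ makes explicit the step the paper glosses with ``one can verify,'' and your treatment handles the cases $\mu>0$ and $\mu<0$ uniformly rather than separately, but these are presentational rather than substantive differences.
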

\begin{proof}
We first consider the case $\mu>0$.
Set $\bar{P}(z,Y)=z^dP(z^\mu+a,z^{\nu}Y)$ where $d$ is the integer such that $\bar{P}\in K[z,Y]$ and $z\nmid \bar{P}$. By Corollary~\ref{cor:subsitution}, one can verity that
$$\frakh(\bar{P})\leq\frakh(P)+\deg(P,X)\frakh(a)\leq(\rho+1)\max\{\frakh(P),\frakh(a)\}.$$
As $c_0+c_{\ell_1}z^{\ell_1}+\cdots$ is a solution of $\bar{P}(z,Y)=0$ and $\deg(P,Y)=\deg(\bar{P},Y)$, by Lemma~\ref{lm:formalpowerseriessolutions}, one sees that
$$\frakh(c_i)\leq (\deg(\bar{P},Y)+1)^i\frakh(\bar{P})\leq (\rho+1)^{i+1}\max\{\frakh(P),\frakh(a)\}.$$
Suppose that $\mu<0$. Similarly, set $\bar{P}(z,Y)=z^dP(z^\mu,z^{\nu}Y)$ where $d$ is the integer such that $\bar{P}\in K[z,Y]$ and $z\nmid \bar{P}$. Then $\frakh(P)=\frakh(\bar{P})$ and $\deg(P,Y)=\deg(\bar{P},Y)$. Since $c_0+c_{\ell_1}z^{\ell_1}+\cdots$ is a solution of $\bar{P}(z,Y)=0$, by Lemma~\ref{lm:formalpowerseriessolutions}, $\frakh(c_i)\leq (\rho+1)^{i+1}\frakh(P)$.
\end{proof}
{For a place $\frakp=(z^\mu+a,z^{\nu}(c_0+c_1z+{\cdots}))$ of $P(X,Y)=0$, the series $(z^\mu+a)^l (z^{\nu}(c_0+c_1z+{\cdots}))^j$ is called the expansion of $x^l y^j$ at $\frakp$, denoted by $x^l y^j|_{(x,y)=\frakp}$ for brevity.}
 \begin{lemma}
 \label{lm:heightsofcoefficients2}
Let $\rho=\tdeg(P)$. For $l\geq 0, j\in \{0,\dots,n-1\}$ and a place $\frakp$, $x^l y^j$ has an expansion at $\frakp$ of the type
 $$
    \left.x^l y^j\right|_{(x,y)=\frakp}=z^{d_{\frakp,l,j}}\sum_{s=0}^\infty \beta_{\frakp,l,j,s} z^s
 $$
 where $d_{\frakp,l,j}$ is an integer greater than $-l\rho-\rho^2$ and $\beta_{\frakp,l,j,s}\in K$ with
 $$\frakh(\beta_{\frakp,l,j,s})\leq ((s+1)^2 (\rho+1)^{s+2}+l)\max\{\frakh(P),\frakh(\pi_\frakp(x))\}.$$
 \end{lemma}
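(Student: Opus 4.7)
The plan is to substitute the irreducible parametrization of $\frakp$, namely $x = z^\mu + a$ and $y = z^\nu(c_0 + c_{\ell_1}z^{\ell_1} + \cdots)$ (with $a = 0$ whenever $\mu < 0$), directly into $x^l y^j$, expand using the binomial and multinomial theorems, and then bound the resulting coefficients using Corollary~\ref{cor:heightproperty11}(1) together with the coefficient estimate from Proposition~\ref{prop:heightofplaces}. Throughout, let $B = \max\{\frakh(P), \frakh(\pi_\frakp(x))\}$; note that $\frakh(a) \leq B$ uniformly, since $a = \pi_\frakp(x)$ when $\mu > 0$ and $a = 0$ when $\mu < 0$.

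For the order bound I would first observe that $|\mu| = e_{\frakp, K(x)} \leq [L:K(x)] = n \leq \rho$ by the ramification identity on page 52 of \cite{chevalley}, and similarly whenever $\nu < 0$ one has $|\nu| = e_{\frakp, K(y)} \leq [L:K(y)] = \deg(P,X) \leq \rho$. Writing $x^l y^j = (z^\mu + a)^l z^{\nu j}(c_0 + c_{\ell_1} z^{\ell_1} + \cdots)^j$, the least exponent appearing is $\nu j$ when $a \neq 0$ and $l\mu + \nu j$ otherwise; in both cases, using $j \leq n-1 \leq \rho-1$, one gets
\[
  d_{\frakp,l,j} \geq -l\rho - \rho(n-1) > -l\rho - \rho^2.
\]

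For the height bound, expanding $(z^\mu + a)^l = \sum_{k=0}^l \binom{l}{k} a^{l-k} z^{k\mu}$ by the binomial theorem and $(c_0 + c_{\ell_1}z^{\ell_1} + \cdots)^j$ by the multinomial theorem exhibits $\beta_{\frakp,l,j,s}$ as a polynomial with $\Q$-coefficients in the variables $a, c_0, c_{\ell_1}, c_{\ell_2}, \dots$ of degree at most $l$ in $a$, at most $j$ in $c_0$, and at most $j$ in each $c_{\ell_i}$; moreover the exponent matching condition forces $j_1 \ell_1 + j_2 \ell_2 + \cdots \leq s$, so only those $c_{\ell_i}$ with $\ell_i \leq s$ can appear. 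Applying Corollary~\ref{cor:heightproperty11}(1) and then Proposition~\ref{prop:heightofplaces} (which gives $\frakh(c_i) \leq (\rho+1)^{i+1}B$) yields
\[
  \frakh(\beta_{\frakp,l,j,s}) \leq l\frakh(a) + j(\rho+1)B + j\sum_{\ell=1}^s (\rho+1)^{\ell+1}B \leq lB + 3j(\rho+1)^{s+1}B
\]
after summing the geometric series and using $(\rho+1)/\rho \leq 2$. Combined with $j \leq \rho$ and the elementary inequality $3\rho \leq (s+1)^2(\rho+1)$ (valid for every $s \geq 1$, with the $s = 0$ case checked by hand since $\beta_{\frakp,l,j,0}$ equals either $a^l c_0^j$ or $c_0^j$), this gives the desired bound $(l + (s+1)^2(\rho+1)^{s+2})B$.

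The main obstacle I anticipate is bookkeeping rather than any conceptual hurdle: one must track the per-variable degrees of $\beta_{\frakp,l,j,s}$ sharply so that Corollary~\ref{cor:heightproperty11}(1) produces the linear-in-$j$ coefficient rather than a much larger count of multinomial terms, and one must keep the degenerate cases $a = 0$ (where only the $k = l$ summand of the binomial expansion survives) uniform with the generic case. A minor technicality is that Lemma~\ref{lm:multiplicationofpowerseries} as stated requires monotonicity of coefficient heights, which is not automatic for the Puiseux coefficients at hand; the cleanest route therefore bypasses that lemma and applies Corollary~\ref{cor:heightproperty11}(1) directly to the explicit multilinear-in-each-variable expansion.
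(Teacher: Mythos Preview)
Your proposal is correct and follows the same route as the paper: substitute the parametrization, expand $x^l$ by the binomial theorem and $y^j$ as a power series, then bound the resulting coefficients via Corollary~\ref{cor:heightproperty11}(1) and Proposition~\ref{prop:heightofplaces}. The paper organizes the expansion of $y^j$ through Lemma~\ref{lm:multiplicationofpowerseries} (first bounding the coefficients $b_{j,s}$ of $y^j$, then combining with the binomial expansion of $x^l$), and your observation that its monotonicity hypothesis is not literally verified for the $c_i$ is apt but harmless, since the intermediate inequality $\frakh(b_{j,s}) \leq j\sum_{\ell=0}^s \frakh(c_\ell)$ in that lemma's proof already suffices without monotonicity.
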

 \begin{proof}
 Suppose that $$\frakp=(\bfx(z),\bfy(z))=(z^\mu+a,z^\nu(c_0+c_{\ell_1}z^{\ell_1}
 +\cdots)).$$
 Then $\frakh(\pi_\frakp(x))=\frakh(a)$. To see this, if $\mu>0$ then $\pi_\frakp(x)=a$ and we are done, if $\mu<0$ then $\pi_\frakp(x)=\infty$ and $a=0$ and thus $\frakh(\pi_\frakp(x))=0=\frakh(a)$.
By Proposition~\ref{prop:heightofplaces}, for $i\geq 0$,
 $$\frakh(c_i)\leq (\rho+1)^{i+1}\max\{\frakh(P),\frakh(a)\}$$
 where $c_i=0$ if $i\neq l_j$ for all $j\geq 1$ and $i\neq 0$.
 Write $\bfy(z)^j=z^{j\nu}\sum_{s\geq 0} b_{j,s}z^s$. By Lemma~\ref{lm:multiplicationofpowerseries} with $f_i=\sum_{s\geq 0} c_sz^s$, one sees that
 \begin{align*}
   \frakh(b_{j,s})\leq \sum_{i=1}^j (s+1)\frakh(c_s)\leq j(s+1)\frakh(c_s)\leq (s+1)(\rho+1)^{s+2}\max\{\frakh(P),\frakh(a)\}.
 \end{align*}
The last inequality holds because $j\leq n-1<\rho+1$.

We first consider the case $\mu>0$.
Note that $(z^\mu+a)^l=\sum_{s=0}^l {l \choose s} a^{l-s}z^{s\mu}$. This implies that $(z^\mu+a)^l\bfy(z)^j$ has an expansion of the type $z^{e_{l,j}} \sum_{s\geq 0} \beta_{\frakp,l,j,s} z^s$ at $z=0$, where $e_{l,j}=j\nu$ and
$$
  \beta_{\frakp,l,j,s}=\sum_{i=0}^{l} b_{j,s-i\mu}{l \choose i}a^{l-i}
$$
with $b_{j,i}=0$ if $i<0$.
Therefore by Corollary~\ref{cor:heightproperty11},
\begin{align*}
\frakh(\beta_{\frakp,l,j,s})&\leq \sum_{i=0}^{s}\frakh(b_{j,i})+l\frakh(a)\leq ((s+1)^2(\rho+1)^{s+2}+l)\max\{\frakh(P),\frakh(a)\}.
\end{align*}
 Set $d_{\frakp,l,j}=j\nu$. Then we has the required expansion for $x^ly^j$ at $\frakp$. Finally, as $|\nu|\leq |\ord_\frakp(y)|\leq \rho$, one has that $d_{\frakp,l,j}>-\rho^2\geq-\rho l-\rho^2.$

 Now suppose that $\mu<0$. In this case, one easily sees that $d_{\frakp,l,j}=j\nu+l\mu$ and
$\beta_{\frakp,l,j,s}=b_{j,s}$. As $|\mu|\leq |\ord_\frakp(x)|\leq \rho$, one has that $d_{\frakp,l,j}>-l\rho-\rho^2$.
 \end{proof}

 Let $\bfc=(\dots,c_{l,j},\dots)$ be a vector with indeterminate coordinates and set $$g(\bfc)=\sum_{j=0}^{n-1}\sum_{l=0}^{4\rho(\rho+\delta_D)-1} c_{l,j}x^l y^j.$$
 \begin{prop}
 \label{prop:heightsofRiemann-Roch}
 Let $\rho,\delta_D,h(D)$ be as in Notation~\ref{notation:divisor}. Let $\tilde{D}$ be as in Corollary~\ref{cor:Riemann-Roch spaces}. Suppose that $\calL_K(\tilde{D})\neq \{0\}$. Then $\calL_K(\tilde{D})$ contains a nonzero element of the type
 \begin{equation}
 \label{eq:riemann-rochelement}
     g(\bfa)=\sum_{j=0}^{n-1} \sum_{l=0}^{4\rho(\rho+\delta_D)-1} a_{l,j}x^l y^j
 \end{equation}
 with $$\frakh(\bfa)\leq 1600(\rho+\delta_D)^6(\rho+1)^{5(\rho+\delta_D)^3-11}h(D),$$
 where $a_{l,j}\in K$, at least one of $a_{l,j}$ equals 1 and $\bfa$ is viewed as a projective point.
 \end{prop}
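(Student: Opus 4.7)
The plan is to reformulate membership in $\calL_K(\tilde{D})$ as a homogeneous linear system on the coefficient vector $\bfc=(c_{l,j})$, bound the heights of its entries via Lemma~\ref{lm:heightsofcoefficients2}, and then invoke Corollary~\ref{cor:solutionsoflinearsystem} to extract a solution of controlled height. Corollary~\ref{cor:Riemann-Roch spaces} already guarantees that $\calL_K(\tilde{D})$ contains a nonzero element of the form $g(\bfc^*)$ for some $\bfc^*\in K^N$ with $N=4n\rho(\rho+\delta_D)$, so the linear system we build will be known to have a nonzero solution; we just need to find one whose coordinates have small height.

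To set up the system, recall $\tilde{D}=D-\divs(q(x))$ has its support contained in $U$ (places not in $U$ are regular for every $x^l y^j$ and carry $\tilde{d}_\frakp=0$, so the divisor inequality is automatic there). For each $\frakp\in U$, apply Lemma~\ref{lm:heightsofcoefficients2} to expand
\[
\left.x^l y^j\right|_{(x,y)=\frakp}=z^{d_{\frakp,l,j}}\sum_{s\geq 0}\beta_{\frakp,l,j,s}z^s.
\]
The condition $\ord_\frakp(g(\bfc))\geq -\tilde{d}_\frakp$ is equivalent to the vanishing of the coefficient of $z^r$ in $g(\bfc)|_\frakp$ for every $r$ with $\min_{l,j}d_{\frakp,l,j}\leq r<-\tilde{d}_\frakp$. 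Each such vanishing is a homogeneous linear equation $\sum_{l,j}c_{l,j}\beta_{\frakp,l,j,r-d_{\frakp,l,j}}=0$, and assembling all of them as $\frakp$ ranges over $U$ produces the required matrix $M$.

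Next I will bound the heights of the entries of $M$. For $\frakp\in U$ one has $\frakh(\pi_\frakp(x))\leq h(D)$ and $\frakh(P)\leq h(D)$, so Lemma~\ref{lm:heightsofcoefficients2} gives $\frakh(\beta_{\frakp,l,j,s})\leq ((s+1)^2(\rho+1)^{s+2}+l)h(D)$. The range of $s$ that actually appears is $0\leq s\leq -\tilde{d}_\frakp-d_{\frakp,l,j}-1$, which I bound using $|\tilde{d}_\frakp|\leq |d_\frakp|+|\ord_\frakp(q(x))|\leq \delta_D+\rho\deg(q)\leq \delta_D+\rho^2(\rho+\delta_D)$, together with $d_{\frakp,l,j}>-l\rho-\rho^2$ and $l<4\rho(\rho+\delta_D)$. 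A careful arithmetic telescoping of these inequalities yields $s+2\leq 5(\rho+\delta_D)^3-11$, which is precisely the exponent appearing in the claimed bound, and similarly $l\leq 4\rho(\rho+\delta_D)-1$ can be absorbed into the polynomial prefactor.

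Finally, Corollary~\ref{cor:solutionsoflinearsystem} produces a nonzero solution $\bfa$ with $\frakh(\bfa)\leq r^2(r+1)\kappa$, where $r=\rank(M)$ is bounded by the number of unknowns $N=4n\rho(\rho+\delta_D)$ and $\kappa$ is the entry height bound above. Multiplying $\bfa$ by a scalar (which does not change the projective height) we may assume some $a_{l,j}=1$; since the monomials $x^l y^j$ with $0\leq l$ and $0\leq j\leq n-1$ are $K$-linearly independent in $L$, this normalization guarantees $g(\bfa)\neq 0$. The hard part, and essentially the only delicate point, is the bookkeeping in the last two steps: tracking exactly how $r^2(r+1)$ combines with the polynomial factor $(s+1)^2+l$ to give the prefactor $1600(\rho+\delta_D)^6$, and verifying that the accumulated slack in bounding $s+2$ does fit within $5(\rho+\delta_D)^3-11$. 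The verification is elementary but must be done carefully to match the stated constants.
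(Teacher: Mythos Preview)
Your proposal is correct and follows essentially the same approach as the paper: translate membership in $\calL_K(\tilde{D})$ into a finite homogeneous linear system on the coefficients via the local expansions from Lemma~\ref{lm:heightsofcoefficients2}, bound the entry heights using $h(D)$ and the explicit range of $s$, and then apply Corollary~\ref{cor:solutionsoflinearsystem} with the rank bounded by the number $4n\rho(\rho+\delta_D)$ of unknowns. One small correction to your bookkeeping: the exponent $5(\rho+\delta_D)^3-11$ in the final bound does not come solely from bounding $s+2$; the paper first obtains $s+3\leq 5(\rho+1)^2(\rho+\delta_D)-21$ for the entry-height bound and then absorbs the factor $r^2(r+1)\leq (4\rho^2(\rho+\delta_D))^2(4\rho^2(\rho+\delta_D)+1)$ into additional powers of $(\rho+1)$, which is what raises the exponent to $5(\rho+\delta_D)^3-11$ and produces the prefactor $1600(\rho+\delta_D)^6$.
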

\begin{proof}
Let $U$ be as in Notation~\ref{notation:divisor}.
By Lemma~\ref{lm:heightsofcoefficients2}, for each place $\frakp$, $j=0,\dots,n-1$ and $l\geq 0$, $x^ly^j$ has an expansion at $\frakp$ of the type
$$
    \left.x^l y^j\right|_{(x,y)=\frakp}=z^{d_{\frakp,l,j}}\sum_{s=0}^\infty \beta_{\frakp,l,j,s} z^s
 $$
 where $d_{\frakp,l,j}$ is an integer greater than $-l\rho-\rho^2$ and $\beta_{\frakp,l,j,s}\in K$ with
 $$\frakh(\beta_{\frakp,l,j,s})\leq ((s+1)^2(\rho+1)^{s+2}+l)\max\{\frakh(P),\frakh(\pi_\frakp(x))\}.$$
Set $o=\min_{\frakp,l,j}\{d_{\frakp,l,j}\}$ and write
$$
   x^l y^j|_{(x,y)=\frakp}=z^o \sum_{s=0}^\infty \alpha_{\frakp,l,j,s} z^s
$$
One can easily see that $\alpha_{\frakp,l,j,s}=0$ if $s<d_{\frakp,l,j}-o$ and $\alpha_{\frakp,l,j,s}=\beta_{\frakp,l,j,s+o-d_{\frakp,l,j}}$ if $s\geq d_{\frakp,l,j}-o$. Therefore for $s\geq 0$,
\begin{align*}
  \frakh(\alpha_{\frakp,l,j,s})\leq \frakh(\beta_{\frakp,l,j,s})\leq ((s+1)^2(\rho+1)^{s+2}+l)\max\{\frakh(P),\frakh(\pi_\frakp(x))\}.
\end{align*}
Then for each place $\frakp$, $g(\bfc)$ has an expansion at $\frakp$ of the type
$$
  g(\bfc)|_{(x,y)=\frakp}=z^o\sum_{s\geq 0} \left(\sum_{l,j} c_{l,j}\alpha_{\frakp,l,j,s}\right) z^s.
$$
Suppose that $\bar{\bfc}=(\dots:\bar{c}_{l,j}: \cdots)$ where $\bar{c}_{l,j}\in K$. Note that a pole of $g(\bar{\bfc})$ is either a pole of $x$ or a pole of $y$ and so all poles of $g(\bar{\bfc})$ are in $U$. Write $\tilde{D}=\sum_{\frakp} m_\frakp \frakp$. Then $g(\bar{\bfc})\in \calL_K(\tilde{D})$ if and only if $\ord_\frakp(g(\bar{\bfc}))\geq -m_\frakp$ for every $\frakp\in \supp(\tilde{D})$ and $\ord_\frakp(g(\bar{\bfc}))\geq 0$ for every $\frakp\in U \setminus \supp(\tilde{D})$, i.e. $\ord_z(g(\bar{\bfc})|_{(x,y)=\frakp})\geq -m_\frakp$ for every $\frakp\in \supp(\tilde{D})$ and $\ord_z(g(\bar{\bfc})|_{(x,y)=\frakp})\geq 0$ for every $\frakp\in U \setminus \supp(\tilde{D})$. Equivalently, $g(\bar{\bfc})\in \calL_K(\tilde{D})$ if and only if $\bar{\bfc}$ is a solution of the following linear system
\begin{equation}
\label{eq:system}
\begin{aligned}
  &\bigcup_{\frakp\in \supp(\tilde{D})}\left\{\sum_{l,j} c_{l,j}\alpha_{\frakp,l,j,s}=0\mid s=0,\dots,-m_\frakp-o-1\right\}\bigcup\\
  &\bigcup_{\frakp\in U\setminus\supp(\tilde{D})}\left\{\sum_{l,j} c_{l,j}\alpha_{\frakp,l,j,s}=0 \mid s=0,\dots,-o-1\right\}.
\end{aligned}
\end{equation}
Note that $\tilde{D}=D-\divs(q(x))$. By Remark~\ref{rem:convention}, $\supp(D)\subset U$ and thus $\supp(\tilde{D})\subset U$. By definition, $\frakh(\pi_\frakp(x))\leq h(D)$ for all $\frakp\in U$. So for $l\leq 4\rho(\rho+\delta_D)-1$ and $\frakp\in U$,
\begin{align*}
   \frakh(\alpha_{\frakp,l,j,s})&\leq ((s+1)^2(\rho+1)^{s+2}+l)h(D)\\
   &\leq (\rho+\delta_D)(s+1)^2(\rho+1)^{s+3}h(D).
\end{align*}
The second inequality holds because $(s+1)^2(\rho+1)^{s+3}-4\rho \geq (s+1)^2(\rho+1)^{s+2}$.
In what follows, we shall estimate $-m_\frakp$ when $m_\frakp<0$.
Note that
$$\deg(\divs(q(x))^{+})=\deg(q)\deg(\divs(x)^{+})=\deg(q)n\leq \rho^2(\rho+\delta_D).$$
Hence $|m_\frakp|\leq \delta_D+\deg(\divs(q(x))^{+})<(\rho+1)^2(\rho+\delta_D)$. Since $o>-\rho l-\rho^2$ and $l\leq 4\rho(\rho+\delta_D)-1$, one has that
\begin{align*}
-m_\frakp-o-1&< (\rho+1)^2(\rho+\delta_D)+\rho l+\rho^2\\
&\leq 5\rho^2(\rho+\delta_D)+2\rho(\rho+\delta_D)+\rho^2+\delta_D\\
&=5(\rho+1)^2(\rho+\delta_D)-7\rho^2-8\rho\delta_D-5\rho-4\delta_D\\
&\leq 5(\rho+1)^2(\rho+\delta_D)-24.
\end{align*}
Therefore the heights of the coefficients of the system (\ref{eq:system}) are not greater than
\begin{align*}
   T&\triangleq (\rho+\delta_D)(5(\rho+1)^2(\rho+\delta_D)-23)^2(\rho+1)^{5(\rho+1)^2(\rho+\delta_D)-21}h(D)\\
     &\leq 25(\rho+\delta_D)^3(\rho+1)^{5(\rho+\delta_D)^3-17}h(D).
\end{align*}
The system (\ref{eq:system}) contains $4n\rho(\rho+\delta_D)\leq 4\rho^2(\rho+\delta_D)$ variables and thus the rank of the system (\ref{eq:system}) is not greater than $4\rho^2(\rho+\delta_D)$. By
Corollary~\ref{cor:solutionsoflinearsystem}, the system (\ref{eq:system}) has a nonzero solution $\bar{\bfc}$ with
$$
   \frakh(\bar{\bfc})\leq (4\rho^2(\rho+\delta_D))^2(4\rho^2(\rho+\delta_D)+1)T\leq 1600(\rho+\delta_D)^6(\rho+1)^{5(\rho+\delta_D)^3-11}h(D).
$$
Let $\lambda$ be a nonzero coordinate of $\bar{\bfc}$ and
set $\bfa=\bar{\bfc}/\lambda$. Then $g(\bfa)$ is the desired element.
\end{proof}
\begin{prop}
 \label{prop:equality}
 Let $\bfa$ be as in Proposition~\ref{prop:heightsofRiemann-Roch} and let $\rho,\delta_D, h(D)$ be as in Notation~\ref{notation:divisor}.
 Suppose that $Q_1\in K[X,Z], Q_2\in K[Y,Z]$ are nonzero irreducible polynomials such that $Q_1(x,g(\bfa)/q(x))=0$ and $Q_2(y,g(\bfa)/q(x))=0$. Then
  $$\frakh(Q_1),\frakh(Q_2)\leq 1600(\rho+\delta_D)^6(\rho+1)^{5(\rho+\delta_D)^3-9}h(D).$$
 \end{prop}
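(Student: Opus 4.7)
My plan is to exhibit explicit nonzero polynomials that annihilate the pairs $(x, g(\bfa)/q(x))$ and $(y, g(\bfa)/q(x))$, and then control their heights using the resultant bound of Corollary~\ref{cor:resultant}. Setting $z := g(\bfa)/q(x) \in L$ and $g(X,Y) := \sum_{j=0}^{n-1}\sum_{l=0}^{4\rho(\rho+\delta_D)-1} a_{l,j} X^l Y^j$, the key auxiliary polynomial is
\[
F(X,Y,Z) := q(X)\,Z - g(X,Y) \in K[X,Y,Z],
\]
which satisfies $F(x,y,z) = 0$ alongside $P(x,y) = 0$. Set $R_1 := \res_Y(P,F) \in K[X,Z]$ and $R_2 := \res_X(P,F) \in K[Y,Z]$. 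Because $\deg(F,Y) \leq n-1 < \deg(P,Y)$ and $F \neq 0$, the irreducible polynomial $P$ does not divide $F$ in $K(Z)(X)[Y]$, so $R_1 \neq 0$; an analogous argument shows $R_2 \neq 0$. Since $x$ is transcendental over $K$ and $z \in L$ is algebraic over $K(x)$, the kernel of $K[X,Z] \to K(x,z)$ is a height-one prime, hence principal and generated by $Q_1$; in particular $Q_1 \mid R_1$, and symmetrically $Q_2 \mid R_2$. Proposition~\ref{prop:heightproperty3} then reduces the task to estimating $\frakh(R_1)$ and $\frakh(R_2)$.

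To bound $\frakh(F)$ I would use that $q$ is monic (Proposition~\ref{prop:degreebounds}) and that some $a_{l^*,j^*}=1$ (Proposition~\ref{prop:heightsofRiemann-Roch}), so the coefficient vector of $F$ has entries $1$ and $-1$ among its coordinates; since heights are unaffected by sign changes of individual coordinates, Corollary~\ref{cor:heightproperty12}(1) yields $\frakh(F) \leq \frakh(q) + \frakh(\bfa)$. The factor $\frakh(q)$ is then controlled via $q \mid \bfq_D$ in $K[X]$: multiplicativity of the height on polynomial products combined with Corollary~\ref{cor:resultant} applied to the relation $A_0(X)\,\bfd(X) = \pm\res_Y(P,\partial P/\partial Y)$ (using the elementary estimate $\frakh(\partial P/\partial Y)\leq\frakh(P)$) gives $\frakh(\bfd) \leq 2\rho\,h(D)$; counting the at most $\delta_D$ places in $\supp(D)$ with $\ord_\frakp(x) \geq 0$ and using $\frakh(X-\pi_\frakp(x)) \leq h(D)$, I obtain
\[
\frakh(q) \leq \frakh(\bfq_D) \leq \rho(\rho+\delta_D)\bigl(2\rho + \delta_D\bigr) h(D) \leq 2\rho(\rho+\delta_D)^2 h(D).
\]

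Combining Corollary~\ref{cor:resultant} with the above gives
\[
\frakh(R_1) \leq (n-1)\frakh(P) + n\,\frakh(F) \leq \rho\,h(D) + 2\rho^2(\rho+\delta_D)^2 h(D) + \rho\,\frakh(\bfa),
\]
and similarly $\frakh(R_2) \leq 4\rho(\rho+\delta_D)h(D) + \rho\,\frakh(q) + \rho\,\frakh(\bfa)$, using $\deg(F,X) < 4\rho(\rho+\delta_D)$ and $\deg(P,X) \leq \rho$. Substituting the bound on $\frakh(\bfa)$ from Proposition~\ref{prop:heightsofRiemann-Roch}, the summand $\rho\,\frakh(\bfa)$ is bounded by $1600(\rho+\delta_D)^6(\rho+1)^{5(\rho+\delta_D)^3-10} h(D)$, one factor of $(\rho+1)$ having been absorbed from $\rho \leq \rho+1$. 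A second factor of $(\rho+1)$ swallows the remaining lower-order summands, producing the target $1600(\rho+\delta_D)^6(\rho+1)^{5(\rho+\delta_D)^3-9} h(D)$.

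The only genuinely delicate point is this last bookkeeping step: one must verify that the collected lower-order terms together with $\rho\,\frakh(\bfa)$ fit inside $(\rho+1)^2$ times the base quantity $1600(\rho+\delta_D)^6(\rho+1)^{5(\rho+\delta_D)^3-11} h(D)$. For $\rho \geq 1$ and $\delta_D \geq 1$ the exponential factor $(\rho+1)^{5(\rho+\delta_D)^3-11}$ dwarfs every polynomial expression in $\rho$ and $\delta_D$ appearing on the right, so the absorption is routine; the rest of the argument is essentially elimination theory plus the height formalism already developed in Section~2.
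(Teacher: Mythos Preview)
Your argument is correct and follows the same route as the paper: form $F(X,Y,Z)=q(X)Z-g(X,Y)$, eliminate $Y$ (resp.\ $X$) against $P$ via resultants, observe that $Q_1\mid R_1$ and $Q_2\mid R_2$, bound $\frakh(F)\le\frakh(q)+\frakh(\bfa)$ using Corollary~\ref{cor:heightproperty12}, apply Corollary~\ref{cor:resultant}, and absorb lower-order terms into the final exponent. The only substantive difference is your estimate for $\frakh(q)$: you invoke $q\mid\bfq_D$ and multiplicativity to get $\frakh(q)\le 2\rho(\rho+\delta_D)^2 h(D)$, whereas the paper bounds the heights of the individual roots of $q$ and reassembles via symmetric functions, arriving at the weaker bound $(\rho+\delta_D)^2(2\rho^3-\rho^2)h(D)$; either suffices. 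One small wrinkle: your phrase ``an analogous argument shows $R_2\ne 0$'' cannot literally repeat the degree comparison, since $\deg(F,X)$ may exceed $\deg(P,X)$; the correct justification (implicit in your setup) is that $P$ is irreducible with $\deg(P,Y)\ge 1$ while $q(X)\ne 0$, so $P\nmid F$ in $K[X,Y,Z]$ and hence $\res_X(P,F)\ne 0$.
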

 \begin{proof}
 Suppose that $\bfa=(\dots,a_{l,j},\dots)$. Set
 $$G(X,Y,Z)=q(X)Z-\sum_{j=0}^{n-1}\sum_{l=0}^{4\rho(\rho+\delta_D)-1} a_{l,j}X^l Y^j.$$
 Then $\deg(G,X)\leq 4\rho(\rho+\delta_D)-1$ and $g(\bfa)/q(x)$ is a solution of $G(x,y,Z)=0$. Denote $R=\res_Y(P,G)$.
 Since $P(X,Y)$ is irreducible and it does not divide $G(X,Y,Z)$, $R$ is nonzero. Furthermore $R(x,g(\bfa)/q(x))=0$ and then $Q_1$ divides $R$.
 Now let us estimate the height of $q(X)$. Suppose that $b_1,\dots,b_d$ are all roots of $q(X)=0$ where $d=\deg(q)$.
 Then $b_i$ is either a zero of $\bfd(X)$ or $\pi_\frakp(x)$ for some $\frakp\in \supp(D)$. In the first case, $\frakh(b_i)\leq (2\rho-1)\frakh(P)$ by Corollary~\ref{cor:resultant} and in the second case $\frakh(b_i)\leq h(D)$. Therefore $\frakh(b_i)\leq (2\rho-1) h(D)$.
 Each coefficient of $q(X)$ is a homogeneous polynomial in $1,b_1,\dots,b_d$ of degree $d$. By Proposition~\ref{prop:heightproperty2}, $\frakh(q(X))\leq d\frakh((1:b_1:\dots:b_d))$. Due to Corollary~\ref{cor:heightproperty12}, $\frakh((1:b_1:\dots:b_d))\leq \sum_{i=1}^d \frakh(b_i)$. Since $d=\deg(q)\leq \rho(\rho+\delta_D)$,
 $$
    \frakh(q(X))\leq d\sum_{i=1}^d \frakh(b_i)\leq d^2(2\rho-1)h(D)\leq (\rho+\delta_D)^2(2\rho^3-\rho^2)h(D).
 $$
 Let $\bfc$ be a point in some projective space with the coefficients of $q(X)$ and all $a_{l,j}$ as coordinates. By Corollary~\ref{cor:heightproperty12},
 \begin{align*}
   \frakh(\bfc)&\leq \frakh(q)+\frakh(\bfa)\\
   &\leq (\rho+\delta_D)^2(2\rho^3-\rho^2)h(D)+1600(\rho+\delta_D)^6(\rho+1)^{5(\rho+\delta_D)^3-11}h(D)\\
   &\leq 1600(\rho+\delta_D)^6(\rho+1)^{5(\rho+\delta_D)^3-10}h(D).
 \end{align*}
 Equivalently, $\frakh(G)\leq 1600(\rho+\delta_D)^6(\rho+1)^{5(\rho+\delta_D)^3-10}h(D)$.
 Due to Proposition~\ref{prop:heightproperty3} and Corollary~\ref{cor:resultant}, one has that
 \begin{align*}
   \frakh(Q_1)\leq \frakh(R)&\leq \deg(G,Y)\frakh(P)+\deg(P,Y)\frakh(G)\\
   &\leq (\rho-1)h(D)+1600\rho(\rho+\delta_D)^6(\rho+1)^{5(\rho+\delta_D)^3-10}h(D)\\
   &\leq 1600(\rho+\delta_D)^6(\rho+1)^{5(\rho+\delta_D)^3-9}h(D).
 \end{align*}
  Using a similar argument, one has that
 $$\frakh(Q_2)\leq \deg(G,X)\frakh(P)+\deg(P,X)\frakh(G)$$
  which is also less than $1600(\rho+\delta_D)^6(\rho+1)^{5(\rho+\delta_D)^3-9}h(D)$.
 \end{proof}

\section{Main result}
\label{sec:heights}
Throughout this section, let $K, \frakh$ be as in Section~\ref{sec:Riemann-Roch} and $L$ stands for an algebraic function field of one variable over $K$, i.e. $L=K(x,y)$ where $x$ is transcendental over $K$ and $[L:K(x)]<\infty$. Suppose that $\frakp$ is a place of $L$ over $K$. Let $\pi_\frakp$ be defined as in Section~\ref{sec:algebraicfunctionfield}. We start with a height inequality for points on an algebraic curve of special type. This inequality is an easy corollary of Proposition on page 14 of \cite{Serre}. For completeness, we present a detailed proof and estimate the constant term.
\begin{prop}
\label{prop:specialcase}
Suppose that $Q$ is a nonzero polynomial in $K[X,Y]$ satisfying $\deg(Q,Y)=\tdeg(Q)$. Then for each $(\alpha,\beta)\in K^2$ with $Q(\alpha,\beta)=0$,
$$
  \frakh(\beta)\leq \frakh(\alpha)+\frakh(Q).
$$
\end{prop}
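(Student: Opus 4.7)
The plan is to run a per-place (non-archimedean) analysis on the relation $Q(\alpha,\beta)=0$. First I would divide $Q$ through by its $Y^n$-coefficient $a_{0n}$, which is a nonzero element of $K$ precisely because $\deg(Q,Y)=\tdeg(Q)=n$; this leaves $\frakh(Q)$ unchanged by projective invariance and reduces the defining equation to
\[
\beta^n \;=\; -\sum_{\substack{(i,j)\neq(0,n)\\ i+j\leq n}} a_{ij}\,\alpha^i\beta^j.
\]
Two edge cases can be dispatched at once: if $\beta=0$ then $\frakh(\beta)=0$ trivially, and if $\alpha=0$ then $\beta$ is a root of $Q(0,Y)\in K[Y]$, so Proposition~\ref{prop:heightproperty3} already yields $\frakh(\beta)\leq\frakh(Q)$.

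For the generic case I would fix a finite extension $L$ of $k(t)$ containing $\alpha$, $\beta$, and every $a_{ij}$, and work place by place. At each place $\frakp$ of $L$ over $k$, set $A=-\ord_\frakp(\alpha)$, $B=-\ord_\frakp(\beta)$, $h_{ij}=-\ord_\frakp(a_{ij})$, and $H_\frakp=\max_{i,j} h_{ij}$; the normalization $a_{0n}=1$ forces $h_{0n}=0$, and hence $H_\frakp\geq 0$. Applying the non-archimedean inequality to the displayed relation gives
\[
nB \;\leq\; \max_{(i,j)\neq(0,n)}\bigl(h_{ij}+iA+jB\bigr).
\]

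The crux, and the step I expect to be the main (but modest) obstacle, is extracting a usable bound from this inequality: one must argue that some argmax $(i_0,j_0)$ of the right-hand side has $j_0<n$. This is precisely where the hypothesis $\tdeg(Q)=\deg(Q,Y)$ enters, since $(i_0,j_0)\neq(0,n)$ together with $i_0+j_0\leq n$ forces $j_0\leq n-1$ and $i_0\leq n-j_0$. Rearranging yields $(n-j_0)B\leq h_{i_0 j_0}+i_0 A$, and the estimate $i_0 A\leq (n-j_0)\max(0,A)$ (a short case split on the sign of $A$) then produces $B\leq H_\frakp+\max(0,A)$; the same inequality holds trivially when $B\leq 0$. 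Summing $\max(0,B)\leq\max(0,A)+H_\frakp$ over all $\frakp$ and dividing by $[L:k(t)]$ delivers exactly $\frakh(\beta)\leq\frakh(\alpha)+\frakh(Q)$.
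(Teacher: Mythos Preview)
Your argument is correct and complete for the concrete case $K=\overk$: the normalization $a_{0n}=1$, the per-place ultrametric estimate, and the key combinatorial observation that any maximizing index $(i_0,j_0)\neq(0,n)$ with $i_0+j_0\leq n$ must have $j_0\leq n-1$ and $i_0\leq n-j_0$ all go through exactly as you describe. The division by $n-j_0$ is harmless because $H_\frakp\geq 0$ (thanks to $h_{0n}=0$), so whether $h_{i_0j_0}$ is positive or negative you still get $B\leq H_\frakp+\max(0,A)$.

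The paper takes a genuinely different route. Rather than working placewise, it argues entirely through the abstract height axioms (A1) and (A2). It considers the projective point $\bfa$ whose coordinates are all monomials $\bfm_i(\alpha,\beta)$ of total degree at most $n$ \emph{except} $\beta^n$; the relation $Q(\alpha,\beta)=0$ lets one recover $\beta^n$ linearly from the $\bfm_i$'s, so Proposition~\ref{prop:heightproperty2} and Proposition~\ref{prop:heightproperty4} give $n\,\frakh((1:\alpha:\beta))\leq\frakh(\bfa)+\frakh(Q)$. A second application of Proposition~\ref{prop:heightproperty2} (a bilinear map in $(1:\alpha)$ and the degree-$(n{-}1)$ monomials) bounds $\frakh(\bfa)\leq\frakh(\alpha)+(n{-}1)\frakh((1:\alpha:\beta))$, and subtracting yields the claim.

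What each approach buys: yours is shorter and more transparent, and makes the role of the support condition $i+j\leq n$ completely explicit; but it relies on the explicit sum-over-places formula for $\frakh$, i.e.\ on $K=\overk$. The paper's argument uses only Propositions~\ref{prop:heightproperty2} and~\ref{prop:heightproperty4}, so it applies verbatim to any algebraically closed $K$ with a height satisfying (A1)--(A2), which is the generality in which Section~\ref{sec:heights} is written. If you want your proof to match that generality, you would need to note that the ultrametric/product-formula structure you invoke is part of the standing hypotheses---or simply restrict the statement to $K=\overk$, which suffices for the paper's application in Theorem~\ref{thm:boundpoints}.
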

\begin{proof}
Suppose that $n=\tdeg(Q)$.
Let $\bfm_0,\dots,\bfm_\ell$ be all monomials in $X,Y$ of total degrees not greater than $n$.  Without loss of generality, we assume that $\bfm_0=1$ and $\bfm_\ell=Y^n$. Write $Q=cY^n+\sum_{i=0}^{\ell-1} b_i \bfm_i$ where $c,b_i\in K$ and $c\neq 0$. In Proposition~\ref{prop:heightproperty2}, we take $r=1, s_1=\ell-1, \Phi=(X_{1,0}:\dots:X_{1,\ell-1}:-\frac{1}{c}\sum_{i=0}^{\ell-1} b_i X_{1,i})$ and $\bfa=(1:\bfm_1(\alpha,\beta):\dots:\bfm_{\ell-1}(\alpha,\beta))$. Then $\Phi(\bfa)=(\bfm_0(\alpha,\beta):\dots:\bfm_\ell(\alpha,\beta))$ and by Propositions~\ref{prop:heightproperty4} and~\ref{prop:heightproperty2}, one has that
\begin{equation}
\label{eq:inequality1}
  n\frakh((1:\alpha:\beta))=\frakh(\Phi(\bfa))\leq \frakh(\bfa)+\frakh(\Phi)=\frakh(\bfa)+\frakh(Q).
\end{equation}
Let $\bfn_0,\dots,\bfn_m$ be all monomials in $X,Y$ of total degrees not greater than $n-1$. One can check that there are $\bfn_{d_1},\dots,\bfn_{d_{n}}$ such that $X\bfn_{d_i}\neq \bfn_j$ for any $i,j$ and
$$
   \{\bfm_i \mid i=0,\dots,\ell-1\}=\{\bfn_i \mid i=0,\dots,m\}\cup \{X\bfn_{d_i} \mid i=1,\dots,n \},
$$
where $\ell=m+n+1$.
In Proposition~\ref{prop:heightproperty2}, we take $r=2, s_1=1, s_2=m$, $\Phi=(\phi_0:\dots:\phi_{\ell-1})$ with $\phi_i=X_{1,0}X_{2,i}$ for $i=0,\dots,m$ and $\phi_i=X_{1,1}X_{2,d_{i-m}}$ for $i=m+1,\dots,\ell-1$, $\bfa_1=(1:\alpha)$ and $\bfa_2=(\bfn_0(\alpha,\beta):\dots:\bfn_m(\alpha,\beta))$. Reordering the subscripts if necessary, we may assume that
$$
  (\bfm_0,\dots,\bfm_{m+n})=(\bfn_0,\dots,\bfn_m,X\bfn_{d_1},\dots,X\bfn_{d_n}).
$$
We then have that $\Phi(\bfa_1,\bfa_2)=\bfa$ and
\begin{equation}
\label{eq:inequality2}
\frakh(\bfa)=\frakh(\Phi(\bfa_1,\bfa_2))\leq \frakh(\bfa_1)+\frakh(\bfa_2)=\frakh(\alpha)+\frakh(\bfa_2).
\end{equation}
By Proposition~\ref{prop:heightproperty4} again, $\frakh(\bfa_2)=(n-1)\frakh((1:\alpha:\beta))$. This together with (\ref{eq:inequality1}) and (\ref{eq:inequality2}) yields that
$$
   \frakh((1:\alpha:\beta))\leq \frakh(\alpha)+\frakh(Q).
$$
The proposition then follows from the fact that $\frakh(\beta)\leq \frakh((1:\alpha:\beta))$.
\end{proof}
As a corollary, we have the following quasi-equivalence of heights for points on an algebraic curve of special type.
\begin{corollary}
\label{cor:simplecase}
Suppose that $Q=\sum_{i=0}^{m}\sum_{j=0}^n a_{i,j}X^iY^j$ with $a_{i,j}\in K$, $m=\deg(Q,X)$ and $n=\deg(Q,Y)$. Assume that for all $0\leq i\leq m$ and $0\leq j\leq n$ if $a_{i,j}\neq 0$ then $mj+ni\leq mn$. Then for each $(\alpha,\beta)\in K^2$ with $Q(\alpha,\beta)=0$,
$$
  n\frakh(\beta)-mn\frakh(Q)\leq m\frakh(\alpha)\leq n\frakh(\beta)+mn\frakh(Q).
$$
\end{corollary}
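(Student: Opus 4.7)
The plan is to reduce this statement to Proposition~\ref{prop:specialcase} via a monomial substitution that turns the weighted-degree condition into a classical total-degree condition. Specifically, I would set $\tilde Q(X,Y) := Q(X^n, Y^m) = \sum_{i,j} a_{i,j}X^{ni}Y^{mj}$. Every nonzero term of $\tilde Q$ has total degree $ni + mj \leq mn$ by hypothesis, so $\tdeg(\tilde Q)\leq mn$. On the other hand, since $\deg(Q,Y)=n$ there is some index $i$ with $a_{i,n}\neq 0$, and the inequality $mn + ni\leq mn$ forces $i=0$; symmetrically $a_{m,0}\neq 0$. Hence both $Y^{mn}$ and $X^{mn}$ occur in $\tilde Q$, so $\deg(\tilde Q,Y) = \deg(\tilde Q,X) = \tdeg(\tilde Q) = mn$, which is exactly the shape of hypothesis required by Proposition~\ref{prop:specialcase}.

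Next, using that $K$ is algebraically closed, I would choose $\alpha', \beta' \in K$ with $(\alpha')^n = \alpha$ and $(\beta')^m = \beta$, so that $\tilde Q(\alpha', \beta') = Q(\alpha, \beta) = 0$. Proposition~\ref{prop:specialcase} then yields $\frakh(\beta') \leq \frakh(\alpha') + \frakh(\tilde Q)$, and the symmetric instance obtained by swapping the roles of $X$ and $Y$ (valid since we have already shown $\deg(\tilde Q,X) = \tdeg(\tilde Q)$) gives $\frakh(\alpha') \leq \frakh(\beta') + \frakh(\tilde Q)$. Because $\tilde Q$ and $Q$ share the same set of coefficients regarded as a projective point, $\frakh(\tilde Q) = \frakh(Q)$.

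To finish, I would translate these two inequalities back to $\alpha$ and $\beta$ via Proposition~\ref{prop:heightproperty1}, which gives $\frakh(\alpha) = n\frakh(\alpha')$ and $\frakh(\beta) = m\frakh(\beta')$. Multiplying the two intermediate inequalities by $mn$ and substituting then yields precisely $n\frakh(\beta) - mn\frakh(Q) \leq m\frakh(\alpha) \leq n\frakh(\beta) + mn\frakh(Q)$. No step looks technically delicate once the substitution is in hand; the only point requiring care is verifying the degree identity $\tdeg(\tilde Q) = mn$ from the weighted-degree condition (the content of the first paragraph), together with the trivial degenerate cases where $\alpha$ or $\beta$ vanishes (in which the corresponding height is zero by the convention $\frakh(0)=0$ and the desired inequalities follow directly from Proposition~\ref{prop:specialcase}).
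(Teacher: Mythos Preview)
Your proposal is correct and follows essentially the same approach as the paper: define $\tilde Q(X,Y)=Q(X^n,Y^m)$, observe that $\deg(\tilde Q,X)=\deg(\tilde Q,Y)=\tdeg(\tilde Q)$ and $\frakh(\tilde Q)=\frakh(Q)$, apply Proposition~\ref{prop:specialcase} to $(\alpha^{1/n},\beta^{1/m})$, and use Proposition~\ref{prop:heightproperty1} to rescale. Your justification that $\tdeg(\tilde Q)=mn$ (by showing $a_{0,n}\neq 0$ and $a_{m,0}\neq 0$) is more explicit than the paper, which simply asserts this; the remark about degenerate $\alpha=0$ or $\beta=0$ is unnecessary since $0$ has an $n$th root and $\frakh(0)=0$ anyway.
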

\begin{proof}
Set $\tilde{Q}=\sum_{i=0}^{m}\sum_{j=0}^n a_{i,j}X^{ni}Y^{mj}$. Then $\deg(\tilde{Q},Y)=\deg(\tilde{Q},X)=\tdeg(\tilde{Q})$ and $\frakh(\tilde{Q})=\frakh(Q)$. Suppose that $(\alpha,\beta)\in K^2$ satisfies $Q(\alpha,\beta)=0$. Then $\tilde{Q}(\alpha^{1/n},\beta^{1/m})=0$. By Proposition~\ref{prop:specialcase}, $\frakh(\beta^{1/m})\leq \frakh(\alpha^{1/n})+\frakh(\tilde{Q})$. By Proposition~\ref{prop:heightproperty1}, one has that $n\frakh(\beta)\leq m\frakh(\alpha)+mn\frakh(Q)$. Similarly, one has that $m\frakh(\alpha)\leq n\frakh(\beta)+mn\frakh(Q)$.
\end{proof}

The polynomial $Q$ usually does not satisfy the assumption of Proposition~\ref{prop:specialcase}, i.e. $\deg(Q,Y)=\tdeg(Q)$. In order to apply Proposition~\ref{prop:specialcase}, Eremenko proved in \cite{eremenko} that if $\divs(y)^{-}\leq \divs(x)^{-}$ then the irreducible polynomial $Q$ with $Q(x,y)=0$ satisfies $\deg(Q,Y)=\tdeg(Q)$. The following lemma is a generalization of Lemma 1 in \cite{eremenko}.
\begin{lemma}
\label{lm:pointinequality}
Assume  that $Q\in K[X,Y]$ is a nonzero polynomial irreducible over $K$ and $\alpha,\beta\in L\setminus K$ satisfying $Q(\alpha,\beta)=0$. Suppose that $$m_1\divs\left(\tau_1(\alpha)\right)^{-} \leq m_2\divs\left(\tau_2(\beta)\right)^{-}$$
where $m_1,m_2$ are positive integers and $\tau_1,\tau_2$ are two linear fractional transformations with coefficients in $\Q$.
Then for every place $\frakp$ of $L$ over $K$,
$$
m_1\frakh(\pi_\frakp(\alpha))\leq m_2\frakh(\pi_\frakp(\beta))+m_1m_2\frakh(Q).
$$
\end{lemma}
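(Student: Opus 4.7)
\emph{Proof sketch.}
The strategy is to pass from $Q$ to a polynomial satisfying the hypothesis of Proposition~\ref{prop:specialcase}, in two main reduction steps followed by a case analysis on $\frakp$.

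First I would set $\tilde\alpha=\tau_1(\alpha)$, $\tilde\beta=\tau_2(\beta)$ and let $\tilde Q\in K[X,Y]$ be the irreducible polynomial with $\tilde Q(\tilde\alpha,\tilde\beta)=0$. Because $\tau_1,\tau_2$ have $\Q$-coefficients and $\frakh$ vanishes on $\Q$ by Remark~\ref{rem:constant}, Corollary~\ref{cor:heightproperty11}(2) yields $\frakh(\pi_\frakp(\tilde\alpha))=\frakh(\pi_\frakp(\alpha))$ and the analogous statement for $\beta$. Moreover $\tilde Q$ divides the polynomial $R(X,Y)$ obtained by clearing denominators in $Q(\tau_1^{-1}(X),\tau_2^{-1}(Y))$; since each coefficient of $R$ is a $\Q$-linear combination of the coefficients of $Q$ one has $\frakh(R)\le\frakh(Q)$, and therefore $\frakh(\tilde Q)\le\frakh(Q)$ by Proposition~\ref{prop:heightproperty3}. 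Replacing $(\alpha,\beta,Q)$ by $(\tilde\alpha,\tilde\beta,\tilde Q)$ reduces the problem to the case $\tau_1=\tau_2=\mathrm{id}$.

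The second step, which I expect to be the main obstacle, is a Newton polygon argument: writing $Q=\sum c_{i,j}X^iY^j$ with $m=\deg(Q,X)$, I would show that the hypothesis $m_1\divs(\alpha)^{-}\le m_2\divs(\beta)^{-}$ forces $m_2 i+m_1 j\le m_2 m$ whenever $c_{i,j}\ne 0$. The mechanism is that at each common pole $\frakq$ of $\alpha,\beta$ with pole orders $\mu,\nu$ (so $m_1\mu\le m_2\nu$, equivalently $\nu/\mu\ge m_1/m_2$), the slopes of the upper-right edges of the Newton polygon of $Q$ are precisely the values $\nu/\mu$ realized at such common poles, and by Newton--Puiseux these slopes must all be $\ge m_1/m_2$. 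Since $(m,0)$ must itself lie in the support (otherwise some $(m,j^*)$ with $j^*>0$ would already violate the desired bound), the whole polygon lies in $\{m_2 i+m_1 j\le m_2 m\}$. The delicate point is justifying the Puiseux branch correspondence cleanly, accounting for places where only $\beta$ has a pole, and ruling out outlying monomials by using irreducibility of $Q$ when several slope inequalities coincide.

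For the final step I dispose of the place $\frakp$ by cases. If $\pi_\frakp(\alpha)=\infty$, the inequality is trivial. If $\pi_\frakp(\alpha)=a\in K$ but $\pi_\frakp(\beta)=\infty$, dividing $Q(\alpha,\beta)=0$ by $\beta^{\deg(Q,Y)}$ in the valuation ring $V(\frakp)$ shows that $a$ is a root of the leading $Y$-coefficient of $Q$, whence $\frakh(a)\le\frakh(Q)$ by Proposition~\ref{prop:heightproperty3} and the bound follows since $m_2\ge 1$. In the remaining case $(a,b)\in K^2$ with $Q(a,b)=0$, introduce $\hat Q(U,V)=Q(U^{m_2},V^{m_1})$, whose support by the Newton polygon condition satisfies $\deg(\hat Q,U)=m_2 m=\tdeg(\hat Q)$. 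Picking $\hat a,\hat b\in K$ with $\hat a^{m_2}=a$ and $\hat b^{m_1}=b$, Proposition~\ref{prop:specialcase} applied to $\hat Q$ with the roles of the two variables swapped yields $\frakh(\hat a)\le\frakh(\hat b)+\frakh(\hat Q)=\frakh(\hat b)+\frakh(Q)$. Since $\frakh(a)=m_2\frakh(\hat a)$ and $\frakh(b)=m_1\frakh(\hat b)$ by Proposition~\ref{prop:heightproperty1}, multiplying through by $m_1 m_2$ yields exactly $m_1\frakh(a)\le m_2\frakh(b)+m_1 m_2\frakh(Q)$.
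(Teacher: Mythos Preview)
Your approach is correct but takes a genuinely different route from the paper in the central reduction. The paper absorbs the $\tau_i$ and the exponents into a single step: it sets $\bar\alpha=\tau_1(\alpha)^{m_1}$, $\bar\beta=\tau_2(\beta)^{m_2}$, takes the irreducible relation $\bar Q(\bar\alpha,\bar\beta)=0$, and bounds $\frakh(\bar Q)\le m_1m_2\frakh(Q)$ by two successive resultant eliminations (Corollary~\ref{cor:resultant}), using that the auxiliary polynomials $H_1,H_2$ have $\Q$-coefficients and hence height $0$. It then invokes Eremenko's Proposition~2 as a black box to get $\deg(\bar Q,X)=\tdeg(\bar Q)$ and applies Proposition~\ref{prop:specialcase} to $\bar Q$ directly, with a short case analysis for poles of $\bar\beta$. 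You instead first strip off the $\tau_i$ at no height cost, then prove a weighted Newton-polygon constraint on $Q$ itself and apply Proposition~\ref{prop:specialcase} to the (possibly reducible) $\hat Q(U,V)=Q(U^{m_2},V^{m_1})$, whose coefficient vector coincides with that of $Q$. Your polygon step is essentially a self-contained proof of (a weighted version of) Eremenko's Proposition~2: writing $Q=\sum_j A_j(X)Y^j$, the hypothesis forces every Puiseux branch at $X=\infty$ to have exponent $\nu/\mu\ge m_1/m_2>0$, so the upper hull of $\{(j,\deg A_j)\}$ has all slopes $\le -m_1/m_2$, giving $\deg A_0=m$ and $\deg A_j\le m-(m_1/m_2)j$, which is exactly your bound. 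The trade-off is that the paper outsources the geometric input to a citation and pays with a resultant computation, while you avoid resultants but must carry out the Newton--Puiseux argument yourself; your preliminary $\tau$-reduction also slightly simplifies the final case analysis (you have one fewer case than the paper).
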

\begin{proof}
Write $\tau_1(X)=\frac{a_1 X+b_1}{c_1 X+d_1},\,\tau_2(Y)=\frac{a_2 Y+b_2}{c_2 Y+d_2}$ with $a_i,b_i,c_i,d_i\in \Q$ and $a_id_i-b_ic_i\neq 0$.
Denote $\bar{\alpha}=\tau_1(\alpha)^{m_1}$ and $\bar{\beta}=\tau_2(\beta)^{m_2}$.
Let $\bar{Q}\in K[Z_1,Z_2]$ be a nonzero irreducible polynomial such that $\bar{Q}(\bar{\alpha},\bar{\beta})=0$.
Set
\begin{align*}
&H_1=(c_1X+d_1)^{m_1}Z_1-(a_1X+b_1)^{m_1},\\
& H_2=(c_2Y+d_2)^{m_2}Z_2-(a_2Y+b_2)^{m_2},\\
&R_1(Z_1,Y)=\res_X(H_1,Q(X,Y)), R_2(Z_1,Z_2)=\res_Y(H_2,R_1(Z_1,Y)).
\end{align*} As $Q$ does not divide $H_1$, $R_1\neq 0$. Similarly, $R_2\neq 0$. Moreover, one can easily check that $R_2(\bar{\alpha},\bar{\beta})=0$. Hence $\bar{Q}$ divides $R_2$.
By Proposition~\ref{prop:heightproperty3} and Corollary~\ref{cor:resultant}, one has that
\begin{align*}
  \frakh(\bar{Q})\leq \frakh(R_2)&\leq \deg(R_1,Y)\frakh(H_2)+\deg(H_2,Y)\frakh(R_1)\\
  &\leq \deg(H_2,Y)(\deg(H_1,X)\frakh(Q)+\deg(Q,X)\frakh(H_1))\\
  &=\deg(H_2,Y)\deg(H_1,X)\frakh(Q)=m_1m_2\frakh(Q).
\end{align*}
Since $\divs(\bar{\alpha})^{-} \leq \divs(\bar{\beta})^{-}$, $\deg(\bar{Q},X)=\tdeg(\bar{Q})$ by the Proposition 2 in \cite{eremenko}.
If a place $\frakp$ is not a pole of $\bar{\beta}$ then it is not a pole of $\bar{\alpha}$ too. For such places, the lemma follows from Propositions~\ref{prop:specialcase},~\ref{prop:heightproperty1} and Corollary \ref{cor:heightproperty11}.
It remains to show that the case that $\frakp$ is a pole of $\bar{\beta}$. Suppose that $\frakp$ is a pole of $\bar{\beta}$. If $\frakp$ is also a pole of $\alpha$ then $\frakh(\pi_\frakp(\alpha))=0$ and there is nothing to prove. Assume that $\frakp$ is not a pole of $\alpha$.
If $\frakp$ is a pole of $\beta$ then $\pi_\frakp(\alpha)$ is a zero of $\tilde{Q}(X,0)$, where $\tilde{Q}=Y^rQ(X,1/Y)$ and $r$ is the smallest integer such that $\tilde{Q}\in K[X,Y]$, and thus $\frakh(\pi_\frakp(\alpha))\leq \frakh(\tilde{Q})=\frakh(Q)$ and we are done. Now suppose that $\frakp$ is not a pole of $\beta$. Since $\frakp$ is a pole of $\bar{\beta}$, $c_2\pi_\frakp(\beta)+d_2=0$, i.e. $\pi_\frakp(\beta)=-d_2/c_2$. Applying $\pi_\frakp$ to $Q(\alpha,\beta)=0$ yields that $\pi_\frakp(\alpha)$ is a solution of $Q(X,-d_2/c_2)=0$. Write $Q=\sum_{i=0}^\ell A_i(Y)X^i$ where $A_i(Y)=\sum_{j=0}^s a_{i,j}Y^j\in K[Y]$. Note that $A_i(-d_2/c_2)$ viewed as a polynomial in the $a_{i,j}$ is either 0 or homogeneous in the $a_{i,j}$ of degree 1 with coefficients in $\Q$. By Proposition~\ref{prop:heightproperty2},
\begin{align*}
   \frakh(Q(X,-d_2/c_2))&=\frakh((A_0(-d_2/c_2):\dots:A_\ell(-d_2/c_2)))\\
   &\leq \frakh((\dots:a_{i,j}:\cdots))=\frakh(Q).
\end{align*}
Hence $\frakh(\pi_\frakp(\alpha))\leq \frakh(Q(X,-d_2/c_2))\leq \frakh(Q)$ and the lemma holds.
\end{proof}

\begin{lemma}
\label{lm:distinctpoles}
Assume $S$ is a finite set of places of $L$ over $K$ and $\alpha\in L$.  Then there are $c_1,c_2\in \Q$ with $c_2\neq 0$ such that
$$
    \supp\left(\divs\left(\frac{\alpha}{c_1\alpha+c_2}\right)^{-}\right)\cap S =\emptyset.
$$
\end{lemma}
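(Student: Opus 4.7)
The plan is to take $c_1 = 1$ and choose $c_2 \in \Q\setminus\{0\}$ avoiding a certain finite list of forbidden values in $\Q$. The trivial case $\alpha\in K$ is immediate: $\beta = \alpha/(c_1\alpha+c_2)$ is then a constant whenever $c_1\alpha+c_2\neq 0$, and constants have no poles, so e.g.\ $c_1=0,\,c_2=1$ works. Assume therefore $\alpha\notin K$, in which case $\alpha$ is transcendental over the prime field, so $\alpha + c_2 \neq 0$ in $L$ for every $c_2 \in \Q$ and $\beta$ is well-defined.

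To control the poles, I will split the places of $S$ into two types according to $\ord_\frakp(\alpha)$, and compute $\ord_\frakp(\beta) = \ord_\frakp(\alpha) - \ord_\frakp(\alpha + c_2)$ in each case. If $\ord_\frakp(\alpha)<0$, then $\alpha$ and $c_2$ have different orders at $\frakp$, so $\ord_\frakp(\alpha+c_2)=\ord_\frakp(\alpha)$ and therefore $\ord_\frakp(\beta)=0$; in particular $\frakp$ is not a pole of $\beta$, regardless of the choice of $c_2$. If $\ord_\frakp(\alpha)\geq 0$, then $\pi_\frakp(\alpha)\in K$ and $\ord_\frakp(\alpha+c_2)>0$ precisely when $\pi_\frakp(\alpha)+c_2=0$; otherwise $\ord_\frakp(\alpha+c_2)=0$ and hence $\ord_\frakp(\beta)=\ord_\frakp(\alpha)\geq 0$, so again $\frakp$ is not a pole of $\beta$.

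It remains to choose $c_2$. Let
$$
F = \{\,-\pi_\frakp(\alpha) \mid \frakp \in S,\ \ord_\frakp(\alpha)\geq 0\,\}\cap \Q.
$$
This is a finite subset of $\Q$, since $S$ is finite. Because $\Q$ is infinite, I can pick $c_2 \in \Q\setminus(F\cup\{0\})$. By the case analysis above, the resulting $\beta = \alpha/(\alpha+c_2)$ has $\ord_\frakp(\beta)\geq 0$ for every $\frakp \in S$, i.e.\ no place of $S$ lies in $\supp(\divs(\beta)^{-})$. This yields the desired conclusion.

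There is no real obstacle in this argument; the only thing to be careful about is that the value $\pi_\frakp(\alpha)$ may or may not lie in $\Q$, but this only shrinks the forbidden set $F$, so the choice of $c_2$ remains possible.
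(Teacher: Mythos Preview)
Your proof is correct and follows essentially the same approach as the paper: split the places of $S$ according to the sign of $\ord_\frakp(\alpha)$, observe that the case $\ord_\frakp(\alpha)<0$ is automatic, and for the remaining places avoid the finitely many residues $\pi_\frakp(\alpha)$. The only cosmetic difference is that you fix $c_1=1$ and note that only the residues lying in $\Q$ actually constrain $c_2$, whereas the paper leaves $c_1$ free and phrases the avoidance condition over $K$; neither change affects the argument.
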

\begin{proof}
Set
$$
     M=\left\{ \pi_\frakp(\alpha) \mid  \mbox{$\forall\, \frakp\in S$ with $\ord_\frakp(\alpha)\geq 0$} \right\}. $$
Then $M$ is a finite subset of $K$. Let $c_1, c_2\in \Q$ satisfy that $c_2\neq 0$ and  $c_1a+c_2\neq 0$ for all $a\in M$.
For $\frakp\in S$ with $\ord_\frakp(\alpha)\geq 0$, one has that
$$
    \pi_\frakp(c_1\alpha+c_2)=c_1\pi_\frakp(\alpha)+c_2\neq 0,\, {\mathrm i.e.} \,\,\ord_\frakp(c_1\alpha+c_2)=0.
$$
This implies that $\ord_\frakp(\alpha/(c_1\alpha+c_2))=\ord_\frakp(\alpha)\geq 0$ for all $\frakp\in S$ with $\ord_\frakp(\alpha)\geq 0$. On the other hand, for $\frakp\in S$ with $\ord_\frakp(\alpha)<0$, one has that
$$
    \ord_\frakp(\alpha/(c_1\alpha+c_2))=\ord_\frakp(\alpha)-\ord_\frakp(c_1\alpha+c_2)=\ord_\frakp(\alpha)-\ord_\frakp(\alpha)=0.
$$
In either case, $\frakp$ is not a pole of $\alpha/(c_1\alpha+c_2)$. Thus $c_1,c_2$ have the desired property.
\end{proof}
\begin{lemma}
\label{lm:h(D)}Suppose that $P$ is a nonzero irreducible polynomial of total degree $\rho$ in $K[X,Y]$ satisfying $P(x,y)=0$ and $\bfd(X)$ is the discriminant of $P$ with respect to $Y$.  Let $\bar{x}=x/(c_1x+c_2)$ where $c_1,c_2\in \Q$ with $c_2\neq 0$. Then
$$\frakh(\pi_\frakp(x))\leq 2\rho\frakh(P)$$ for each
$\frakp\in \supp(\divs(x)^{-})\cup\supp(\divs(\bar{x})^{-})\cup \supp(\divs(y)^{-})\cup\supp(\divs(\bfd(x))).$
\end{lemma}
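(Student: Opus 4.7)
The strategy is to partition the support into the four listed sets and bound $\frakh(\pi_\frakp(x))$ in each case. Three of the four cases collapse to essentially trivial bounds, and only the discriminant case carries the weight of the $2\rho\frakh(P)$ estimate.

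First, for $\frakp \in \supp(\divs(x)^{-})$, the place $\frakp$ is a pole of $x$, so $\pi_\frakp(x)=\infty$ and $\frakh(\pi_\frakp(x))=0$ by convention. For $\frakp \in \supp(\divs(\bar x)^{-})$, if $c_1=0$ this set equals $\supp(\divs(x)^{-})$ and there is nothing new; if $c_1\neq 0$ and $\frakp$ is not already a pole of $x$, then $\frakp$ must be a zero of the denominator $c_1 x+c_2$, forcing $\pi_\frakp(x)=-c_2/c_1\in\Q$, and by Remark~\ref{rem:constant} the height is $0$.

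For $\frakp\in\supp(\divs(y)^{-})$ not covered by Case 1, write $P=A_0(X)Y^n+\cdots+A_n(X)$ with $n=\deg(P,Y)\le\rho$. Dividing $P(x,y)=0$ by $y^n$ gives $A_0(x)=-\sum_{i=1}^n A_i(x) y^{-i}$; since $\ord_\frakp(x)\ge 0$ each $A_i(x)$ has non-negative order and $\ord_\frakp(y^{-i})>0$, so $\ord_\frakp(A_0(x))>0$. Thus $\pi_\frakp(x)$ is a root of $A_0\in K[X]$, and by Proposition~\ref{prop:heightproperty3} together with the observation that the coefficients of $A_0$ form a sub-tuple of those of $P$, we get $\frakh(\pi_\frakp(x))\le\frakh(A_0)\le\frakh(P)$.

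The last case is $\frakp\in\supp(\divs(\bfd(x)))$. A pole of $\bfd(x)$ is necessarily a pole of $x$ (Case 1), so I may assume $\frakp$ is a zero; then $\pi_\frakp(x)$ is a root of $\bfd(X)\in K[X]$ and Proposition~\ref{prop:heightproperty3} gives $\frakh(\pi_\frakp(x))\le\frakh(\bfd)$. Using the identity $A_0(X)\bfd(X)=\pm\res_Y(P,\partial P/\partial Y)$, $\bfd$ divides $\res_Y(P,P_Y)$, so Proposition~\ref{prop:heightproperty3} and Corollary~\ref{cor:resultant} yield
$$
  \frakh(\bfd)\le \deg(P_Y,Y)\frakh(P)+\deg(P,Y)\frakh(P_Y)\le(n-1)\frakh(P)+n\frakh(P_Y).
$$
The one technical point is $\frakh(P_Y)\le\frakh(P)$: the coefficients of $P_Y$ are obtained from those of $P$ by dropping the monomials with $Y$-degree $0$ and multiplying the remainder by positive integers, and applying Proposition~\ref{prop:heightproperty2} to the coordinate-scaling morphism with those integers (which has height $0$ by Remark~\ref{rem:constant}) shows the height cannot increase. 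Consequently $\frakh(\bfd)\le(2n-1)\frakh(P)\le(2\rho-1)\frakh(P)$. Taking the maximum over all four cases gives the claimed bound $2\rho\frakh(P)$. The only step requiring real care is the $\frakh(P_Y)\le\frakh(P)$ inequality; everything else is a direct application of previously established properties.
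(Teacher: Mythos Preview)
Your proof is correct and follows essentially the same case-by-case approach as the paper. The only cosmetic differences are that in the $\supp(\divs(y)^{-})$ case the paper reaches the conclusion $A_0(\pi_\frakp(x))=0$ via the substitution $\bar P=Y^{r}P(X,1/Y)$ rather than your valuation computation, and in the discriminant case the paper simply asserts $\frakh(\bfd)\le 2\rho\,\frakh(P)$ (relying on the same resultant estimate used earlier in Proposition~\ref{prop:equality}) whereas you spell out the $\frakh(P_Y)\le\frakh(P)$ step explicitly.
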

\begin{proof}
Suppose that $\pi_\frakp(x)=\infty$. Then $\frakh(\pi_\frakp(x))=0$ and the lemma is clear. In the following suppose that $\pi_\frakp(x)\neq \infty$.
Suppose that $\frakp\in \supp(\divs(\bar{x})^{-})$.
Then $\pi_\frakp(c_1x+c_2)=c_1\pi_\frakp(x)+c_2=0$ and so $\pi_\frakp(x)=-c_2/c_1$. Hence $\frakh(\pi_\frakp(x))=0$ and thus the lemma holds. Suppose that $\frakp\in \supp(\divs(y)^{-})$. Then $\pi_\frakp(y)=\infty$ and $(\pi_\frakp(x),0)$ is a zero of $\bar{P}=Y^rP(X,1/Y)$ where $r$ is the smallest integer such that $Y^rP(X,1/Y)\in K[X,Y]$. In other word, $\pi_\frakp(x)$ is a zero of $\bar{P}(X,0)$. Note that $\frakh(\bar{P})=\frakh(P)$. Hence $\frakh(\pi_\frakp(x))\leq \frakh(\bar{P}(X,0))\leq \frakh(P)\leq 2\rho\frakh(P)$.
Finally, suppose that $\supp(\divs(\bfd(x)))\neq \emptyset$, i.e. $\bfd(x)\notin K$ and $\frakp\in \supp(\divs(\bfd(x)))$. If $\frakp$ is a pole of $\bfd(x)$ then it is a pole of $x$ and we are already done. Suppose that $\frakp$ is a zero of $\bfd(x)$. Then $\pi_\frakp(\bfd(x))=0$ which implies that $\bfd(\pi_\frakp(x))=0$. Hence $\frakh(\pi_\frakp(x))\leq \frakh(\bfd(X))\leq 2\rho \frakh(P)$.
\end{proof}
Now we are ready to prove the main result of this paper.
\begin{theorem}
\label{thm:boundpoints}
Let $P$ be an irreducible polynomial in $K[X,Y]$ of degree $m$ with respect to $X$ and of degree $n$ with respect to $Y$.  Suppose that $\rho=\tdeg(P)$ and $0<\epsilon<1$.  Then for every $a,b\in K$ with $P(a,b)=0$, one has that
$$
  (1-\epsilon)n\frakh(b)-C\leq m\frakh(a)\leq (1+\epsilon)n\frakh(b)+C
$$
where
$$
C=75\cdot 2^{13}\cdot (1/\epsilon)^6(\rho+1)^{\frac{40(\rho+1)^9}{\epsilon^3}}\frakh(P).
$$
\end{theorem}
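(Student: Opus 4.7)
By symmetry it suffices to prove the upper bound $m\frakh(a)\le (1+\epsilon)n\frakh(b)+C$. The plan is to follow Eremenko's strategy: work in $L=K(x,y)$ realised by $P(x,y)=0$, let $\frakp_0$ be a place of $L$ whose centre is $(a,b)$ so that $\pi_{\frakp_0}(x)=a$ and $\pi_{\frakp_0}(y)=b$, and construct an auxiliary element $\beta\in L$ via Riemann-Roch for which Lemma~\ref{lm:pointinequality} can be applied twice, first relating $\bar x$ and $\beta$ and then relating $\beta$ and $y$, so that the intermediate height $\frakh(\pi_{\frakp_0}(\beta))$ cancels between the two inequalities.

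First I would arrange that the pole divisors of the two generators have disjoint supports. Applying Lemma~\ref{lm:distinctpoles} with $\alpha=x$ and $S$ containing $\supp(\divs(y)^-)$ (together with the ramification support of Notation~\ref{notation:divisor}) yields $c_1,c_2\in\Q$ with $c_2\neq 0$ such that $\bar x=x/(c_1x+c_2)$ has $\divs(\bar x)^-$ supported off $\divs(y)^-$; Corollary~\ref{cor:heightproperty11}(2) preserves $\frakh(\pi_{\frakp_0}(\bar x))=\frakh(a)$, and $\deg(\divs(\bar x)^-)=n$ since $K(\bar x)=K(x)$. Next I choose positive integers $k\sim\lceil\rho^2/(\epsilon mn)\rceil$ and $N=kn+\lceil\rho^2/(2m)\rceil+1$, tuned so that $N/k\le(1+\epsilon/2)n$ while $\deg(N\divs(y)^--km\divs(\bar x)^-)=m(N-kn)$ exceeds the genus of $L$ (which is bounded by $\rho^2/2$). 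By Riemann-Roch the divisor $D=N\divs(y)^--km\divs(\bar x)^-$ has $\calL_K(D)\neq\{0\}$, and Proposition~\ref{prop:heightsofRiemann-Roch} supplies a nonzero $\beta=g(\bfa)/q(\bar x)\in\calL_K(D)$ with an explicit bound on $\frakh(\bfa)$. The disjointness of the two pole supports then turns the defining inequality $\divs(\beta)\ge km\divs(\bar x)^- - N\divs(y)^-$ into the two separate estimates $\divs(\beta)^-\le N\divs(y)^-$ and $\divs(\beta)^+\ge km\divs(\bar x)^-$.

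Let $Q_1,Q_2$ be the irreducible polynomials with $Q_1(\bar x,\beta)=0$ and $Q_2(y,\beta)=0$, whose heights are controlled by Proposition~\ref{prop:equality}. A first application of Lemma~\ref{lm:pointinequality}, with $\alpha=\bar x$, second argument $\beta$, $\tau_1=\mathrm{id}$, $\tau_2(Z)=1/Z$, $m_1=km$, $m_2=1$, exploits the hypothesis $km\divs(\bar x)^-\le \divs(1/\beta)^-=\divs(\beta)^+$ established above to give $km\frakh(a)\le\frakh(\pi_{\frakp_0}(\beta))+km\frakh(Q_1)$. A second application, with $\alpha=\beta$, second argument $y$, $\tau_1=\tau_2=\mathrm{id}$, $m_1=1$, $m_2=N$, exploits $\divs(\beta)^-\le N\divs(y)^-$ to give $\frakh(\pi_{\frakp_0}(\beta))\le N\frakh(b)+N\frakh(Q_2)$. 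Eliminating $\frakh(\pi_{\frakp_0}(\beta))$ and dividing by $k$ yields
\begin{equation*}
m\frakh(a)\le\tfrac{N}{k}\frakh(b)+\tfrac{N}{k}\frakh(Q_2)+m\frakh(Q_1)\le(1+\tfrac{\epsilon}{2})n\frakh(b)+(1+\tfrac{\epsilon}{2})n\frakh(Q_2)+m\frakh(Q_1).
\end{equation*}

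To finish I substitute the bound $\frakh(Q_i)\le 1600(\rho+\delta_D)^6(\rho+1)^{5(\rho+\delta_D)^3-9}h(D)$ from Proposition~\ref{prop:equality}, combined with $h(D)\le 2\rho\frakh(P)$ from Lemma~\ref{lm:h(D)} and $\delta_D=Nm+kmn=O(\rho^3/\epsilon)$ coming from the above choice of $k,N$; arithmetic simplification (and absorbing the gap between $1+\epsilon/2$ and $1+\epsilon$ into $C$) then produces the advertised constant $C=75\cdot 2^{13}(1/\epsilon)^6(\rho+1)^{40(\rho+1)^9/\epsilon^3}\frakh(P)$. The main obstacle is the simultaneous enforcement of $\divs(\beta)^-\le N\divs(y)^-$ and $\divs(\beta)^+\ge km\divs(\bar x)^-$ on the same $\beta$: this is what forces both the pole-disjointness reduction via Lemma~\ref{lm:distinctpoles} and a delicate balancing of the multiplicities $k,N$ ($k$ must be large enough to bring $N/k$ within $(1+\epsilon/2)n$, while $N$ must be large enough to keep $\calL_K(D)$ nonzero), and the iteration $\delta_D\mapsto(\rho+\delta_D)^3$ inside Proposition~\ref{prop:equality} is what generates the triple-exponential dependence on $\rho$ and $1/\epsilon$ in the final $C$.
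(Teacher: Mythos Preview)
Your approach is essentially the paper's: the same M\"obius-disjointing of pole supports via Lemma~\ref{lm:distinctpoles}, the same balanced divisor $D$ in $\divs(y)^-$ and $\divs(\bar x)^-$ whose Riemann--Roch space is made explicit by Propositions~\ref{prop:heightsofRiemann-Roch} and~\ref{prop:equality}, and the same double application of Lemma~\ref{lm:pointinequality}; the paper obtains the lower bound from a second divisor $\tilde D=\lambda_2 m\,\divs(\bar x)^{-}-(2\lambda_2-\lambda_1)n\,\divs(y)^{-}$ rather than by the $X\leftrightarrow Y$ symmetry you invoke, but your reduction is valid since $C$ depends only on $\rho,\epsilon,\frakh(P)$. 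Two small slips to correct: the element produced in Section~\ref{sec:Riemann-Roch} is $g(\bfa)/q(x)$ (not $q(\bar x)$), and Proposition~\ref{prop:equality} bounds $\frakh(Q_1)$ for $Q_1$ satisfying $Q_1(x,\beta)=0$ rather than $Q_1(\bar x,\beta)=0$, so in your first use of Lemma~\ref{lm:pointinequality} you should take $\alpha=x$ with $\tau_1(X)=X/(c_1X+c_2)$ (as the paper does) instead of $\alpha=\bar x$ with $\tau_1=\mathrm{id}$.
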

\begin{proof}
Let $L$ be the field of fractions of $K[X,Y]/(P)$. Then $L$ is an algebraic function field of one variable over $K$. Set $x=X+(P)$ and $y=Y+(P)$. Then $P(x,y)=0$. Choose $c_1,c_2\in \Q$ with $c_2\neq 0$ such that
$$\supp(\divs(x/(c_1x+c_2))^{-})\cap \supp(\divs(y)^{-})=\emptyset.$$
Such $c_1,c_2$ exist because of Lemma~\ref{lm:distinctpoles}. Let $\lambda_2$ be the smallest integer not less than $\frac{\rho}{2\epsilon}$ and let $\lambda_1$ be the largest integer not greater than $\lambda_2+\rho/2$. Then $\frac{\rho}{2\epsilon}\leq \lambda_2<\frac{\rho}{2\epsilon}+1$ and $\lambda_2+\rho/2-1<\lambda_1\leq \lambda_2+\rho/2$. These imply that
\begin{equation}
\label{eq:inequality3}\lambda_1-\lambda_2\geq \frac{\rho}{2}-1, \,\, \frac{\lambda_1}{\lambda_2}\leq 1+\epsilon,\,\,\lambda_1+\lambda_2\leq \frac{2(\rho+1)}{\epsilon}.
\end{equation}
Set $\bar{x}=x/(c_1x+c_2)$ and
$$
   D=\lambda_1 n\divs(y)^{-}-\lambda_2 m\divs(\bar{x})^{-}.
$$
Note that $\deg(\divs(y)^{-})=[L:K(y)]=m$ and $\deg(\divs(\bar{x})^{-})=[L:K(\bar{x})]=n$. One sees that
$$\deg(D)=\lambda_1 nm-\lambda_2 nm=(\lambda_1-\lambda_2) nm.$$
As $nm\geq n+m-1\geq \rho-1$ and $\lambda_1-\lambda_2\geq \rho/2-1$, $\deg(D)\geq (\rho-1)(\rho-2)/2$ is not less than the genus of $P(X,Y)=0$. Consequently, $\calL_K(D)\neq \{0\}$. Let $\delta_D, h(D)$ be as in Notation~\ref{notation:divisor}. Then $h(D)\leq 2\rho\frakh(P)$ by Lemma~\ref{lm:h(D)} and
by (\ref{eq:inequality3}), $\delta_D=(\lambda_1+\lambda_2)nm< 2(1+\rho)\rho^2/\epsilon$.

Due to Proposition~\ref{prop:heightsofRiemann-Roch}, $\calL_K(D)$ contains a nonzero element $z=g(\bfa)/q(x)$, i.e. $\divs(z)+D\geq 0$, where $g(\bfa)$ is of the form (\ref{eq:riemann-rochelement}).
As $\supp(\divs(\bar{x})^{-})\cap \supp(\divs(y)^{-})=\emptyset$, one sees that
$$\divs(z)^{-}\leq \lambda_1 n\divs(y)^{-}, \,\, \lambda_2 m\divs(\bar{x})^{-}\leq \divs(z)^{+}=\divs(1/z)^{-}.$$ Suppose that $Q_1\in K[X,Z],Q_2\in K[Y,Z]$ are nonzero irreducible polynomials such that $Q_1(x,z)=0$ and $Q_2(y,z)=0$. By Proposition~\ref{prop:equality}, one has that
$$\frakh(Q_1), \frakh(Q_2)\leq 1600(\rho+\delta_D)^6(\rho+1)^{5(\rho+\delta_D)^3-9}h(D)\triangleq T.$$
Let $\frakp$ be a place of $L$ over $K$ such that $\pi_\frakp(x)=a$ and $\pi_\frakp(y)=b$. By Lemma~\ref{lm:pointinequality},
\begin{align*}
  \frakh(\pi_\frakp(z))&\leq \lambda_1 n\frakh(b)+\lambda_1 n\frakh(Q_2),\\
  \lambda_2 m\frakh(a)&\leq \frakh(\pi_\frakp(z))+\lambda_2 m \frakh(Q_1).
\end{align*}
The above two inequalities imply that
$$
    \lambda_2 m\frakh(a)\leq \lambda_1 n\frakh(b)+\lambda_1 n\frakh(Q_2)+\lambda_2 m \frakh(Q_1).
$$
In other words, $m\frakh(a)\leq (\lambda_1/\lambda_2)n\frakh(b)+(m+n\lambda_1/\lambda_2)T.$ Note that by (\ref{eq:inequality3}) $\lambda_1/\lambda_2\leq 1+\epsilon\leq 2$. One has that
\begin{equation}
\label{eq:righthandside}
   m\frakh(a)\leq (1+\epsilon)n\frakh(b)+3\rho T.
\end{equation}
Note that $\rho+\delta_D\leq \rho+2(\rho+1)\rho^2/\epsilon<2(\rho+1)^3/\epsilon$. One sees that
\begin{align*}
 3\rho T&= 3\rho\times 1600(\rho+\delta_D)^6(\rho+1)^{5(\rho+\delta_D)^3-9}h(D)\\
 &\leq  9600\rho^2(2(\rho+1)^3/\epsilon)^6(\rho+1)^{5(\rho+2(\rho+1)\rho^2/\epsilon)^3-9}\frakh(P)\\
 &\leq 9600\cdot 2^6\cdot (1/\epsilon)^6(\rho+1)^{5(\rho+2(\rho+1)\rho^2/\epsilon)^3+11}\frakh(P)\\
 &\leq 75\cdot 2^{13}\cdot (1/\epsilon)^6(\rho+1)^{\frac{40(\rho+1)^9}{\epsilon^3}}\frakh(P) \triangleq C.
\end{align*}
The last inequality holds because
$$5(\rho+2(\rho+1)\rho^2/\epsilon)^3+11<5(\rho+2(\rho+1)\rho^2/\epsilon+2)^3<40(\rho+1)^9(1/\epsilon)^3.$$
Set $\tilde{D}=\lambda_2 m \divs(\bar{x})^{-}-(2\lambda_2-\lambda_1)n\divs(y)^{-}$. Then
\begin{align*}
   \lambda_2-(2\lambda_2-\lambda_1)&=\lambda_1-\lambda_2\geq \frac{\rho}{2}-1, \\
   \frac{2\lambda_2-\lambda_1}{\lambda_2}&=2-\frac{\lambda_1}{\lambda_2}\geq 1-\epsilon,\\
   \lambda_2+2\lambda_2-\lambda_1&=3\lambda_2-\lambda_1\leq 2\lambda_2-\frac{\rho}{2}+1<\frac{2(\rho+1)}{\epsilon}.
\end{align*}
Using a similar argument, one has that
\begin{equation}
\label{eq:lefthandside}
   (1-\epsilon)n\frakh(b)\leq m\frakh(a)+C.
\end{equation}
Combining (\ref{eq:lefthandside}) with (\ref{eq:righthandside}) yields the conclusion.
\end{proof}

\bibliographystyle{plain}
\bibliography{symsolutions}

\begin{thebibliography}{10}

\bibitem{Abelard-Couvreur-Lecerf}
Simon Abelard, Alain Couvreur, and Gr\'{e}goire Lecerf.
\newblock Sub-quadratic time for {R}iemann-{R}och spaces: case of smooth
  divisors over nodal plane projective curves.
\newblock In {\em I{SSAC}'20---{P}roceedings of the 45th {I}nternational
  {S}ymposium on {S}ymbolic and {A}lgebraic {C}omputation}, pages 14--21. ACM,
  New York, 2020.

\bibitem{Abouzaid}
Mourad Abouzaid.
\newblock Heights and logarithmic gcd on algebraic curves.
\newblock {\em Int. J. Number Theory}, 4(2):177--197, 2008.

\bibitem{Bartolome}
Boris Bartolome.
\newblock The {S}kolem-{A}bouza\"{\i}d theorem in the singular case.
\newblock {\em Atti Accad. Naz. Lincei Rend. Lincei Mat. Appl.},
  26(3):263--289, 2015.

\bibitem{Bombieri}
Enrico Bombieri.
\newblock On {W}eil's ``th\'{e}or\`eme de d\'{e}composition''.
\newblock {\em Amer. J. Math.}, 105(2):295--308, 1983.

\bibitem{chevalley}
Claude Chevalley.
\newblock {\em Introduction to the theory of algebraic functions of one
  variable}.
\newblock Mathematical Surveys, No. VI. American Mathematical Society,
  Providence, R.I., 1963.

\bibitem{Coates}
J.~Coates.
\newblock Construction of rational functions on a curve.
\newblock {\em Proc. Cambridge Philos. Soc.}, 68:105--123, 1970.

\bibitem{Davenport}
James~Harold Davenport.
\newblock {\em On the integration of algebraic functions}, volume 102 of {\em
  Lecture Notes in Computer Science}.
\newblock Springer-Verlag, Berlin-New York, 1981.

\bibitem{eremenko}
A.~Eremenko.
\newblock Rational solutions of first-order differential equations.
\newblock {\em Ann. Acad. Sci. Fenn. Math.}, 23(1):181--190, 1998.

\bibitem{Habegger-thesis}
Philipp Habegger.
\newblock {\em Heights and multiplicative relations on algebraic varieties}.
\newblock 2007.
\newblock Thesis (Ph.D.)--University of Basel.

\bibitem{Habegger-paper}
Philipp Habegger.
\newblock Quasi-equivalence of heights and {R}unge's theorem.
\newblock In {\em Number theory---{D}iophantine problems, uniform distribution
  and applications}, pages 257--280. Springer, Cham, 2017.

\bibitem{Hess}
F.~Hess.
\newblock Computing {R}iemann-{R}och spaces in algebraic function fields and
  related topics.
\newblock {\em J. Symbolic Comput.}, 33(4):425--445, 2002.

\bibitem{Huang-Ierardi}
Ming-Deh Huang and Doug Ierardi.
\newblock Efficient algorithms for the {R}iemann-{R}och problem and for
  addition in the {J}acobian of a curve.
\newblock {\em J. Symbolic Comput.}, 18(6):519--539, 1994.

\bibitem{lang}
Serge Lang.
\newblock {\em Fundamentals of {D}iophantine geometry}.
\newblock Springer-Verlag, New York, 1983.

\bibitem{LeGluher-Spaenlehauer}
Aude Le~Gluher and Pierre-Jean Spaenlehauer.
\newblock A fast randomized geometric algorithm for computing {R}iemann-{R}och
  spaces.
\newblock {\em Math. Comp.}, 89(325):2399--2433, 2020.

\bibitem{Neron}
A.~N\'{e}ron.
\newblock Quasi-fonctions et hauteurs sur les vari\'{e}t\'{e}s ab\'{e}liennes.
\newblock {\em Ann. of Math. (2)}, 82:249--331, 1965.

\bibitem{Schmidt:Constructionandestimationofbasesinfunctionfields}
Wolfgang~M. Schmidt.
\newblock Construction and estimation of bases in function fields.
\newblock {\em J. Number Theory}, 39(2):181--224, 1991.

\bibitem{Serre}
Jean~Pierre Serre.
\newblock Lectures on the mordell-weil theorem.
\newblock {\em Aspects of Mathematics}, 15, 1997.

\bibitem{Volcheck}
Emil~J. Volcheck.
\newblock Computing in the {J}acobian of a plane algebraic curve.
\newblock In {\em Algorithmic number theory ({I}thaca, {NY}, 1994)}, volume 877
  of {\em Lecture Notes in Comput. Sci.}, pages 221--233. Springer, Berlin,
  1994.

\bibitem{walker}
Robert~J. Walker.
\newblock {\em Algebraic Curves~}.
\newblock Princeton University Press, 1950.

\end{thebibliography}
\end{document}